\documentclass{amsart}

\usepackage[babel,threshold=2]{csquotes}
\usepackage{amsfonts}
\usepackage{amssymb}
\usepackage{enumerate}
\usepackage{amsmath}
\usepackage{amsthm}
\usepackage{graphicx}
\usepackage[dvipsnames]{xcolor}
\usepackage[english]{babel}
\usepackage[doi=false,backend=biber,style=numeric,sorting=none]{biblatex}
\usepackage{comment}
\usepackage{hyperref}

\numberwithin{equation}{section}

\newtheorem{theorem}{Theorem}[section]
\newtheorem{lemma}[theorem]{Lemma}
\newtheorem{claim}[theorem]{Claim}

\newtheorem{problem}[theorem]{Problem}

\newtheorem{corollary}[theorem]{Corollary}

\newtheorem{fact}[theorem]{Fact}
\newtheorem*{BHCS}{Theorem~\ref{Th:BHCS}}
\newtheorem*{odd}{Theorem~\ref{t:odd}}
\newtheorem*{order}{Theorem~\ref{t:order}}
\newtheorem*{I0I}{Claim~\ref{cl:I0I}}
\newtheorem*{T:equiv}{Theorem~\ref{t:equiv}}
\newtheorem*{C:prec}{Claim~\ref{cl:prec}}

\theoremstyle{definition}
\newtheorem{example}[theorem]{Example}
\newtheorem{remark}[theorem]{Remark}
\newtheorem{definition}[theorem]{Definition}
\newtheorem{notation}[theorem]{Notation}

\DeclareMathOperator{\Tr}{Tr}
\DeclareMathOperator{\Sp}{Sp}
\DeclareMathOperator{\PG}{PG}

\newcommand{\NN}{\mathbb{N}}

\newcommand{\RR}{\mathbb{R}}

\newcommand{\CC}{\mathbb{C}}

\newcommand{\iG}{\mathcal{G}}
\newcommand{\iH}{\mathcal{H}}
\newcommand{\iI}{\mathcal{I}}

\newcommand{\iL}{\mathcal{L}}

\newcommand{\iF}{\mathcal{F}}
\newcommand{\iP}{\mathcal{P}}
\newcommand{\iS}{\mathcal{S}}

\newcommand{\iZ}{\mathcal{Z}}

\newcommand*{\defeq}{\stackrel{\text{def}}{=}}

\overfullrule10pt

\addbibresource{positivity.bib}

\begin{document}
	
\title[Positivity and entanglement of polynomial Gaussian operators]{Positivity and entanglement of polynomial Gaussian integral operators}
\author{Rich\'ard Balka} 
\address{HUN-REN Alfr\'ed R\'enyi Institute of Mathematics, Re\'altanoda u.~13--15, H-1053 Budapest, Hungary, AND Institute of Mathematics and Informatics, Eszterh\'azy K\'aroly Catholic University, Le\'anyka u.~4, H-3300 Eger, Hungary}
\email{balkaricsi@gmail.com}

\author{Andr\'as Csord\'as}
\address{Department of Physics of Complex Systems, E\"{o}tv\"{o}s Lor\'and University, P\'azm\'any P\'eter s\'et\'any 1/A, H-1117 Budapest, Hungary}
\email{csordas@tristan.elte.hu}

\author{G\'abor Homa}
\address{HUN-REN Wigner Research Centre for Physics, P.~O.~Box 49, H-1525 Budapest, Hungary}
\email{homa.gabor@wigner.hun-ren.hu}

\subjclass[2020]{81Q10, 47G10, 47B65, 81P42, 81S10, 81S22}

\keywords{entanglement, separability, positive semidefinite operators, integral operators, Hilbert--Schmidt operators, Gaussian operators, polynomial Gaussian, density operators, quantum theory, open quantum systems, master equations, phase space, position representation, symplectic transformation}

\begin{abstract} Positivity preservation is an important issue in the dynamics of open quantum systems: positivity violations always mark the border of validity of the model. We investigate the positivity of self-adjoint \emph{polynomial Gaussian} integral operators $\widehat{\kappa}_{\PG}$, that is, the multivariable kernel $\kappa_{\PG}$ is a product of a polynomial $P$ and a Gaussian kernel $\kappa_G$. These operators frequently appear in open quantum systems. 
 
We show that $\widehat{\kappa}_{\PG}$ can be only positive if the Gaussian part is positive, which yields a strong and quite easy test for positivity. This has an important corollary for the bipartite entanglement of the density operators $\widehat{\kappa}_{\PG}$: if the Gaussian density operator $\widehat{\kappa}_G$ fails the Peres--Horodecki criterion, then the corresponding polynomial Gaussian density operators $\widehat{\kappa}_{\PG}$ also fail the criterion for all $P$, hence they are all entangled. 

We prove that polynomial Gaussian operators with polynomials of odd degree cannot be positive semidefinite.

We introduce a new preorder $\preceq$ on Gaussian kernels such that if $\kappa_{G_0}\preceq \kappa_{G_1}$ then $\widehat{\kappa}_{\PG_0}\geq 0$ implies $\widehat{\kappa}_{\PG_1}\geq 0$ for all polynomials $P$. Therefore, deciding the positivity of a polynomial Gaussian operator determines the positivity of a lot of another polynomial Gaussian operators having the same polynomial factor, which might improve any given positivity test by carrying it out on a much larger set of operators. We will show an example that this really can make positivity tests much more sensitive and efficient. This preorder has implication for the entanglement problem, too.

\end{abstract}

\maketitle

\section{Introduction}

\subsection{History}

The concept of self-adjoint integral operators has been in the center of mathematical and physical research over the last century.
Fredholm, Hilbert, Mercer, and Schmidt were the first ones who established the cornerstones of this area \cite{Fredholm1903,Hilbert1904,Mercer,schmidt1905entwickelung,Schmidt1907,Stewart}, which is still an active field of research, see the monographs \cite{Arvind1995,Folland+1989,Gosson_Harmonic,deGosson_book,Gosson_trace_class,Gosson}. The foundation of quantum mechanics in the $1920$s brought even more attention to the notions of integral operators and infinite dimensional Hilbert spaces. One of the most important mathematical objects of quantum information theory and quantum physics is the density operator, which is a positive semidefinite, self-adjoint, trace-class operator with trace one \cite{Neumann}.

Technically, density operators acting on infinite dimensional Hilbert spaces can be represented by $L^2$ kernels in position representation, or equivalently by Wigner functions in phase space, or with characteristic functions in another phase space \cite{Serafini,nielsen_chuang_2010}. Phase space methods were first applied by Weyl, Wigner, Husimi, and Moyal \cite{Weyl,Wigner,Husimi,Moyal}, and they found many applications in mathematics, quantum chemistry, statistical mechanics, quantum optics and quantum information theory \cite{Folland+1989,Hillery, Lee, Schleich, Weinbub,Salazar,Adesso}.  

Despite the large literature on integral operators, it is still a very hard problem to determine the spectrum, and even the positivity of operators in concrete cases. The positivity check of different models of quantum mechanics has already started in the $1960$s with the so-called KLM conditions \cite{Kastler, Loupias1, Loupias2}, and later further studies of trace-class operators were carried out \cite{Gosson_trace_class,Narkowich1, Narkowich2, Werner, Luef}. It is still an active area of research to check the positivity of an operator without actually calculating the full spectrum \cite{Gosson,Newton}. The dynamics of the density operator of a subsystem are usually given by master equations, where evolution in time is governed by partial differential equations. Usually, a quantum-mechanical system interacts with external quantum systems and these interactions significantly change the dynamics of the examined subsystem, causing quantum dissipation and decoherence \cite{book1}. A good example for this phenomenon is the quantum Brownian motion \cite{HPZ,H-Y}, where the master equations are derived from first principles and they may lead to positivity violations of the density operator at a later time depending on the external parameters of the model \cite{Gnutzmann,BLH,HCSB,HBCSCS}.

At the beginning of modern quantum mechanics a new phenomenon, the quantum entanglement and its complementary notion, the separability emerged \cite{EPR,Bell}, and they still pose serious challenges to both mathematicians and theoretical physicists \cite{Open,Zyczkowski}. In two breakthrough papers, Peres \cite{Perescrit} and the Horodecki family \cite{HORODECKI19961} independently found a necessary condition for bipartite separability by checking positivity after the so-called partial transpose operation, which provides a very important link between entanglement and positivity. Another important result is due to Werner and Wolf, who proved (based on a previous work of Simon) that the Peres--Horodecki criterion is a necessary and sufficient condition of separability for all bipartite $1$ \textit{versus} $n$ modes Gaussian quantum states \cite{WernerWolf,SimonR,Duan,Lami}. Determining entanglement of non-Gaussian density operators in infinite dimensional Hilbert spaces is an extremely difficult task \cite{Pirandola,Vogel,Miki,Volume}. Another approach to the entanglement problem is using so-called witness operators given by G\"uhne and T\'oth \cite{GUHNE}. The situation becomes even more complicated if we consider the time evolution of the entanglement of composite quantum systems \cite{nielsen_chuang_2010,kina_entanglement}.

\subsection{Motivation and physical applications} \label{ss:motivation}

Gaussian kernels occur quite frequently in physics, for example coupled harmonic oscillators in thermal distribution can be described by Gaussian operators of the form 
\begin{equation*} \widehat{\rho}=\frac{\exp{\big(-\beta \widehat{H}_{\text{osc}}\big)}}{\Tr\big[\exp{\big(-\beta \widehat{H}_{\text{osc}}\big)}\big]}, 
\end{equation*}
if the Hamiltonian operator $\widehat{H}_{\text{osc}}$is at most quadratic in position and momentum operators \cite{Plenio_2004}. The positivity of a Gaussian kernel is fully determined by its covariance matrix (see \cite{Serafini}), while the behaviour of non-Gaussian kernels is more subtle. Our main goal is to consider more general \emph{polynomial Gaussian} integral operators, that is, operators whose kernels can be written as a polynomial multiple of a Gaussian, see \eqref{eq:polynomial Gaussian_kernel_def} for the precise definition. Contrary to the Gaussian case, for polynomial Gaussian operators no finite positivity test is known. 

In recent years, the development of quantum technology  pointed out that working with non-Gaussian quantum states is essential in applied quantum informatics, see \cite{Wu,Yanagimoto} and the references therein. Specifically, polynomial Gaussian forms appear naturally in physics: papers  use this form at the intersection of various areas such as theoretical  and experimental atomic, molecular and quantum optical physics \cite{Lukas,Szabo,meng2023entangled,Roux_polygaussian,Lvovsky,Walschaers}, quantum optomechanics  \cite{Lukas,Lvovsky,Walschaers}, cavity quantum electrodynamics systems \cite{Walschaers,Elliott}, physics of nanomechanical resonators \cite{Remus} and their quantum technological applications \cite{Lukas,meng2023entangled,Lvovsky,Walschaers}. Another appearance of polynomial Gaussian forms is excited states of coupled oscillators in position representation: polynomial Gaussian operators can be obtained from those excited states by mixing some of them with positive weights \cite{kina_entanglement,Remus,Dattoli1996,PhysRevA.54.5378}. 
Studying the entanglement of the above composite system with the Peres--Horodecki criterion translates to checking the positivity of a polynomial Gaussian operator. Our strategy is to prove results about the positivity of polynomial Gaussian operators, which can be translated to the language of entanglement by the Peres--Horodecki criterion. The entanglement problems have already found several applications in various disciplines e.\,g.~quantum chemistry \cite{Ding2021}, quantum information theory \cite{Coladangelo2020,Infinite}, quantum optics \cite{meng2023entangled}, quantum communication \cite{Piveteau2022,Bodri,Koniorczyk2024}  quantum computers \cite{Fauseweh2024}, and quantum gravity \cite{Fahn_2023, hsiang2024graviton}. A recent mathematical breakthrough \cite{MIPRE,MIPRE2} in the area of quantum computational complexity provides connection to the areas of quantum information, operator algebras, and approximate representation theory; in particular it solves Tsirelson's problem \cite{Tsirelson_Bell} from quantum information theory, and the equivalent Connes' embedding problem \cite{Connes} from the theory of von Neumann algebras. For the above problems see also \cite{Junge}, where it is stated that ``finite dimensional quantum models do not suffice to describe all bipartite correlations'', emphasizing the role of operators acting on infinite dimensional Hilbert spaces. This wide applicability and connections between different areas served as a source of inspiration for us, too. 

\subsubsection{Positivity violations of the density operator in the Caldeira--Leggett master equation}
\label{subsec:Positivity}
The Caldeira--Leggett master equation \cite{BLH} for a quantum harmonic oscillator with  
frequency $\omega$ and $\hbar=m=k_B=1$ is 
\begin{eqnarray*}
&&i \frac{\partial}{\partial t} \rho(x,y,t)=\Bigg[ \frac{1 }{2}\left(\frac{\partial^2}{\partial y^2}-
\frac{\partial^2}{\partial x^2} \right) +\frac{ \omega^2}{2} \left(x^2-y^2\right) 
-i D_{pp} (x-y)^2 
\\
&& -i \gamma (x-y) \left(\frac{\partial}{\partial x}-
\frac{\partial}{\partial y} \right) -2 D_{px} (x-y) \left(\frac{\partial}{\partial x}+
\frac{\partial}{\partial y} \right) \Bigg]  \rho(x,y,t), 
\end{eqnarray*}
where $\gamma$ denotes the coupling constant, and $D_{pp}$ and $D_{px}$ stand for the momentum diffusion coefficient and the cross diffusion coefficient, respectively. This master equation describes the time evolution of the density operator of the central harmonic oscillator, where an environment of harmonic oscillators in thermal equilibrium with a sufficiently high  temperature $T$ is considered with Ohmic spectral density and a high frequency cut-off $\Omega$. If the temperature is not high enough, in certain time intervals the solution of the equation is not physical, that is, the kernel $\rho$ is not positive semidefinite in those intervals. From the physical point of view the solution loses its validity at the very first time when $\widehat{\rho}$ is not positive semidefinite.

In \cite{BLH} different initial eigenstates of the quantum harmonic oscillator ($n=0,1,2$) are considered, namely
\begin{equation} \label{hos}
\rho_n(R,r,0)= 
\sqrt {{\frac {{
\beta}^{2}}{\pi }}}\frac {H_n \left(\beta R+\frac{\beta r}{2} \right) H_n\left(\beta R-\frac{\beta r}{2}\right)}{{2}^{n}n!} e^{-\frac{1}{2}\,{\beta}^{2}
\left( 2\,{R}^{2}+\frac{1}{2}\,{r}^{2} \right) },
\end{equation}
where we parametrized by the center of mass and relative coordinates 
$R=\frac{x+y}{2}$ and $r=x-y$, respectively. Here $H_n$ denotes the $n$th Hermite polynomial and the parameter $\beta$ is the inverse width, that is, the reciprocal of the width $x_0=\sqrt{\hbar / (m \omega)}$ corresponding to the quantum harmonic oscillator's ground state. 

In all three cases the solution of the master equation is a polynomial Gaussian operator of degree $2n$ at any finite time $t$.   
The time evolution of two important necessary conditions for the positivity of the operator are monitored in \cite{BLH}: at each time we need $\Tr{\hat{\rho}^2} \leq 1$, and also the Robertson--Schr\"odinger uncertainty relation (for its definition see \cite{BLH} and the references therein).
\begin{equation} \label{SRI}
\sigma_{RS} \geqslant \frac{1}{4}\left|\langle \hat{x}  \hat{p}-\hat{p}  \hat{x} \rangle \right|^2, 
\end{equation}
where 
\begin{equation*}
\sigma_{RS}=\Delta \hat{x} ^2 \Delta \hat{p} ^2 -\left( \frac{\langle  \hat{x}  \hat{p}+\hat{p}  \hat{x} \rangle}{2}-\langle  \hat{x} \rangle\langle   \hat{p}\rangle\right)^2.
\end{equation*}

It has been found in \cite{BLH,Erratum2019} that in all the studied cases the positivity violations in the Gaussian case ($n=0$) happened in a larger time set than in the polynomial Gaussian cases $(n=1,2)$ for both tests; note that the three kernels have the same Gaussian part at any finite time. For the time evolution of the Robertson--Schr\"odinger uncertainty relation \eqref{SRI} see Fig.~\ref{Fig: Lisztes}, more details can be found in \cite{BLH,Erratum2019}. The figure suggests that a non-positive semidefinite Gaussian kernel multiplied by a polynomial of high enough (even) degree might result in a positive semidefinite operator. We prove in Theorem~\ref{Th:BHCS} that this can never happen, and hence we obtain a strong and fairly easy positivity test as well.  

\subsubsection{Entanglement between two oscillators in the presence of heat bath}
A new toy model \cite{kina_entanglement} to study is given by two oscillators coupled to each other harmonically and interacting with the harmonic oscillator bath. Here a possible initial state can be a polynomial Gaussian operator, which remains polynomial Gaussian during the time evolution. This paper gives tools to handle the entanglement problem of this model. In Subsection~\ref{ss:entanglement} we derive the useful sufficient condition Corollary~\ref{c:NPT}, which often allows us to consider only the Gaussian part of the operator. Also, our preorder defined in Subsection~\ref{ss:preorder} makes testing positivity more sensitive and efficient, see the example at the end of that subsection. Generalizations for more than two harmonically coupled harmonic oscillators or two harmonically coupled three-dimensional harmonic oscillators are straightforward.

\subsubsection{Positivity checks and entanglement detection after applying quantum operations} In quantum information the operations are the basic tools to
modify the system under study. Our results for the positivity check can be applied to the modified operators as well, provided the quantum channel maps the polynomial Gaussian density operator to another polynomial Gaussian operator. Simplest examples are partial trace or partial transpose operations used for entanglement detection. Some other channels, defined by linear transformations in the phase space keep the polynomial Gaussian form, including symplectic transformations. A thorough discussion of such operations can be found in the book \cite{Serafini}.

\begin{figure}
 \begin{center}
  \includegraphics[width=\textwidth]{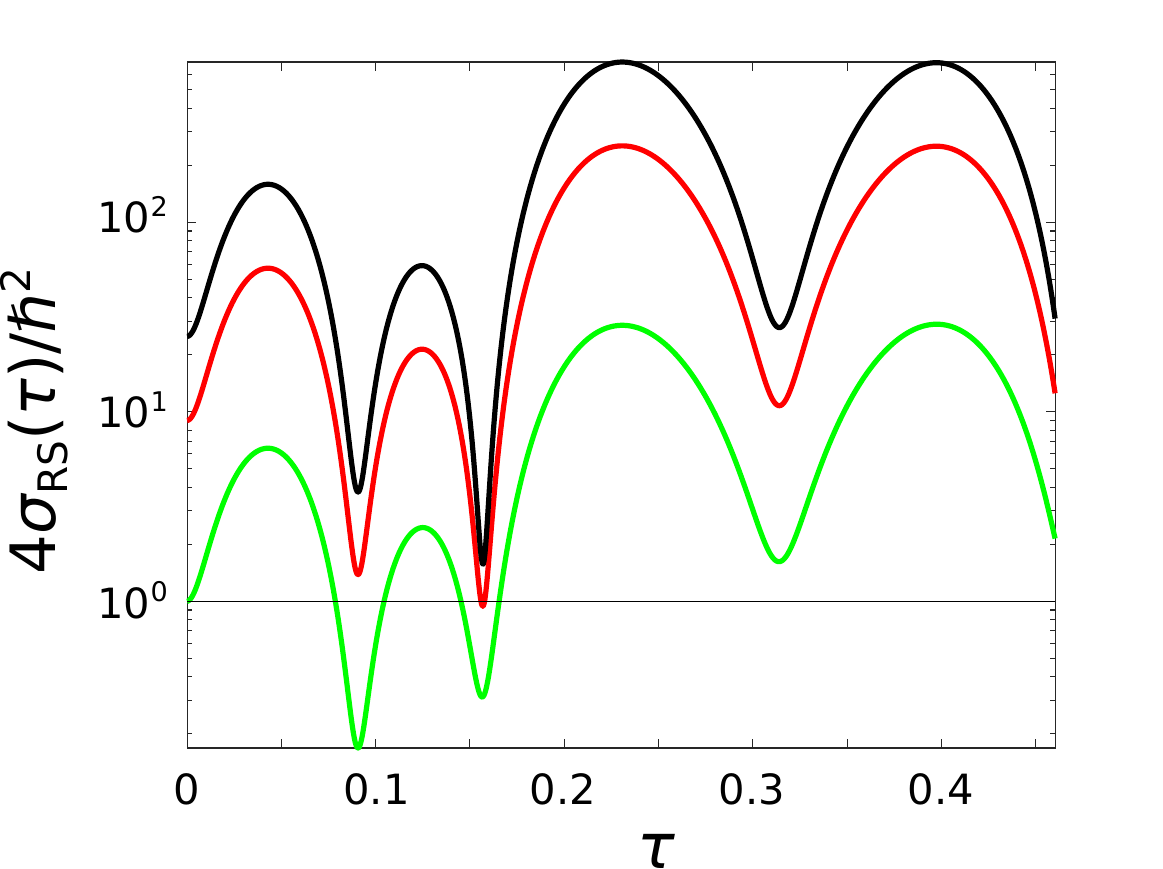}
  \end{center}
  \caption{Plot of $4\sigma_{RS}(\tau)/\hbar^2$ as a function of a dimensionless time parameter $\tau=\omega t$ on a semi-logarithmic scale coming from the Robertson--Schr\"odinger uncertainty relation \eqref{SRI}. The parameters and the initial states are the same as in Figure\,1(b) of \cite{Erratum2019}, and also $\hbar$ is reinstated. The three different curves are plotted for the initial states given in \eqref{hos}: $n=0$ (green); $n=1$ (red); and $n=2$ (black). If a curve goes below $1$, positivity is violated.}
  \label{Fig: Lisztes}
 \end{figure}

\subsection{Theoretical background}

We denote by $L^2\left(\mathbb{R}^{n}\right)$ the Hilbert space of the complex-valued square integrable functions defined on $\RR^n$ with the scalar product\footnote{In physics the usual convention is $\langle f,g \rangle=\int_{\RR^n} f^{*}(x)g(x) \, \mathrm{d} x$.} $\langle f,g \rangle=\int_{\RR^n} f(x)g^{*}(x) \, \mathrm{d} x$, where $z^{*}$ denotes the complex conjugate of $z$. A kernel $\kappa \in L^2(\mathbb{R}^{2n})$ defines an integral operator $\widehat{\kappa}\colon L^2\left(\mathbb{R}^{n}\right)\to L^2\left(\mathbb{R}^{n}\right)$ by the formula
\begin{equation*}
 \left ( \widehat{\kappa} f \right) (x) = \int_{\mathbb{R}^{n}}\, \kappa(x,y) f(y)\, \mathrm{d}y, 
\end{equation*}
see e.\,g.~\cite{deGosson_book,Stein}. We call $\widehat{\kappa}$ \emph{self-adjoint} if $\widehat{\kappa}=\widehat{\kappa}^\dagger$. For continuous kernels $\kappa$ this is equivalent to the property that $\kappa(y,x)=\kappa^*(x,y)$ for all $x,y\in \RR^n$. We say that $\widehat{\kappa}$ is \emph{positive semidefinite} if $\langle \widehat{\kappa} f, f\rangle \geq 0$ for all $f\in L^2(\RR^n)$; we also use the notation $\widehat{\kappa}\geq 0$. If $\widehat{\kappa}$ is positive semidefinite, it is necessarily self-adjoint. As $\widehat{\kappa}$ is a \emph{Hilbert--Schmidt} operator, it is \emph{compact} \cite[Proposition 9.12]{deGosson_book}, see also \cite{deGosson_book} for the definitions. From now on assume that $\widehat{\kappa}$ is self-adjoint. As $\widehat{\kappa}$ is compact and self-adjoint, the spectral theorem \cite[Theorem~3 in Chapter~28]{lax2002functional} yields that it has countably many eigenvalues $\{\lambda_i\}^\infty_{i=0}$, and there is an orthonormal basis consisting of eigenvectors. The eigenvalue equation of $\widehat{\kappa}$ is a Fredholm-type integral equation
\begin{equation*}
\int_{\RR^n} \, \kappa(x,y) \phi_i(y)\,\mathrm{d}y=\lambda_i \phi_i(x), \quad \text{where }  \phi_i\in L^2(\mathbb{R}^n). 
\end{equation*} 
We say that $\widehat{\kappa}$ is \emph{trace-class}\footnote{Linear operators satisfy the inclusion: $\text{trace-class} \subset \text{Hilbert--Schmidt}\subset \text{compact} \subset \text{bounded}$.} if $\sum_{i=0}^{\infty} |\lambda_i| < \infty$, and we define the \emph{trace} of $\widehat{\kappa}$ as
\begin{equation*}
 \Tr (\widehat{\kappa})\defeq \sum_{i=0}^{\infty} \lambda_i. 
\end{equation*}
If $\kappa$ is continuous, then (see \cite{Brislawn} and the original \cite{Duflo}) we have the formula 
\begin{equation} \label{eq:trace}
\Tr (\widehat{\kappa})=\int_{\RR^n} \kappa(x,x) \, \mathrm{d}x.
\end{equation} 
The \emph{Schwartz space} $\iS(\RR^n)$ is the set of rapidly decreasing smooth functions defined as follows. A smooth function $f\colon \RR^n\to \CC$ satisfies $f\in \iS(\RR^n)$ if for all multiindices $\alpha=(\alpha_1,\dots,\alpha_n)\in \NN^n$ and $\beta=(\beta_1,\dots,\beta_n)\in \NN^n$ we have 
\begin{equation*}
\sup_{x\in \RR^n} |x^{\alpha} (D^{\beta} f)(x)|<\infty, 
\end{equation*} 
where we use the notation $x^{\alpha}=x_1^{\alpha_1}\cdots x_n^{\alpha_n}$ and $D^{\beta}=\partial_{1}^{\beta_1}\cdots \partial_{n}^{\beta_n}$; for more on Schwartz functions see e.\,g.~\cite[Section~V.3]{SR}. If $\kappa\in \iS(\RR^{2n})$ is a Schwartz kernel, then $\widehat{\kappa}$ is a trace-class operator, see \cite[Proposition~287]{Gosson_Harmonic} or \cite[Proposition~1.1]{Brislawn} with the remark afterwards. We say that $\widehat{\rho}$ is a \emph{density operator} if it is positive semidefinite with $\Tr(\widehat{\rho})=1$, so its eigenvalues satisfy $\lambda_i \geq 0$ and $\sum_{i=0}^{\infty} \lambda_i=1$. For more on Hilbert--Schmidt, compact, and trace-class operators see e.\,g.~\cite{Gosson_Harmonic,deGosson_book,Stein}. The following theorem basically dates back to Mercer \cite{Mercer}.

\begin{theorem}[Mercer] \label{t:Mercer} Assume that $\kappa\in L^2(\RR^{2n})$ is a continuous kernel. Then $\widehat{\kappa}$ is positive semidefinite if and only if for all $x_1,\dots, x_k \in \RR^n$ and $c_1,\dots, c_k\in \CC$ we have 
\begin{equation} \label{eq:Mercer}  
\sum_{i,j=1}^k c_i c^{*}_j \kappa(x_i,x_j)\geq 0. 
\end{equation} 
\end{theorem}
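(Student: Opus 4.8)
The plan is to prove both directions, with the nontrivial one being sufficiency. For the \emph{necessity} of \eqref{eq:Mercer}, I would observe that the finite sum $\sum_{i,j} c_i c_j^* \kappa(x_i,x_j)$ is a limit of quadratic forms $\langle \widehat{\kappa} f, f\rangle$: choose $f$ to be a suitable approximation of the "sum of point masses" $\sum_i c_i^* \delta_{x_i}$, for instance $f_\varepsilon = \sum_i c_i^* \chi_\varepsilon(\cdot - x_i)$ where $\chi_\varepsilon$ is a normalized bump supported in the $\varepsilon$-ball. Since $\kappa$ is continuous, $\langle \widehat{\kappa} f_\varepsilon, f_\varepsilon\rangle \to \sum_{i,j} c_i c_j^* \kappa(x_i,x_j)$ as $\varepsilon\to 0$ (using that $\kappa$ is uniformly continuous on the compact set where the $f_\varepsilon$ are supported), and each term is nonnegative because $\widehat{\kappa}\geq 0$; hence the limit is nonnegative. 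Self-adjointness of $\widehat{\kappa}$ guarantees the form is real to begin with, and here the continuous-kernel characterization $\kappa(y,x)=\kappa^*(x,y)$ makes the finite sum manifestly real as well.

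For \emph{sufficiency}, I would assume \eqref{eq:Mercer} holds for all finite configurations and show $\langle \widehat{\kappa} f, f\rangle \geq 0$ for every $f\in L^2(\RR^n)$. The quadratic form is
\begin{equation*}
\langle \widehat{\kappa} f, f\rangle = \int_{\RR^n}\int_{\RR^n} \kappa(x,y) f(y) f^*(x)\, \mathrm{d}y\, \mathrm{d}x,
\end{equation*}
which is exactly a "continuous limit" of the nonnegative Riemann-type sums. The cleanest route: first reduce to $f$ continuous with compact support (these are dense in $L^2$ and the form is continuous in $f$ by Cauchy--Schwarz since $\widehat{\kappa}$ is bounded). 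For such $f$ supported in a cube $Q$, partition $Q$ into small subcubes $Q_1,\dots,Q_N$ with centers $x_1,\dots,x_N$ and side length $\delta$, and set $c_i = f^*(x_i)\,|Q_i|^{1/2}$ — more precisely, approximate the double integral by $\sum_{i,j} \kappa(x_i,x_j) f(x_j) f^*(x_i) |Q_i|\,|Q_j|$, which by \eqref{eq:Mercer} (with weights $c_i = f^*(x_i)|Q_i|$, absorbing one factor) is nonnegative. Letting $\delta\to 0$, uniform continuity of $\kappa$ and of $f$ on the relevant compact set forces the Riemann sums to converge to $\langle \widehat{\kappa} f,f\rangle$, so the limit is $\geq 0$.

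The main obstacle — really the only place requiring care — is making the approximation in the sufficiency direction rigorous: one must control the error between the genuine double integral and the discrete bilinear sum uniformly, which relies on $\kappa$ being continuous (hence uniformly continuous on $Q\times Q$) and on $f$ being bounded with compact support. A mild technical point is that \eqref{eq:Mercer} as stated allows complex $c_i$, and one should check the discretization uses it in that generality (it does, since $f^*(x_i)$ is complex). I would also note in passing that continuity of $\kappa$ is genuinely used in both directions and that the statement is the standard bridge to positivity used throughout the rest of the paper; it is essentially classical (Mercer), so I would keep the write-up brief and cite \cite{Mercer} for the historical origin while giving the short self-contained argument above.
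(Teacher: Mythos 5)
Your proposal is correct, but note that the paper itself offers no proof of Theorem~\ref{t:Mercer}: it is stated as a classical fact and attributed to \cite{Mercer}, so there is no internal argument to compare against. Your two-directional argument is a sound, self-contained way to fill that gap. Necessity via mollified point masses works, with one small precision to make explicit: the bump $\chi_\varepsilon$ must be $L^1$-normalized (e.g.\ the normalized indicator of the $\varepsilon$-ball), so that
\begin{equation*}
\iint \kappa(x,y)\,\chi_\varepsilon(y-x_j)\,\chi_\varepsilon(x-x_i)\,\mathrm{d}y\,\mathrm{d}x \longrightarrow \kappa(x_i,x_j)
\end{equation*}
by continuity of $\kappa$; an $L^2$-normalized bump would not produce the point evaluations. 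Sufficiency is also fine: the density of continuous compactly supported functions together with boundedness of the Hilbert--Schmidt operator $\widehat{\kappa}$ justifies the reduction, and on a compact cube $Q$ the integrand $\kappa(x,y)f(y)f^*(x)$ is uniformly continuous, so the Riemann sums converge to $\langle \widehat{\kappa}f,f\rangle$ while each sum is nonnegative by \eqref{eq:Mercer} with $c_i=f^*(x_i)\lvert Q_i\rvert$ (your initial $\lvert Q_i\rvert^{1/2}$ weights only differ by a positive overall factor for equal cubes, so either choice gives the sign); note that continuity of $\kappa$ is only needed on $Q\times Q$, so global boundedness of $\kappa$ is never an issue. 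In a final write-up you could keep exactly this argument or simply retain the paper's citation to \cite{Mercer}.
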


Let $\kappa\in \iS(\RR^{2n})$ be a Schwartz kernel. For $\widehat{\kappa}$ the \emph{Wigner--Weyl transform} connects the position representation $\kappa(x,y)$ and the \emph{phase space representation}\footnote{If $\widehat{\kappa}$ is a density operator, then $W(x,p)$ is called \emph{Wigner function}.} $W(x,p)$ such that for all $x,y,p\in \RR^n$ we have
\begin{equation}
\label{eq:Wigner-transform}
W(x,p)=\frac{1}{(2 \pi)^n}\int_{\mathbb{R}^n}  \exp{\left\{ - i p^{\top} y\right\}} \kappa\left(x+\frac{y}{2},x-\frac{y}{2}\right) \, \mathrm{d} y,
\end{equation}
and the inverse transform is
\begin{equation}
 \label{eq:Inverse Wigner-transform}
\kappa(x,y)= \int_{\mathbb{R}^n} W\left(\frac{x+y}{2},p\right) 
\exp{\left\{ i p^{\top} \left( x-y  \right)\right\}} \, \mathrm{d} p;
\end{equation}
for more on these concepts the reader might consult the monograph \cite{Folland+1989}. Note that here we used the convention that the Planck-constant satisfies $\hbar=1$.

\subsection{Structure of the paper}
In Section~\ref{s:statement} we state all the results of the paper, where we also dedicate a sizeable subsection to the entanglement problem. The results of the paper are proved in the later sections. We prove Theorem~\ref{Th:BHCS} in Section~\ref{s:Gauss}. First we prove it in one dimension in Subsection~\ref{ss:one}, then we summarize the knowledge on the symplectic decomposition of Gaussian operators in Subsection~\ref{ss:W}, and finally we fully prove Theorem~\ref{Th:BHCS} in Subsection~\ref{ss:gen} by tracing back the general case to the one-dimensional result. Section~\ref{s:odd} is dedicated to the proof of Theorem~\ref{t:odd}. The results concerning our preorder will be proved in Section~\ref{s:order}, namely Theorems~\ref{t:order} and \ref{t:equiv}, Claim~\ref{cl:I0I}, and Claim~\ref{cl:prec}. We finish the paper with discussion and conclusions in Section~\ref{s:discussion}.

\section{Statement of results} \label{s:statement} 
Let $A,B,C\in \RR^{n\times n}$ such that $A$ and $C$ are symmetric and positive definite and $B$ is arbitrary. We consider the \emph{Gaussian kernel} $\kappa_{G}\colon \RR^{2n}\to \CC$ as
\begin{equation*} \label{eq:kappaG}
\kappa_{G}(x,y)=\exp\left\{-(x-y)^{\top} A (x-y) -i(x-y)^{\top} B (x+y)-(x+y)^{\top} C (x+y)\right\}. 
\end{equation*}
Let $P\colon \RR^{2n}\to \CC$ be a polynomial in $2n$ variables and define the \emph{polynomial Gaussian kernel}\footnote{The same terminology was used in \cite{Roux_polygaussian} in a similar context.} $\kappa_{\PG}$ as\footnote{We could have added the linear term $iD^{\top}(x-y)+E^{\top}(x+y)$ to the exponent of $\kappa_G$ with some given vectors $D,E\in \RR^n$. However, $D$ does not change the spectrum of $\widehat{\kappa}_{\PG}$, and $E$ does not change its positivity, so they are irrelevant to us.}
\begin{equation} 
\label{eq:polynomial Gaussian_kernel_def}
\kappa_{\PG}(x,y)=P(x,y)\kappa_{G}(x,y). 
\end{equation}
Since $\kappa_{\PG}\in \iS(\RR^{2n})$, the operator $\widehat{\kappa}_{\PG}$ is trace-class. Note that if $\widehat{\kappa}_{\PG}\geq 0$ then $P$ must be self-adjoint, that is, $P(y,x)=P^{*}(x,y)$ for all $x,y\in \RR^n$. 
\subsection{When the Gaussian is not positive}

The following theorem is the main result of our paper, stating that if a Gaussian operator is not positive semidefinite, then the corresponding polynomial Gaussians cannot be positive semidefinite either. 

\begin{theorem} \label{Th:BHCS} 
Let $P\colon \RR^{2n}\to \CC$ be a non-zero polynomial and let $\kappa_G\colon \RR^{2n}\to \CC$ be a Gaussian kernel. If $\widehat{\kappa}_G$ is not positive semidefinite, then $\widehat{\kappa}_{\PG}$ is not positive semidefinite either. 
\end{theorem}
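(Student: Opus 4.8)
The plan is to argue by contradiction: assuming $\widehat{\kappa}_G$ is not positive semidefinite and $\widehat{\kappa}_{\PG}\geq 0$, I derive a contradiction. I would first settle the one‑dimensional case, which is the analytic core, and then reduce the general case to it by symplectic normalization.

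For $n=1$, write $\kappa_G(x,y)=\exp\{-a(x-y)^2-ib(x-y)(x+y)-c(x+y)^2\}$ with $a,c>0$ and $b\in\RR$. A short computation gives the factorization $\kappa_G(x,y)=\phi(x)\phi^{*}(y)\,e^{2(a-c)xy}$ with $\phi(x)=e^{-(a+c+ib)x^2}$; hence if $a\geq c$ then $e^{2(a-c)xy}=\sum_{m\geq 0}\frac{(2(a-c))^m}{m!}\,x^m y^m$ is a sum of nonnegative multiples of the rank‑one positive semidefinite kernels $x^m y^m$, and modulation by $\phi(x)\phi^{*}(y)$ preserves positivity, so $\widehat{\kappa}_G\geq 0$. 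Thus only the case $a<c$ matters. Suppose $a<c$, $P\not\equiv 0$, and $\widehat{\kappa}_{\PG}\geq 0$. By Mercer's theorem (Theorem~\ref{t:Mercer}) applied to two points $x_1,x_2$, together with self‑adjointness, the principal $2\times 2$ minor is nonnegative: $\kappa_{\PG}(x_1,x_1)\kappa_{\PG}(x_2,x_2)\geq |\kappa_{\PG}(x_1,x_2)|^2$. Since $\kappa_G(x,x)=e^{-4cx^2}>0$ and $|\kappa_G(x_1,x_2)|^2/\big(\kappa_G(x_1,x_1)\kappa_G(x_2,x_2)\big)=\exp\{2(c-a)(x_1-x_2)^2\}$, this reads
\[
P(x_1,x_1)P(x_2,x_2)\;\geq\;|P(x_1,x_2)|^2\exp\{2(c-a)(x_1-x_2)^2\}.
\]
Fixing $x_2$ with $P(\,\cdot\,,x_2)\not\equiv 0$ and letting $x_1\to\infty$, the right‑hand side tends to $\infty$ faster than any polynomial while the left‑hand side is polynomial in $x_1$ — a contradiction.

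For general $n$, I would use that conjugation $\widehat{\kappa}\mapsto\mu(S)\widehat{\kappa}\mu(S)^{\dagger}$ by the metaplectic unitary $\mu(S)$ of a symplectic $S$ preserves positive semidefiniteness and Hilbert--Schmidt‑ness and, acting by a linear symplectic substitution on Weyl symbols, maps polynomial Gaussian operators to polynomial Gaussian operators, with nonzero polynomial part iff the original one has. By Williamson's symplectic normal form for the Gaussian Wigner function of $\widehat{\kappa}_G$ (the symplectic‑decomposition facts collected in Section~\ref{s:Gauss}), choose $S$ with $\mu(S)\widehat{\kappa}_G\mu(S)^{\dagger}=\widehat{g_1}\otimes\cdots\otimes\widehat{g_n}$ a tensor product of one‑dimensional Gaussian operators; replacing $\kappa_G,P$ by their images, assume $\kappa_G(x,y)=\prod_{j=1}^n g_j(x_j,y_j)$, each $g_j$ a Schwartz Gaussian kernel. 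If $\widehat{\kappa}_G=\bigotimes_j\widehat{g_j}\not\geq 0$ then not every $\widehat{g_j}$ is positive semidefinite (a tensor product of positive semidefinite operators is positive semidefinite); permuting coordinates, assume $\widehat{g_1}\not\geq 0$. Now suppose $\widehat{\kappa}_{\PG}\geq 0$; write $P\big((s,x'),(s',y')\big)=\sum_{k,l}s^k(s')^l Q_{kl}(x',y')$ with $x'=(x_2,\dots,x_n)$ and set $\kappa_{G''}(x',y')=\prod_{j=2}^n g_j(x_j,y_j)$. Testing $\widehat{\kappa}_{\PG}$ against $u\otimes\psi$ for $u\in L^2(\RR)$, $\psi\in L^2(\RR^{n-1})$ and factoring the integral yields $0\leq\langle\widehat{\kappa}_{\PG}(u\otimes\psi),u\otimes\psi\rangle=\langle\widehat{\hat P_\psi\,g_1}\,u,u\rangle$, where $\hat P_\psi(s,s')=\sum_{k,l}c_{kl}(\psi)\,s^k(s')^l$ and $c_{kl}(\psi)=\langle\widehat{Q_{kl}\kappa_{G''}}\,\psi,\psi\rangle$. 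So $\widehat{\hat P_\psi g_1}\geq 0$ is a one‑dimensional polynomial Gaussian operator whose Gaussian part $\widehat{g_1}$ is not positive semidefinite, whence the one‑dimensional case forces $\hat P_\psi\equiv 0$, i.e. $c_{kl}(\psi)=0$ for all $k,l$. Since $\psi$ is arbitrary, the complex polarization identity gives $\widehat{Q_{kl}\kappa_{G''}}=0$, hence $Q_{kl}\kappa_{G''}\equiv 0$ by continuity, hence $Q_{kl}\equiv 0$ as $\kappa_{G''}$ vanishes nowhere; therefore $P\equiv 0$, a contradiction.

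I expect the main obstacle to be the symplectic normalization step: carefully invoking that metaplectic conjugation respects the polynomial Gaussian class and that Williamson's theorem reduces $\widehat{\kappa}_G$ to a tensor product whose failure of positivity is localized in a single one‑dimensional factor. The one‑dimensional case is technical but elementary once the factorization $\kappa_G=\phi\,\phi^{*}\,e^{2(a-c)xy}$ is available, and the descent from $n$ to $1$ via product test functions $u\otimes\psi$ followed by polarization is then routine.
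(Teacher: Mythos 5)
Your proposal is correct, and it shares the paper's skeleton---settle the case $n=1$ first, then reduce to it via the Wigner--Weyl transform, Williamson's theorem and metaplectic (symplectic) covariance as in Subsection~\ref{ss:W}---but both key technical steps are carried out differently. In one dimension the paper first shows $P(x,x)\not\equiv 0$ using the trace formula \eqref{eq:trace}, then feeds the two-point Mercer inequality with $c_1=1$, $c_2=-1$ along the line $x_1+x_2=\varepsilon$ into a growth comparison, and needs the auxiliary kernel $xy\kappa_{\PG}(x,y)$ to reach the final sign contradiction; you instead use the $2\times 2$ Gram-determinant consequence of Theorem~\ref{t:Mercer}, namely $\kappa_{\PG}(x_1,x_1)\kappa_{\PG}(x_2,x_2)\geq|\kappa_{\PG}(x_1,x_2)|^2$, divide by the Gaussian diagonal and let $x_1\to\infty$ with $x_2$ fixed so that $P(\cdot,x_2)\not\equiv 0$ (such an $x_2$ exists because $P\not\equiv 0$); this is shorter, avoids both the trace argument and the auxiliary kernel, and treats the phase $b\neq 0$ directly via the factorization $\kappa_G=\phi\,\phi^{*}e^{2(a-c)xy}$, whereas the paper only needs the $B=0$ case since Williamson's normal form \eqref{eq:kgk} carries no $B$ term. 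In the descent from $n$ modes to one, the paper applies the partial trace over the remaining coordinates (Fact~\ref{f:eta}), with trace preservation guaranteeing that the resulting one-dimensional polynomial is nonzero; you instead test on product vectors $u\otimes\psi$, invoke the one-dimensional result for each fixed $\psi$, and use polarization to conclude that every coefficient polynomial $Q_{kl}$, hence the whole transformed polynomial, vanishes---contradicting that conjugating the nonzero operator $\widehat{\kappa}_{\PG}$ by a unitary keeps the kernel, and hence its polynomial factor, nonzero. Both descents are sound: yours trades the trace formula and Fact~\ref{f:eta} for a routine Fubini justification of the product-vector factorization plus the polarization identity, and it yields the slightly stronger conclusion that the full polynomial must vanish rather than just one partial-trace coefficient.
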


\subsection{When the polynomial is of odd degree}

We show that if a polynomial $P$ is of odd degree, then no corresponding polynomial Gaussian operator can be positive semidefinite. 

\begin{theorem} \label{t:odd} Assume that $P\colon \RR^{2n}\to \CC$ is a polynomial of odd degree, and $\kappa_G\colon \RR^{2n}\to \CC$ is any Gaussian kernel. Then $\widehat{\kappa}_{\PG}$ is not positive semidefinite. 
\end{theorem}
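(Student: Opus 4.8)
The plan is to use a classical phase-space necessary condition for positivity. If $\widehat{\kappa}_{\PG}\ge 0$, then for every coherent state $|z_0\rangle\in L^2(\RR^n)$ localized at a phase-space point $z_0\in\RR^{2n}$ we have $\langle z_0|\widehat{\kappa}_{\PG}|z_0\rangle\ge 0$; as $z_0$ varies this quantity is, up to a positive constant, the \emph{Husimi function} of $\widehat{\kappa}_{\PG}$, namely the smoothing $W_{\PG}\ast G_0$ of the Wigner function $W_{\PG}$ by a fixed positive Gaussian $G_0$ (the Wigner function of the vacuum); see e.g.\ \cite{Serafini,Folland+1989}. I will show that $W_{\PG}\ast G_0$ is a real multiple of a polynomial of odd degree times a Gaussian, with nonvanishing leading form, hence it takes negative values — contradicting $\widehat{\kappa}_{\PG}\ge 0$. (One could first invoke Theorem~\ref{Th:BHCS} to reduce to the case $\widehat{\kappa}_G\ge 0$, but this is not needed below.)

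First I would translate the polynomial factor into phase space. Writing $\tilde P(u,v)=P\!\left(u+\tfrac v2,\;u-\tfrac v2\right)$ and integrating by parts repeatedly in \eqref{eq:Inverse Wigner-transform}, one gets $W_{\PG}=\tilde P(x,i\partial_p)\,W_G$; here the substitution $u\mapsto x$, $v\mapsto i\partial_p$ is unambiguous because $x$ and $\partial_p$ act on different variables. Since $A,C\succ 0$, the kernel $\kappa_G$ and hence $W_G$ lies in $\iS(\RR^{2n})$, and the Wigner transform of a Gaussian is a Gaussian, so $W_G(z)=C_G\,e^{-Q(z)}$ for some real quadratic form $Q$ on $\RR^{2n}$ and real constant $C_G$; here $Q\succ 0$ because $W_G$ decays, and $C_G>0$ because $\int W_G=\Tr(\widehat{\kappa}_G)=\int e^{-4x^{\top}Cx}\,\mathrm dx>0$. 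Applying $\tilde P(x,i\partial_p)$ to $C_Ge^{-Q}$ then yields $W_{\PG}=C_G\,R(z)e^{-Q(z)}$ with $\deg R\le d:=\deg P$, whose degree-$d$ part is $R_d(x,p)=\tilde P_d\!\left(x,-i\,\partial_pQ(x,p)\right)$, $\tilde P_d$ denoting the top-degree homogeneous part of $\tilde P$. Convolving with the positive Gaussian $G_0$ only reparametrizes and rescales: a direct Gaussian-integral computation gives $W_{\PG}\ast G_0=C_G\,\tilde R(z)e^{-\tilde Q(z)}$ with $\tilde Q\succ 0$, $\deg\tilde R\le d$, and leading form $\tilde R_d(z)=\mu\,R_d(Lz)$ for some $\mu>0$ and some invertible linear map $L$ of $\RR^{2n}$.

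The crux is to show $R_d\not\equiv 0$ (hence $\tilde R_d\not\equiv 0$). For this I would note that the linear map $(x,p)\mapsto\big(x,\partial_pQ(x,p)\big)$ on $\RR^{2n}$ has determinant $2^n\det Q_{pp}$, which is nonzero because $Q\succ 0$ forces its $p$-block $Q_{pp}$ to be positive definite; multiplying the last $n$ coordinates by $-i$ keeps the map invertible over $\CC$. Since $d=\deg P$ we have $P_d\not\equiv 0$, and as $(u,v)\leftrightarrow(x,y)$ is an invertible linear change of coordinates also $\tilde P_d\not\equiv 0$; composing the nonzero homogeneous polynomial $\tilde P_d$ with the invertible map above, we get $R_d\not\equiv 0$.

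Finally, suppose $\widehat{\kappa}_{\PG}\ge 0$. Then $\widehat{\kappa}_{\PG}$ is self-adjoint, so $P$ is self-adjoint, so $\kappa_{\PG}$ is a self-adjoint kernel and $W_{\PG}$, hence $\tilde R$, is real-valued; and the necessary condition from the first paragraph gives $C_G\,\tilde R(z)\ge 0$ for all $z\in\RR^{2n}$. But $\tilde R_d$ is a nonzero \emph{real} homogeneous polynomial of \emph{odd} degree $d$, so there is a direction $s$ with $\tilde R_d(s)\ne 0$, whence $\tilde R_d(-s)=-\tilde R_d(s)$; since $C_G\,\tilde R(\pm ts)=C_G\big(\pm t^{\,d}\tilde R_d(s)+O(t^{\,d-1})\big)$, the two values have opposite signs for all large $t$, contradicting $C_G\tilde R\ge 0$. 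Hence $\widehat{\kappa}_{\PG}$ is not positive semidefinite. The step I expect to be the real obstacle is the nonvanishing of $R_d$: this is precisely where the standing hypothesis $A,C\succ 0$ — equivalently, that the Gaussian genuinely decays in every phase-space direction, so that $Q$, and with it $Q_{pp}$, is positive definite — gets used. (Alternatively one can bypass the Wigner formalism and compute $\langle z_0|\widehat{\kappa}_{\PG}|z_0\rangle$ directly as a Gaussian integral over the coherent-state parameter $z_0$, reaching the same polynomial-times-Gaussian and the same linear-algebra point.)
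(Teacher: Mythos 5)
Your proposal is correct, but it takes a genuinely different route from the paper for the two places where the real work happens. The paper does not use coherent states or the Husimi function: it tests positivity against the non-$L^2$ functions $g_{u,v}(x)=\exp(-iu^{\top}x-v^{\top}x)$ via a truncation argument (Fact~\ref{f:L1}), so that the transform $\iL(\kappa_{\PG})$ of \eqref{eq:PG} must be nonnegative on $\RR^{2n}$; it then writes $\iL(\kappa_{\PG})=Q\,\iL(\kappa_G)$ with $\deg Q\leq\deg P$ and — this is its crux — rules out a degree drop indirectly, by applying the Fourier transform twice and observing that $\iF(\iF(\kappa_{\PG}))$ must reproduce $P(-x,-y)\kappa_G(-x,-y)$, which is impossible if $\deg Q<\deg P$. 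You instead test against genuine $L^2$ coherent states, identify the resulting expectation with the Gaussian-smoothed Wigner function $W_{\PG}\ast G_0$, and rule out the degree drop directly, by computing the leading homogeneous form: $R_d(x,p)=\tilde P_d(x,-i\,\partial_pQ(x,p))$ is $\tilde P_d$ composed with a linear map that is invertible precisely because the $p$-block of the positive definite form $Q$ in \eqref{eq:WG} is invertible, and Gaussian convolution only transforms this leading form by a positive scalar and an invertible real linear map (your "direct Gaussian-integral computation" is indeed routine: complete the square and integrate). Both arguments then finish identically — a nonzero real homogeneous form of odd degree takes both signs. What each buys: your version makes explicit where the standing hypothesis that $A,C$ are positive definite enters and works with honest $L^2$ test vectors, at the price of importing the Moyal-overlap/Husimi formalism (or, as you note, redoing it as a Gaussian integral over the coherent-state parameter); the paper's version stays entirely in the kernel picture with elementary self-contained facts, and its double-Fourier-inversion trick sidesteps any leading-symbol bookkeeping. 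Minor cosmetic points only: the operator identity $W_{\PG}=\tilde P(x,i\partial_p)W_G$ comes from differentiating under the integral in the forward transform \eqref{eq:Wigner-transform} (not integrating by parts in \eqref{eq:Inverse Wigner-transform}), and the reality of $\tilde R$'s coefficients deserves the one-line argument the paper records as Fact~\ref{f:Q}; neither affects correctness.
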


The above theorem generalizes \cite[Proposition~3.1]{Newton}, where the significantly easier $n=\deg P=1$ case was settled. It also explains why polynomial Gaussian operators with polynomials of odd degree do not really appear in physics. 

Let us say that $P$ is \emph{reducible to odd degree} if there is an index set $I\subset \{1,\dots,n\}$ such that substituting $x_i=y_i=0$ for all $i\in I$ into $P$ transforms $P$ into a polynomial of odd degree in $2(n-|I|)$ variables. Theorem~\ref{t:Mercer} easily implies that applying this transformation to a kernel maps a positive semidefinite operator into a positive semidefinite one. Therefore, Theorem~\ref{t:odd} immediately yields the following generalization. 

\begin{theorem} \label{t:reducible} Let $P\colon \RR^{2n}\to \CC$ be a polynomial which is reducible to odd degree, and let $\kappa_G\colon \RR^{2n}\to \CC$ be a Gaussian kernel. Then $\widehat{\kappa}_{\PG}$ is not positive semidefinite. 
\end{theorem}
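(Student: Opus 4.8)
The plan is to obtain Theorem~\ref{t:reducible} from Theorem~\ref{t:odd} by a restriction argument, with Mercer's theorem (Theorem~\ref{t:Mercer}) as the bridge that controls how positivity behaves when coordinates are set to zero. Fix an index set $I\subset\{1,\dots,n\}$ witnessing that $P$ is reducible to odd degree, put $J=\{1,\dots,n\}\setminus I$, and note $|J|\geq 1$, since a polynomial in $0$ variables is a constant and cannot be of odd degree. Argue by contradiction: suppose $\widehat{\kappa}_{\PG}\geq 0$. Let $\iota\colon\RR^{|J|}\to\RR^n$ be the linear embedding that puts its argument in the $J$-coordinates and $0$ in the $I$-coordinates, and define the restricted kernel $\widetilde{\kappa}\colon\RR^{2|J|}\to\CC$ by $\widetilde{\kappa}(u,v)=\kappa_{\PG}(\iota(u),\iota(v))$.

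First I would check that $\widehat{\widetilde{\kappa}}\geq 0$. The kernel $\widetilde{\kappa}$ is a restriction of a Schwartz function, hence itself Schwartz, so it is continuous and lies in $L^2(\RR^{2|J|})$; and for arbitrary $u_1,\dots,u_k\in\RR^{|J|}$ and $c_1,\dots,c_k\in\CC$ one has $\sum_{i,j}c_ic_j^{*}\widetilde{\kappa}(u_i,u_j)=\sum_{i,j}c_ic_j^{*}\kappa_{\PG}(\iota(u_i),\iota(u_j))\geq 0$ by Theorem~\ref{t:Mercer} applied to $\widehat{\kappa}_{\PG}$. Applying Theorem~\ref{t:Mercer} in the converse direction to $\widetilde{\kappa}$ yields $\widehat{\widetilde{\kappa}}\geq 0$.

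Next I would recognize $\widetilde{\kappa}$ as a polynomial Gaussian kernel on $\RR^{2|J|}$ whose polynomial factor has odd degree, and then invoke Theorem~\ref{t:odd} for the contradiction. Indeed $\widetilde{\kappa}(u,v)=\widetilde{P}(u,v)\,\widetilde{\kappa}_G(u,v)$, where $\widetilde{P}(u,v)=P(\iota(u),\iota(v))$ is exactly the polynomial obtained from $P$ by substituting $x_i=y_i=0$ for $i\in I$, hence of odd degree by the choice of $I$, and $\widetilde{\kappa}_G(u,v)=\kappa_G(\iota(u),\iota(v))$. Since $\iota(u)\pm\iota(v)$ vanish on the $I$-coordinates, $\widetilde{\kappa}_G$ has the same shape as $\kappa_G$ with $A,B,C$ replaced by the principal submatrices $A_{JJ},B_{JJ},C_{JJ}$; as principal submatrices of symmetric positive definite matrices, $A_{JJ}$ and $C_{JJ}$ are again symmetric positive definite, so $\widetilde{\kappa}_G$ is a bona fide Gaussian kernel. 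Thus Theorem~\ref{t:odd} applies to $\widehat{\widetilde{\kappa}}$ and shows it is \emph{not} positive semidefinite, contradicting the previous paragraph, and the proof is complete.

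I do not expect any genuine obstacle here; this is really just the assembly of facts already highlighted in the text. The only steps needing (entirely routine) care are verifying that the restricted Gaussian remains of the admissible form — which comes down to the standard fact that principal submatrices of positive definite matrices are positive definite — and observing that the Mercer characterization \eqref{eq:Mercer} passes to an affine coordinate slice, which is immediate because such a slice is closed under the operations occurring in \eqref{eq:Mercer}.
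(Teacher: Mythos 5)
Your proof is correct and follows essentially the same route as the paper, which establishes Theorem~\ref{t:reducible} by exactly this restriction argument: Theorem~\ref{t:Mercer} shows that substituting $x_i=y_i=0$ for $i\in I$ sends a positive semidefinite kernel to a positive semidefinite one, and Theorem~\ref{t:odd} applied to the restricted polynomial Gaussian kernel gives the contradiction. Your extra verifications (that the restricted kernel is Schwartz and that the restricted Gaussian is again admissible because the principal submatrices $A_{JJ}$ and $C_{JJ}$ remain symmetric positive definite) simply make explicit the routine details the paper leaves implicit.
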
 
 
\subsection{A useful preorder on Gaussian kernels} \label{ss:preorder}
We introduce a relation which allows us to vary the Gaussian part in polynomial Gaussian (and more general) kernels.  
Fix a positive integer $n$, and consider matrices $A,B,C\in \RR^{n\times n}$ such that both $A$ and $C$ are symmetric. Let us define self-adjoint Gaussian functions $\theta=\theta(A,B,C)\colon  \RR^{2n} \to \CC$ as
\begin{equation*}
\theta(x,y)=\exp\left\{-(x-y)^{\top} A (x-y)-i(x-y)^{\top} B (x+y)-(x+y)^{\top}C (x+y)\right\}. 
\end{equation*}
Clearly $\theta$ is a kernel in $L^2(\RR^{2n})$ if and only if both $A$ and $C$ are positive definite. We want to address the following problem. 

\begin{problem} Let $\sigma\colon \RR^{2n} \to \CC$ be a given self-adjoint function and assume that $\sigma \theta(A_i,B_i,C_i)\in L^2(\RR^{2n})$ for $i\in \{0,1\}$. When is it true that 
\begin{equation*} \widehat{\sigma \theta}(A_0,B_0,C_0)\geq 0 \quad \Longrightarrow \quad  \widehat{\sigma \theta}(A_1,B_1,C_1)\geq 0?
\end{equation*}
\end{problem}	
	
The following definition and theorem answer this problem completely. 

\begin{notation} Let $I_n\in \RR^{n\times n}$ denote the identity matrix. Let $\iG(n)$ be the set of triples $(A,B,C)$ such that $A,B,C\in \RR^{n\times n}$ and $A,C$ are symmetric. Let
\begin{equation*} \iG^+(n)=\left\{(A,B,C)\in \iG(n): A,C \text{ are positive definite and } \widehat{\theta}(A,B,C)\geq 0\right\}. 
\end{equation*} 
\end{notation}

\begin{definition}  \label{d:prec}
We define a relation $(\iG,\preceq)$ on $\iG(n)$ as follows. For two triples $(A_0, B_0,C_0), (A_1, B_1,C_1)\in \iG(n)$ we write $(A_0, B_0,C_0)\preceq (A_1,B_1, C_1)$ if there exists $r\geq 0$ such that 
\begin{equation*} 
(A_1-A_0+rI_n, B_1-B_0, C_1-C_0+rI_n)\in \iG^+(n).
\end{equation*} 
For two Gaussian kernels $\kappa_{G_0}$ and $\kappa_{G_1}$ we write $\kappa_{G_0}\preceq \kappa_{G_1}$ if their defining matrix triples satisfy $(A_0,B_0,C_0)\preceq (A_1,B_1,C_1)$.
\end{definition}

\begin{remark}
In Definition~\ref{d:prec} the matrices $A_1-A_0$ and $C_1-C_0$ are symmetric, so for large enough $r$ the matrices $A_1-A_0+rI_n$ and $C_1-C_0+rI_n$ are both positive definite, yielding ${\theta}(A_1-A_0+rI_n, B_1-B_0, C_1-C_0+rI_n)\in L^2(\RR^{2n})$. Theorem~\ref{t:equiv} will imply that if $(A_0,B_0,C_0)\preceq (A_1,B_1,C_1)$, then any choice of $r\geq 0$ witnesses it as long as the matrices $A_1-A_0+rI_n$ and $C_1-C_0+rI_n$ are positive definite. Hence checking the relation $\preceq$ requires to determine the positivity of a single Gaussian operator, which can be easily done by calculating the eigenvalues of a $2n\times 2n$ matrix as we will see in Claim~\ref{c:positivity}.
\end{remark}

The following theorem states that this is precisely the right definition for us. 

\begin{theorem} \label{t:order}
For $(A_0, B_0,C_0), (A_1, B_1,C_1)\in \iG(n)$ the following are equivalent: 
\begin{enumerate} 
\item  $(A_0, B_0,C_0)\preceq (A_1,B_1, C_1)$;
\item for each self-adjoint $\sigma\colon \RR^{2n}\to \CC$ if $\sigma \theta(A_i,B_i,C_i)\in L^2(\RR^{2n})$ for $i\in\{0,1\}$ and 
$\widehat{\sigma \theta}(A_0,B_0,C_0)\geq 0$, then $\widehat{\sigma \theta}(A_1,B_1,C_1)\geq 0$;
\item for every self-adjoint Gaussian kernel $\sigma$ if $\sigma \theta(A_i,B_i,C_i)\in L^2(\RR^{2n})$ for $i\in\{0,1\}$ and $\widehat{\sigma \theta}(A_0,B_0,C_0)\geq 0$, then $\widehat{\sigma \theta}(A_1,B_1,C_1)\geq 0$.
\end{enumerate} 
\end{theorem}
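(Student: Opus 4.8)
The plan is to prove the cycle of implications $(1)\Rightarrow(2)\Rightarrow(3)\Rightarrow(1)$. The implication $(2)\Rightarrow(3)$ is immediate, since a self-adjoint Gaussian kernel $\sigma=\theta(\alpha,\beta,\gamma)$ with $\alpha,\gamma$ symmetric is in particular a self-adjoint function $\RR^{2n}\to\CC$. The two substantive directions rely on one algebraic and one analytic ingredient. The algebraic one is that $\theta$ sends sums of matrix triples to pointwise products of kernels, $\theta(A,B,C)\,\theta(A',B',C')=\theta(A+A',B+B',C+C')$, because the exponents add; hence for any function $\sigma$,
\[
\sigma\,\theta(A_1,B_1,C_1)=\bigl[\sigma\,\theta(A_0,B_0,C_0)\bigr]\cdot\theta(A_1-A_0,\,B_1-B_0,\,C_1-C_0).
\]
The analytic one is a Schur-type product theorem for positive semidefinite integral operators, which I would deduce from the classical Schur product theorem for Hermitian positive semidefinite matrices together with Theorem~\ref{t:Mercer}: the pointwise product of two continuous kernels satisfying the positivity condition \eqref{eq:Mercer} again satisfies \eqref{eq:Mercer}, and a continuous $L^2$ kernel satisfying \eqref{eq:Mercer} defines a positive semidefinite operator. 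Note that \eqref{eq:Mercer} is meaningful even for kernels that are not square-integrable, and this will be used below.

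For $(1)\Rightarrow(2)$: assume $\widehat{\sigma\theta}(A_0,B_0,C_0)\ge 0$ and let $r\ge 0$ witness $(A_0,B_0,C_0)\preceq(A_1,B_1,C_1)$, so $(A_1-A_0+rI_n,\,B_1-B_0,\,C_1-C_0+rI_n)\in\iG^+(n)$. The factor $\theta(A_1-A_0,B_1-B_0,C_1-C_0)$ is in general not even bounded, so I would peel off its unbounded rank-one part: writing $\varphi_r(x)=\exp\{2r\,x^\top x\}$, a direct computation gives
\[
\theta(A_1-A_0,\,B_1-B_0,\,C_1-C_0)(x,y)=\theta(A_1-A_0+rI_n,\,B_1-B_0,\,C_1-C_0+rI_n)(x,y)\,\varphi_r(x)\varphi_r(y).
\]
Then $\sigma\theta(A_1,B_1,C_1)$ is the pointwise product of three continuous kernels each satisfying \eqref{eq:Mercer}: the kernel $\sigma\theta(A_0,B_0,C_0)$ (by the hypothesis and Theorem~\ref{t:Mercer}), the $\iG^+(n)$-kernel $\theta(A_1-A_0+rI_n,B_1-B_0,C_1-C_0+rI_n)$ (by the definition of $\iG^+(n)$ and Theorem~\ref{t:Mercer}), and $(x,y)\mapsto\varphi_r(x)\varphi_r(y)$ with $\varphi_r$ real, for which $\sum_{i,j}c_ic_j^*\varphi_r(x_i)\varphi_r(x_j)=\bigl|\sum_i c_i\varphi_r(x_i)\bigr|^2\ge 0$. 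Hence $\sigma\theta(A_1,B_1,C_1)$ satisfies \eqref{eq:Mercer}, and being a continuous $L^2$ kernel by the hypothesis of $(2)$, Theorem~\ref{t:Mercer} gives $\widehat{\sigma\theta}(A_1,B_1,C_1)\ge 0$. (For a merely measurable $\sigma$ one argues the same way after absorbing $\varphi_r$ into the test vector: for compactly supported $f$, put $h=\varphi_r f\in L^2(\RR^n)$, so $\langle\widehat{\sigma\theta}(A_1,B_1,C_1)f,f\rangle=\langle\widehat{K}h,h\rangle$ with $K=[\sigma\theta(A_0,B_0,C_0)]\cdot\theta(A_1-A_0+rI_n,B_1-B_0,C_1-C_0+rI_n)\in L^2(\RR^{2n})$, and $\widehat K\ge 0$ by the Schur product theorem for Hilbert--Schmidt operators; then extend by density since $\widehat{\sigma\theta}(A_1,B_1,C_1)$ is bounded.)

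For $(3)\Rightarrow(1)$ I would exhibit an explicit test kernel. Pick $t>0$ large enough that $tI_n-A_0$, $tI_n-C_0$, $A_1-A_0+tI_n$ and $C_1-C_0+tI_n$ are all positive definite, and set $\sigma=\theta(tI_n-A_0,\,-B_0,\,tI_n-C_0)$, which is a self-adjoint Gaussian kernel lying in $L^2(\RR^{2n})$. Then $\sigma\theta(A_0,B_0,C_0)=\theta(tI_n,0,tI_n)$ has kernel $\varphi_{-t}(x)\varphi_{-t}(y)$ with $\varphi_{-t}(x)=\exp\{-2t\,x^\top x\}\in L^2(\RR^n)$ real, so $\langle\widehat{\theta}(tI_n,0,tI_n)f,f\rangle=\bigl|\int_{\RR^n}\varphi_{-t}(y)f(y)\,\mathrm{d}y\bigr|^2\ge 0$, i.e.\ $\widehat{\sigma\theta}(A_0,B_0,C_0)\ge 0$; moreover $\sigma\theta(A_1,B_1,C_1)=\theta(A_1-A_0+tI_n,\,B_1-B_0,\,C_1-C_0+tI_n)\in L^2(\RR^{2n})$. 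Applying $(3)$ to this $\sigma$ yields $\widehat{\theta}(A_1-A_0+tI_n,\,B_1-B_0,\,C_1-C_0+tI_n)\ge 0$; since the two symmetric blocks are positive definite by the choice of $t$, this triple lies in $\iG^+(n)$, and with $r=t$ this is precisely $(A_0,B_0,C_0)\preceq(A_1,B_1,C_1)$.

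I expect the only delicate point to be the Schur-type product step in $(1)\Rightarrow(2)$: keeping track of which factors are genuine $L^2$ kernels versus which merely satisfy the pointwise condition \eqref{eq:Mercer} without being square-integrable (e.g.\ $(x,y)\mapsto\varphi_r(x)\varphi_r(y)$ and $\theta(A_1-A_0,B_1-B_0,C_1-C_0)$ itself), and, for general measurable $\sigma$, justifying the density/approximation step and invoking the Schur product theorem for Hilbert--Schmidt operators. Everything else is either classical (the matrix Schur product theorem, Theorem~\ref{t:Mercer}) or fully explicit (the test kernel in $(3)\Rightarrow(1)$).
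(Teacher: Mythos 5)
Your proposal is correct and follows essentially the same route as the paper: the same witness $\sigma=\theta(rI_n-A_0,\,-B_0,\,rI_n-C_0)$ for $(3)\Rightarrow(1)$ (you are in fact slightly more careful, taking $t$ large enough that $A_1-A_0+tI_n$ and $C_1-C_0+tI_n$ are also positive definite, as the $L^2$ hypothesis of $(3)$ requires), and for $(1)\Rightarrow(2)$ the same factorization of $\sigma\theta(A_1,B_1,C_1)$ into $\sigma\theta(A_0,B_0,C_0)$, the positive Gaussian $\theta(A_1-A_0+rI_n,B_1-B_0,C_1-C_0+rI_n)$, and the unbounded rank-one factor. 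The only real difference is how the merely measurable $\sigma$ is handled (your Mercer-based main line needs continuity): the paper proves the Schur-product statement as Fact~\ref{f:mu} and removes the factor $g(x)g^{*}(y)$ via the density argument of Fact~\ref{f:positive}, whereas your parenthetical fallback absorbs that factor into compactly supported test functions and concludes by boundedness of the Hilbert--Schmidt operator --- an equivalent, if anything slightly leaner, version of the same step.
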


The next corollary states that deciding the positivity of a single operator determines the positivity of a lot of other operators as well.

\begin{corollary} \label{c:two} Let a self-adjoint function $\sigma\colon \RR^{2n}\to \CC$ and $(A_0, B_0,C_0)\in \iG(n)$ be given such that $\sigma \theta(A_0, B_0,C_0)\in L^2(\RR^{2n})$, and consider all triples $(A_1,B_1,C_1)$ for which $\sigma \theta(A_1, B_1,C_1)\in L^2(\RR^{2n})$. Assume that we have already determined the positivity of $\widehat{\sigma \theta}(A_0, B_0,C_0)$, then there are two possibilities:
\begin{enumerate}[(i)]
\item \label{item:1} $\widehat{\sigma \theta}(A_0, B_0,C_0)\geq 0 \Rightarrow  \widehat{\sigma \theta}(A_1, B_1,C_1)\geq 0 \text{ for all } (A_1,B_1,C_1)\succeq (A_0,B_0,C_0)$; 
\item $\widehat{\sigma \theta}(A_0, B_0,C_0)\not\geq 0 \Rightarrow  \widehat{\sigma \theta}(A_1, B_1,C_1)\not\geq 0 \text{ for all } (A_1,B_1,C_1)\preceq (A_0,B_0,C_0)$.
\end{enumerate}
\end{corollary}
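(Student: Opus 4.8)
The plan is to derive Corollary~\ref{c:two} as an immediate consequence of Theorem~\ref{t:order}, so the ``proof'' is really just an unpacking of the equivalence $(1)\Leftrightarrow(2)$ stated there, together with the elementary observation that $\preceq$ is reflexive (take $r=0$ in Definition~\ref{d:prec}, using that $(A,B,C)\in\iG^+(n)$ whenever $A,C$ are positive definite and $\widehat{\theta}(A,B,C)\geq 0$; in fact reflexivity is not even needed for the two implications below). Nothing deep happens here — the work has all been done in Theorem~\ref{t:order}.

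First I would prove item~\eqref{item:1}. Suppose $\widehat{\sigma\theta}(A_0,B_0,C_0)\geq 0$, and let $(A_1,B_1,C_1)\succeq(A_0,B_0,C_0)$, i.\,e.~$(A_0,B_0,C_0)\preceq(A_1,B_1,C_1)$, with $\sigma\theta(A_1,B_1,C_1)\in L^2(\RR^{2n})$. By hypothesis we also have $\sigma\theta(A_0,B_0,C_0)\in L^2(\RR^{2n})$, so all the requirements in condition~(2) of Theorem~\ref{t:order} are met; applying the implication $(1)\Rightarrow(2)$ of that theorem gives $\widehat{\sigma\theta}(A_1,B_1,C_1)\geq 0$, as desired.

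For item~(ii) I would argue by contraposition using exactly the same implication. Assume $(A_1,B_1,C_1)\preceq(A_0,B_0,C_0)$ with $\sigma\theta(A_1,B_1,C_1)\in L^2(\RR^{2n})$, and suppose toward a contradiction that $\widehat{\sigma\theta}(A_1,B_1,C_1)\geq 0$. Then, swapping the roles of the two triples and applying $(1)\Rightarrow(2)$ of Theorem~\ref{t:order} to the pair $(A_1,B_1,C_1)\preceq(A_0,B_0,C_0)$ (the $L^2$ conditions hold for both indices by assumption), we conclude $\widehat{\sigma\theta}(A_0,B_0,C_0)\geq 0$, contradicting the hypothesis $\widehat{\sigma\theta}(A_0,B_0,C_0)\not\geq 0$. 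Hence $\widehat{\sigma\theta}(A_1,B_1,C_1)\not\geq 0$.

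There is essentially no obstacle: the only point requiring a moment's care is the bookkeeping of which triple plays the role of ``index $0$'' and which of ``index $1$'' in the invocation of Theorem~\ref{t:order} — in item~(ii) the given $(A_1,B_1,C_1)$ is the \emph{smaller} triple, so it is that triple whose positivity propagates \emph{upward} to $(A_0,B_0,C_0)$. Once this is kept straight, both implications fall out directly, and one may remark that the two cases are genuinely contrapositives of each other applied to different pairs, which is why a single direction of Theorem~\ref{t:order} suffices.
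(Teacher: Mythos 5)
Your proposal is correct and matches the paper's treatment: Corollary~\ref{c:two} is stated there as an immediate consequence of Theorem~\ref{t:order}, and your unpacking — item~(i) directly from the implication $(1)\Rightarrow(2)$, item~(ii) by applying the same implication with the roles of the two triples swapped and arguing by contraposition — is exactly the intended argument, with the $L^2$ hypotheses correctly accounted for.
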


The most important corollary of Theorem~\ref{t:order} is that we can improve positivity tests by applying them to a larger class of operators obtained by switching the Gaussian part of our kernel to Gaussian kernels larger than equal to the original one with respect to our preorder $\preceq$, see Corollary~\ref{c:two}\,\eqref{item:1}. More details will be given in  Subsubsection~\ref{sss:ex}.

Theorem~\ref{t:order} (and directly Fact~\ref{f:mu}) imply that the relation $\preceq$ is transitive, and clearly reflexive, which makes it a preorder. We say that a polynomial $P\colon \RR^n \to \CC$ is \emph{universal} if $\widehat{P\kappa}_G\geq 0$ for any Gaussian kernel $\kappa_G$ satisfying $\widehat{\kappa}_G\geq 0$. The next claim characterizes universal polynomials.

\begin{claim} \label{cl:I0I} We have $(I_n,0,I_n)\preceq (A,B,C)$ for all $(A,B,C)\in \iG^{+}(n)$. Hence for a self-adjoint polynomial $P\colon \RR^{2n}\to \CC$ the following are equivalent: 
\begin{enumerate} 
\item \label{i:Gauss} $P$ is universal, that is, $\widehat{P\theta}(A,B,C)\geq 0$ for all $(A,B,C)\in \iG^{+}(n)$;
\item  $\widehat{P\theta}(I_n,0,I_n)\geq 0$;
\item \label{i:Mercer} $\sum_{i,j=1}^k c_i c^{*}_j P(x_i,x_j)\geq 0$ for all $x_1,\dots, x_k \in \RR^n$ and $c_1,\dots, c_k\in \CC$.
\end{enumerate} 
\end{claim}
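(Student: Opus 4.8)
The plan is to dispose of the opening inequality $(I_n,0,I_n)\preceq(A,B,C)$ by hand, and then read off the equivalence of the three conditions from Theorem~\ref{t:order} and Mercer's theorem. For the opening claim, fix $(A,B,C)\in\iG^{+}(n)$ and take $r=1$ in Definition~\ref{d:prec}: for the pair $(A_0,B_0,C_0)=(I_n,0,I_n)$ and $(A_1,B_1,C_1)=(A,B,C)$ the triple $(A_1-A_0+rI_n,\,B_1-B_0,\,C_1-C_0+rI_n)$ equals $(A,B,C)$, which lies in $\iG^{+}(n)$ by assumption, so $r=1$ witnesses $(I_n,0,I_n)\preceq(A,B,C)$. (This is the only place the hypothesis $(A,B,C)\in\iG^{+}(n)$ enters.) I would also record that $(I_n,0,I_n)\in\iG^{+}(n)$: here $I_n$ is positive definite and
\[
\theta(I_n,0,I_n)(x,y)=\exp\{-(x-y)^{\top}(x-y)-(x+y)^{\top}(x+y)\}=\exp\{-2x^{\top}x\}\,\exp\{-2y^{\top}y\}
\]
is of the form $g(x)\,g(y)^{*}$ with $g(x)=\exp\{-2x^{\top}x\}$; hence the left-hand side of \eqref{eq:Mercer} for this kernel equals $\bigl|\sum_i c_i g(x_i)\bigr|^2\geq 0$, and $\widehat{\theta}(I_n,0,I_n)\geq 0$ by Theorem~\ref{t:Mercer}.

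For the three equivalences it suffices to prove (1)$\Leftrightarrow$(2) and (2)$\Leftrightarrow$(3). The implication (1)$\Rightarrow$(2) is immediate, being the instance $(A,B,C)=(I_n,0,I_n)$ of universality, which is legitimate since $(I_n,0,I_n)\in\iG^{+}(n)$. For (2)$\Rightarrow$(1), fix an arbitrary $(A,B,C)\in\iG^{+}(n)$; by the opening claim $(I_n,0,I_n)\preceq(A,B,C)$, and $P$ is self-adjoint with $P\,\theta(I_n,0,I_n)$ and $P\,\theta(A,B,C)$ both in $\iS(\RR^{2n})\subset L^2(\RR^{2n})$, so Theorem~\ref{t:order} (the implication (1)$\Rightarrow$(2) there, applied with $\sigma=P$) upgrades $\widehat{P\theta}(I_n,0,I_n)\geq 0$ to $\widehat{P\theta}(A,B,C)\geq 0$. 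Since $(A,B,C)\in\iG^{+}(n)$ was arbitrary, $P$ is universal.

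It remains to prove (2)$\Leftrightarrow$(3) via Mercer's theorem. The kernel $P\,\theta(I_n,0,I_n)$ is continuous and lies in $L^2(\RR^{2n})$, so Theorem~\ref{t:Mercer} says that $\widehat{P\theta}(I_n,0,I_n)\geq 0$ if and only if $\sum_{i,j=1}^{k}c_i c_j^{*}\,(P\theta)(x_i,x_j)\geq 0$ for all $x_1,\dots,x_k\in\RR^n$ and $c_1,\dots,c_k\in\CC$. Using $\theta(I_n,0,I_n)(x_i,x_j)=\exp\{-2x_i^{\top}x_i\}\exp\{-2x_j^{\top}x_j\}$ and setting $d_i=c_i\exp\{-2x_i^{\top}x_i\}$ (the weights are real and positive), the left-hand side equals $\sum_{i,j=1}^{k}d_i d_j^{*}\,P(x_i,x_j)$, and for fixed $x_1,\dots,x_k$ the map $(c_1,\dots,c_k)\mapsto(d_1,\dots,d_k)$ is a linear bijection of $\CC^k$ since no weight vanishes. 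Letting the $x_i$ and the $d_i$ range freely therefore turns the Mercer condition for $P\,\theta(I_n,0,I_n)$ into exactly condition (3), which closes the cycle of equivalences.

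I do not expect a genuine obstacle once Theorem~\ref{t:order} and Theorem~\ref{t:Mercer} are in hand; the argument is essentially bookkeeping. The only three points that need a line of care are: (a) noticing that $r=1$ settles the opening inequality, so that membership in $\iG^{+}(n)$ is used precisely once; (b) checking that the self-adjointness and $L^2$ hypotheses needed to invoke Theorem~\ref{t:order} and Mercer's theorem hold automatically, because $P$ is assumed self-adjoint and every $P\,\theta$ that appears is a Schwartz function; and (c) observing that the rescaling $c_i\mapsto d_i$ in the Mercer step is a bijection exactly because the Gaussian weights $\exp\{-2x_i^{\top}x_i\}$ never vanish.
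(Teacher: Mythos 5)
Your proof is correct and follows essentially the same route as the paper: the same $r=1$ witness for $(I_n,0,I_n)\preceq(A,B,C)$, Theorem~\ref{t:order} for (2)$\Rightarrow$(1), and the rank-one factorization $\theta(I_n,0,I_n)(x,y)=g(x)g^{*}(y)$ combined with Mercer's theorem for (2)$\Leftrightarrow$(3). You merely spell out a few points the paper leaves implicit (that $(I_n,0,I_n)\in\iG^{+}(n)$, the $L^2$/self-adjointness hypotheses, and the nonvanishing-weight bijection $c_i\mapsto d_i$), which is fine.
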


As it is much easier to check \eqref{i:Mercer} than universality, the above claim is especially useful. We demonstrate this by the following example. 

\begin{example} \label{ex} Let $P\colon \RR^2\to \RR$ defined as $P(x,y)=x^{\ell}y^m+x^my^{\ell}$ with $\ell,m\in \NN$. Then $P$ satisfies Claim~\eqref{cl:I0I}\,\eqref{i:Gauss} if and only if $\ell=m$. Indeed, if $\ell=m$ then \eqref{i:Mercer} is clearly satisfied. Assume to the contrary that $\ell>m$. Applying \eqref{i:Mercer} with $c_1=2, c_2=-1$, $x_1=x$, $x_2=y$ and using the notation $\lambda=\frac yx$ we obtain that 
\begin{equation} \label{eq:lm} 
0\leq 8x^{\ell+m}-4(x^{\ell}y^m+x^my^{\ell})+2y^{\ell+m}=2x^{\ell+m}(\lambda^{\ell}-2)(\lambda^m-2).
\end{equation}
We can choose $x,y>0$ such that $\lambda^m<2<\lambda^{\ell}$, which contradicts \eqref{eq:lm}. 
\end{example}
The following claim gives quite simple  necessary \eqref{i:easy1} and also sufficient \eqref{i:easy2} conditions for our relation. 
\begin{claim} \label{cl:prec}
Let $(A_i,B_i,C_i)\in \iG(n)$, $i\in \{0,1\}$. Then 
\begin{enumerate}
\item \label{i:easy1} $(A_0,B_0,C_0)\preceq (A_1,B_1,C_1)$ $\Rightarrow$ $A_1-C_1\geq A_0-C_0$,
\item  \label{i:easy2} $A_1-C_1\geq A_0-C_0$ and $B_1-B_0$ is symmetric $\Rightarrow$ $(A_0,B_0,C_0)\preceq (A_1,B_1,C_1)$.
\end{enumerate}
\end{claim}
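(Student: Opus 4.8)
The plan is to prove the two implications separately; each reduces, via Mercer's criterion (Theorem~\ref{t:Mercer}), to a short computation with the pure Gaussian $\theta$. Throughout, recall that $\theta(A,B,C)$ is automatically self-adjoint and lies in $L^2(\RR^{2n})$ exactly when $A,C$ are positive definite, so in that case $\widehat{\theta}(A,B,C)\ge 0$ is equivalent to the Mercer inequalities \eqref{eq:Mercer}.

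For the first implication, suppose $(A_0,B_0,C_0)\preceq(A_1,B_1,C_1)$, witnessed by some $r\ge 0$, and put $A:=A_1-A_0+rI_n$, $B:=B_1-B_0$, $C:=C_1-C_0+rI_n$, so that $(A,B,C)\in\iG^{+}(n)$; in particular $A,C$ are symmetric positive definite and \eqref{eq:Mercer} holds for $\theta(A,B,C)$. Fix $v\in\RR^n$ and apply \eqref{eq:Mercer} with $k=2$, $x_1=v$, $x_2=-v$. Since $x+y=0$ kills the $B$- and $C$-terms in the exponent while $x-y=0$ kills the $A$-term, one gets $\theta(v,v)=\theta(-v,-v)=e^{-4v^{\top}Cv}$ and $\theta(v,-v)=\theta(-v,v)=e^{-4v^{\top}Av}$; hence the $2\times 2$ matrix with both diagonal entries equal to $e^{-4v^{\top}Cv}$ and both off-diagonal entries equal to $e^{-4v^{\top}Av}$ is positive semidefinite, which forces $e^{-4v^{\top}Cv}\ge e^{-4v^{\top}Av}$, i.e.\ $v^{\top}(A-C)v\ge 0$. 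As $v$ was arbitrary and $A-C$ is symmetric, $A-C\ge 0$; but $A-C=(A_1-C_1)-(A_0-C_0)$, which is exactly $A_1-C_1\ge A_0-C_0$.

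For the second implication, assume $A_1-C_1\ge A_0-C_0$ and that $B:=B_1-B_0$ is symmetric, and set $A:=A_1-A_0$, $C:=C_1-C_0$ (both symmetric). Choose $r\ge 0$ large enough that $\widetilde A:=A+rI_n$ and $\widetilde C:=C+rI_n$ are positive definite; then $\widetilde A-\widetilde C=A-C=(A_1-C_1)-(A_0-C_0)\ge 0$, and it suffices to show $(\widetilde A,B,\widetilde C)\in\iG^{+}(n)$, i.e.\ $\widehat{\theta}(\widetilde A,B,\widetilde C)\ge 0$. Since $B$ is symmetric, $(x-y)^{\top}B(x+y)=x^{\top}Bx-y^{\top}By$, so with the unimodular factor $h(x):=e^{-ix^{\top}Bx}$ we have $\theta(\widetilde A,B,\widetilde C)(x,y)=h(x)\,h(y)^{*}\,\theta(\widetilde A,0,\widetilde C)(x,y)$; absorbing $h(x_i)$ into the coefficients $c_i$ shows that \eqref{eq:Mercer} for $\theta(\widetilde A,B,\widetilde C)$ is equivalent to \eqref{eq:Mercer} for $\theta(\widetilde A,0,\widetilde C)$, so we may take $B=0$. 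Expanding the exponent gives $\theta(\widetilde A,0,\widetilde C)(x,y)=g(x)\,g(y)\,\exp\{2x^{\top}(\widetilde A-\widetilde C)y\}$ with $g(x)=\exp\{-x^{\top}(\widetilde A+\widetilde C)x\}>0$. Writing $\widetilde A-\widetilde C=D^{\top}D$, the kernel $(x,y)\mapsto\exp\{2\langle Dx,Dy\rangle\}$ is positive definite: for any $x_1,\dots,x_k$ the Gram matrix $\bigl(\langle Dx_i,Dx_j\rangle\bigr)$ is positive semidefinite, hence so are all of its entrywise powers (Schur product theorem) and hence so is their convergent sum $\bigl(\exp\{2\langle Dx_i,Dx_j\rangle\}\bigr)$; scaling the $i$-th row and column by $g(x_i)>0$ preserves this. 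Thus \eqref{eq:Mercer} holds for $\theta(\widetilde A,0,\widetilde C)$, so $\widehat{\theta}(\widetilde A,B,\widetilde C)\ge 0$ and $(\widetilde A,B,\widetilde C)\in\iG^{+}(n)$, giving $(A_0,B_0,C_0)\preceq(A_1,B_1,C_1)$. (One could alternatively read this off the explicit positivity test of Claim~\ref{c:positivity}.)

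I expect the first implication to be essentially immediate once the right $2\times 2$ Mercer instance is chosen, and the only real work to be in the second: the key point is that after stripping off the unimodular phase coming from the symmetric part of $B$ and completing the square, positivity of the pure Gaussian is precisely the classical fact that the exponential of a positive semidefinite kernel is positive semidefinite. The symmetry of $B_1-B_0$ is exactly what makes the phase factorization legitimate; dropping it opens a genuine gap between the necessary condition of the first part and the sufficient condition of the second.
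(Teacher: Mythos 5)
Your argument is correct, and its overall architecture mirrors the paper's: both directions are reduced to a statement about a single triple (in the paper this is isolated as Claim~\ref{cl:AC}), the necessity part via exactly the same $k=2$, $x_1=v$, $x_2=-v$ instance of Mercer's criterion, and the sufficiency part begins, as in the paper, by using the symmetry of $B_1-B_0$ to strip off a phase $h(x)h^{*}(y)$ and reduce to $B=0$. Where you genuinely diverge is the final positivity step for $\theta(\widetilde A,0,\widetilde C)$ with $\widetilde A\geq \widetilde C$: the paper invokes the symplectic machinery, namely Claim~\ref{c:positivity} (Williamson's theorem), and checks that the characteristic polynomial $\det(\lambda^{2}I_n-\widetilde A^{-1}\widetilde C)$ has no roots exceeding $1$, whereas you complete the square to write the kernel as $g(x)g(y)\exp\{2\langle Dx,Dy\rangle\}$ and verify Mercer's inequalities directly via the Schur product theorem (entrywise exponentials of Gram matrices are positive semidefinite). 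Your route is more elementary and self-contained — it needs neither the Wigner--Weyl transform nor Williamson's normal form — while the paper's route has the side benefit of exercising the symplectic positivity test it needs anyway elsewhere (e.g.\ in Fact~\ref{f:IB}) and of sidestepping the slightly delicate point, which you handle correctly, that $\widetilde A-\widetilde C\geq 0$ must be factored as $D^{\top}D$ before the exponential-kernel argument applies. Both proofs are complete; yours could even serve as a cleaner standalone proof of Claim~\ref{cl:AC}\,\eqref{i:A0C}.
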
 

\begin{notation}
Let $(A_i,B_i,C_i)\in \iG(n)$, $(i=0,1)$. We define the equivalence relation $(A_0,B_0,C_0)\approx (A_1,B_1,C_1)$ such that 
\begin{equation*} 
(A_0,B_0,C_0)\preceq (A_1,B_1,C_1) \text{ and } (A_1,B_1,C_1)\preceq (A_0,B_0,C_0).  
\end{equation*} 
For two Gaussian kernels $\kappa_{G_0}$ and $\kappa_{G_1}$ we write $\kappa_{G_0}\approx \kappa_{G_1}$ if their defining matrix triples satisfy $(A_0,B_0,C_0)\approx (A_1,B_1,C_1)$.
\end{notation}

The next theorem characterizes the equivalence of the triples $(A,B,C)\in \iG(n)$. 

\begin{theorem} \label{t:equiv} Let $(A_i,B_i,C_i)\in \iG(n)$, $i\in \{0,1\}$. 
The following are equivalent: 
\begin{enumerate} 
\item  $(A_0,B_0,C_0)\approx (A_1,B_1,C_1)$,
\item $A_1-C_1=A_0-C_0$ and $B_1-B_0$ is symmetric. 
\end{enumerate} 
\end{theorem}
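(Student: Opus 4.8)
\emph{Proof proposal.} The implication $(2)\Rightarrow(1)$ is immediate from Claim~\ref{cl:prec}\,\eqref{i:easy2}: if $A_1-C_1=A_0-C_0$ and $B_1-B_0$ is symmetric, then $A_1-C_1\geq A_0-C_0$ gives $(A_0,B_0,C_0)\preceq(A_1,B_1,C_1)$, while $A_0-C_0\geq A_1-C_1$ together with the symmetry of $B_0-B_1$ gives $(A_1,B_1,C_1)\preceq(A_0,B_0,C_0)$, so $(A_0,B_0,C_0)\approx(A_1,B_1,C_1)$. For the converse, suppose $(A_0,B_0,C_0)\approx(A_1,B_1,C_1)$. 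Applying Claim~\ref{cl:prec}\,\eqref{i:easy1} to the two $\preceq$-relations yields $A_1-C_1\geq A_0-C_0\geq A_1-C_1$, hence $A_1-C_1=A_0-C_0$, equivalently $A_1-A_0=C_1-C_0=:D$, a symmetric matrix. It remains to prove that $N:=B_1-B_0$ is symmetric. Unwinding the definition of $\preceq$, there are $r,r'\geq 0$ with $(D+rI_n,N,D+rI_n)\in\iG^+(n)$ and $(-D+r'I_n,-N,-D+r'I_n)\in\iG^+(n)$; in particular $D+rI_n$ and $-D+r'I_n$ are positive definite (so $s:=r+r'>0$), and $\widehat{\theta}(D+rI_n,N,D+rI_n)\geq 0$ and $\widehat{\theta}(-D+r'I_n,-N,-D+r'I_n)\geq 0$.

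The idea is now to multiply these two Gaussian kernels. A short computation gives $\overline{\theta(A,B,C)}=\theta(A,-B,C)$, and by Theorem~\ref{t:Mercer} the operator defined by the complex conjugate of a positive semidefinite continuous $L^2$ kernel is again positive semidefinite (the Hermitian matrix in \eqref{eq:Mercer} is replaced by its entrywise conjugate, which stays positive semidefinite); hence $\widehat{\theta}(-D+r'I_n,N,-D+r'I_n)\geq 0$ as well. The kernels $\theta(D+rI_n,N,D+rI_n)$ and $\theta(-D+r'I_n,N,-D+r'I_n)$ are continuous and lie in $L^2(\RR^{2n})$ (their $A$- and $C$-matrices are positive definite), so by Theorem~\ref{t:Mercer} each satisfies \eqref{eq:Mercer}; by the Schur product theorem their pointwise product satisfies \eqref{eq:Mercer} too. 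But that product is precisely $\theta(sI_n,2N,sI_n)$ --- the exponents add and the two $D$-terms cancel --- and it is again a continuous $L^2$ kernel, so Theorem~\ref{t:Mercer} gives $\widehat{\theta}(sI_n,2N,sI_n)\geq 0$.

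It thus suffices to show that $\widehat{\theta}(sI_n,M,sI_n)\geq 0$ forces $M$ to be symmetric, and then apply this with $M=2N$. Writing $M=M_{\mathrm s}+M_{\mathrm a}$ for the symmetric and antisymmetric parts and expanding the exponent, one obtains the factorization $\theta(sI_n,M,sI_n)(x,y)=u(x)\,\overline{u(y)}\,e^{-2i\,x^{\top}M_{\mathrm a}y}$, where $u(x)=e^{-2s|x|^2-i\,x^{\top}M_{\mathrm s}x}$ is nowhere zero. Multiplying a kernel by $u(x)\overline{u(y)}$ is a congruence of the Hermitian matrices in \eqref{eq:Mercer} by an invertible diagonal matrix, so \eqref{eq:Mercer} for $\theta(sI_n,M,sI_n)$ is equivalent to \eqref{eq:Mercer} for $\ell(x,y):=e^{-2i\,x^{\top}M_{\mathrm a}y}$. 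If $M_{\mathrm a}\neq 0$, the form $(x,y)\mapsto x^{\top}M_{\mathrm a}y$ is surjective onto $\RR$, so we may pick $u_0,v_0$ with $u_0^{\top}M_{\mathrm a}v_0=\pi/2$; testing \eqref{eq:Mercer} for $\ell$ at the three points $0,u_0,v_0$ (using $w^{\top}M_{\mathrm a}w=0$) produces the Hermitian matrix $\left(\begin{smallmatrix}1&1&1\\1&1&-1\\1&-1&1\end{smallmatrix}\right)$, whose determinant equals $-4<0$, contradicting positive semidefiniteness. Hence $M_{\mathrm a}=0$, i.e.\ $M$ is symmetric, which completes the proof.

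I expect the crux to be recognizing this reduction: that combining the two $\preceq$-relations (with one kernel conjugated) through the Schur product collapses everything to the transparent ``scalar'' Gaussian $\theta(sI_n,M,sI_n)$, whose positivity is then refuted by a $3\times 3$ determinant unless $M$ is symmetric. Everything else is bookkeeping with the definitions together with Claim~\ref{cl:prec} and Mercer's theorem.
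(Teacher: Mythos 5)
Your proof is correct, but the hard direction is argued by a genuinely different route than the paper's. After the common first steps (both directions of Claim~\ref{cl:prec} giving $A_1-C_1=A_0-C_0$), the paper uses only \emph{one} of the two $\preceq$-witnesses: it normalizes $(N+rI_n,B_1-B_0,N+rI_n)$ to $(I_n,B_1-B_0,I_n)$ via the already-proved implication together with Theorem~\ref{t:order}, and then invokes Fact~\ref{f:IB}, whose proof goes through the symplectic machinery (Claim~\ref{c:positivity}, Williamson's Theorem~\ref{t:Williamson}) and a comparison of $\lambda^2$-coefficients in a characteristic polynomial. You instead use \emph{both} witnesses: conjugating one kernel (which preserves the Mercer condition \eqref{eq:Mercer}) and taking the Schur/pointwise product — essentially Fact~\ref{f:mu} at the matrix level — collapses everything to $\theta(sI_n,2N,sI_n)$ with $s=r+r'>0$, and then the explicit factorization $u(x)\overline{u(y)}e^{-2i x^{\top}M_{\mathrm a}y}$ plus a $3\times 3$ Mercer matrix of determinant $-4$ shows the antisymmetric part must vanish. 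What each buys: the paper's argument is short given the symplectic toolkit it has already built (and reuses Claim~\ref{c:positivity}), whereas yours is more elementary and self-contained — only Theorem~\ref{t:Mercer} and the Schur product theorem — and in fact your sub-lemma that $\widehat{\theta}(sI_n,M,sI_n)\geq 0$ forces $M$ symmetric gives an independent elementary proof of Fact~\ref{f:IB} itself. One cosmetic remark: the conjugation step needs no Hermitian-matrix detour; conjugating the inequality $\sum_{i,j}c_ic_j^*\kappa(x_i,x_j)\geq 0$ and renaming $c^*$ as $c$ already yields \eqref{eq:Mercer} for $\kappa^*$.
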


Theorem~\ref{t:equiv} implies that the characterization of Gaussian kernels $\kappa_{G_0}\approx \kappa_{G_1}$ is much simpler than of $\kappa_{G_0}\preceq \kappa_{G_1}$. Therefore, it can be effectively applied to improve positivity tests by changing the Gaussian part of a  kernel to an equivalent Gaussian, for a detailed example see the following. 

\subsubsection{Application of our preorder: an example} \label{sss:ex}

Assume that $\kappa\colon \RR^{2n}\to \CC$ is a given self-adjoint kernel defining a trace-class operator $\widehat{\kappa}$, and $\widehat{\kappa}$ has (real) eigenvalues $\{\lambda_i\}_{i\geq 0}$. 
The authors of \cite{Newton} introduced the elementary symmetric polynomials of the eigenvalues, that is, let $e_0=1$ and for each integer $k\geq 1$ let 
\begin{equation} \label{eq:ek} 
e_k \defeq \sum_{0\leq i_1<\dots<i_k} \lambda_{i_1}\cdots \lambda_{i_k}.
\end{equation} 
The following claim might be standard, see \cite[Prop.~2.1]{Newton} for an exact reference.
\begin{claim} The trace-class operator $\widehat{\kappa}$ is positive semidefinite if and only if $e_k\geq 0$ for all $k\geq 1$.
\end{claim}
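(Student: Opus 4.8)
The statement to prove is: a trace-class operator $\widehat{\kappa}$ with real eigenvalues $\{\lambda_i\}_{i\geq 0}$ is positive semidefinite if and only if $e_k\geq 0$ for all $k\geq 1$, where $e_k$ is the $k$-th elementary symmetric polynomial of the eigenvalues as in \eqref{eq:ek}. Since the operator is trace-class, the sequence $(\lambda_i)$ is absolutely summable, so the infinite products defining the $e_k$ converge absolutely and the $e_k$ are well defined. The plan is to reduce everything to an elementary fact about sequences of real numbers: a summable real sequence consists of nonnegative terms if and only if all its elementary symmetric functions are nonnegative.

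\emph{The easy direction.} If $\widehat{\kappa}\geq 0$, then every eigenvalue satisfies $\lambda_i\geq 0$, so each summand $\lambda_{i_1}\cdots\lambda_{i_k}$ in \eqref{eq:ek} is nonnegative, hence $e_k\geq 0$ for all $k\geq 1$. This requires nothing beyond the definition.

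\emph{The converse.} Suppose $\widehat{\kappa}$ is \emph{not} positive semidefinite. Then it has at least one strictly negative eigenvalue; list the negative eigenvalues (with multiplicity) as $\mu_1,\mu_2,\dots$, a finite or infinite absolutely summable sequence of negative reals, and the nonnegative ones as $\nu_1,\nu_2,\dots$. The generating identity $\sum_{k\geq 0} e_k t^k=\prod_{i}(1+\lambda_i t)$ (valid as an entire function of $t$ by absolute summability) factors as $\prod_j (1+\nu_j t)\cdot\prod_j(1+\mu_j t)$. The first factor has all Taylor coefficients nonnegative. The second factor, a product over negative $\mu_j$'s, is where the sign change must appear: I would show that the polynomial/entire function $\prod_j(1+\mu_j t)$ has at least one strictly negative Taylor coefficient. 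The cleanest route is to pick the negative eigenvalue $\mu$ of largest absolute value (it exists since the sequence is summable, hence bounded, and the sup is attained or can be approximated — if attained, take it; one may also argue via a finite-rank truncation, see below) and consider the coefficient of $t$ if there is only one negative eigenvalue (then $e_1$-type contributions force negativity at some order), or more robustly, examine the behavior near the real zero $t=-1/\mu>0$: the full product $\prod_i(1+\lambda_i t)$ vanishes there and changes sign as $t$ crosses $-1/\mu$, which is impossible if all $e_k\geq 0$ since then $\sum e_k t^k$ is nondecreasing and nonnegative for $t\geq 0$.

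Let me make the last argument precise, as I expect it to be the crux. Assume all $e_k\geq 0$. Then for every $t\geq 0$ the series $F(t)=\sum_{k\geq 0} e_k t^k=\prod_i(1+\lambda_i t)$ has only nonnegative terms, so $F(t)\geq e_0=1>0$ for all $t\geq 0$. But if some $\lambda_{i_0}<0$, then at $t_0=-1/\lambda_{i_0}>0$ the product $\prod_i(1+\lambda_i t_0)$ contains the vanishing factor $1+\lambda_{i_0}t_0=0$, so $F(t_0)=0$, contradicting $F(t_0)\geq 1$. (One must check the product genuinely vanishes and is not an indeterminate form: since $(\lambda_i)$ is summable, only finitely many $\lambda_i$ can equal $\lambda_{i_0}$ or produce factors near zero at $t_0$, and the tail $\prod_{|i|\text{ large}}(1+\lambda_i t_0)$ converges to a finite nonzero limit, so $F(t_0)=0$ indeed.) Hence no negative eigenvalue can exist, i.e. $\widehat{\kappa}\geq 0$. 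The main obstacle is purely the bookkeeping for infinitely many eigenvalues — convergence of the infinite product, and ruling out cancellation — but all of this is controlled by trace-class summability; if one prefers to avoid it entirely, one can instead truncate to the finite-rank operator spanned by the $N$ largest-modulus eigenvalues, apply the elementary finite-dimensional fact (the characteristic polynomial $\prod_{i=1}^N(t-\lambda_i)$ has all roots $\geq 0$ iff its coefficients alternate in sign iff all $e_k(\lambda_1,\dots,\lambda_N)\geq 0$), and pass to the limit $N\to\infty$ using that each $e_k$ of the truncation converges to $e_k$.
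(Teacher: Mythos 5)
Your main argument is correct, and it is worth noting that the paper itself does not prove this claim at all: it labels it as ``might be standard'' and cites \cite[Prop.~2.1]{Newton}, so your write-up supplies a self-contained proof where the paper only gives a pointer. The route you take is the natural one: since $\widehat{\kappa}$ is self-adjoint and trace-class, positivity is equivalent (by the spectral theorem) to nonnegativity of all eigenvalues, and the crux is the generating identity $\sum_{k\geq 0}e_k t^k=\prod_i(1+\lambda_i t)$, which is legitimate here because $\sum_i|\lambda_i|<\infty$ gives $|e_k|\leq \Lambda^k/k!$ with $\Lambda=\sum_i|\lambda_i|$, so both sides are entire and the finite partial products converge to both. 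Your contradiction — all $e_k\geq 0$ forces $F(t)\geq e_0=1$ for $t\geq 0$, while a negative eigenvalue $\lambda_{i_0}$ forces $F(-1/\lambda_{i_0})=0$ because one factor vanishes exactly and the tail product is finite and nonzero — is complete, and you correctly dispose of the only genuine subtlety (that the vanishing is not an indeterminate form). One small caution: the alternative you sketch at the very end, truncating to the $N$ largest-modulus eigenvalues and ``passing to the limit,'' does not work as stated, because the hypothesis $e_k\geq 0$ for the full operator does not imply $e_k(\lambda_1,\dots,\lambda_N)\geq 0$ for the truncations — the limit argument runs in the wrong direction. Since this is offered only as an optional variant and your primary argument stands on its own, this is a cosmetic blemish rather than a gap, but you should either delete that remark or rephrase it.
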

By the Plemelj--Smithies formulas\footnote{See also Newton's identities or the Faddeev–LeVerrier algorithm.} for a self-adjoint kernel $\kappa\in L^2(\RR^{2n})$ the quantities $e_k$ can be calculated as   
\begin{equation} \label{eq:det}
e_k =\frac{1}{k!}\left|\begin{array}{ccccc}
M_1  & 1   & 0  & \cdots       \\
M_2  & M_1 & 2  & 0  & \cdots \\
\vdots  &  & \ddots & \ddots   \\
M_{k-1} & M_{k-2} & \cdots & M_1    & k-1 \\
M_k     & M_{k-1} & \cdots & M_2    & M_1
\end{array}\right|,
\end{equation}
where the moments $\{M_j\}_{1\leq j\leq k}$ are defined as 
\begin{equation*}
M_j \defeq \sum_{i=0}^{\infty}\lambda_i^j=\Tr\{ \widehat{\kappa}^j\}
=\int_{\mathbb{R}^{jn}} \, \left[\kappa(x_j,x_1) \prod_{i=1}^{j-1} \kappa(x_i,x_{i+1}) \right]\, \mathrm{d}x_1\cdots \mathrm{d}x_j.
\end{equation*}
It was demonstrated in \cite{Newton} that this calculation is effective for polynomial Gaussian operators, and the authors considered the family of kernels 
\begin{equation*}
\kappa_{\gamma}(x,y)=\frac{4}{\sqrt{\pi}(2+\gamma)} \left(\gamma(x+y)^2-(x-y)^2+1\right) \kappa_G(x,y)
\end{equation*}
were considered, where $\gamma\geq 0$ is a parameter\footnote{The positivity of $\widehat{\kappa}_{\gamma}$ requires $\gamma\geq 0$, and $\widehat{\kappa}_{\gamma}$ is easily seen to be positive at $\gamma=1$.}, and
\begin{equation*}
\kappa_G(x,y)=\exp\left[-\frac 32 (x-y)^2-(x+y)^2\right].
\end{equation*}
If $\widehat{\kappa}_{\gamma}\geq 0$ for some $\gamma$, then by Theorem~\ref{t:order} positivity is preserved whenever we replace $\kappa_G$ by another Gaussian kernel $\kappa_{G'}$ such that $\kappa_{G}\preceq \kappa_{G'}$. For the sake of simplicity we only consider Gaussian kernels $\kappa_{G'}$ such that $\kappa_{G'}\approx \kappa_G$. By Theorem~\ref{t:equiv} for all parameters $\gamma\geq 0$ and $\delta>-1$ we obtain the kernels
\begin{equation*}
\kappa^{\delta}_{\gamma}(x,y)=\frac{4}{\sqrt{\pi}(2+\gamma)} \left(\gamma(x+y)^2-(x-y)^2+1\right) \kappa^{\delta}_G(x,y), 
\end{equation*}
where 
\begin{equation*}
\kappa^{\delta}_G(x,y)=\frac{(1+\delta)^{3/2}(2+\gamma)}{2+2\delta+\gamma} \exp\left[-\left(\frac32+\delta\right) (x-y)^2-(1+\delta)(x+y)^2\right]
\end{equation*}
\begin{figure}
\begin{center}
\includegraphics[angle=270,width=\textwidth]{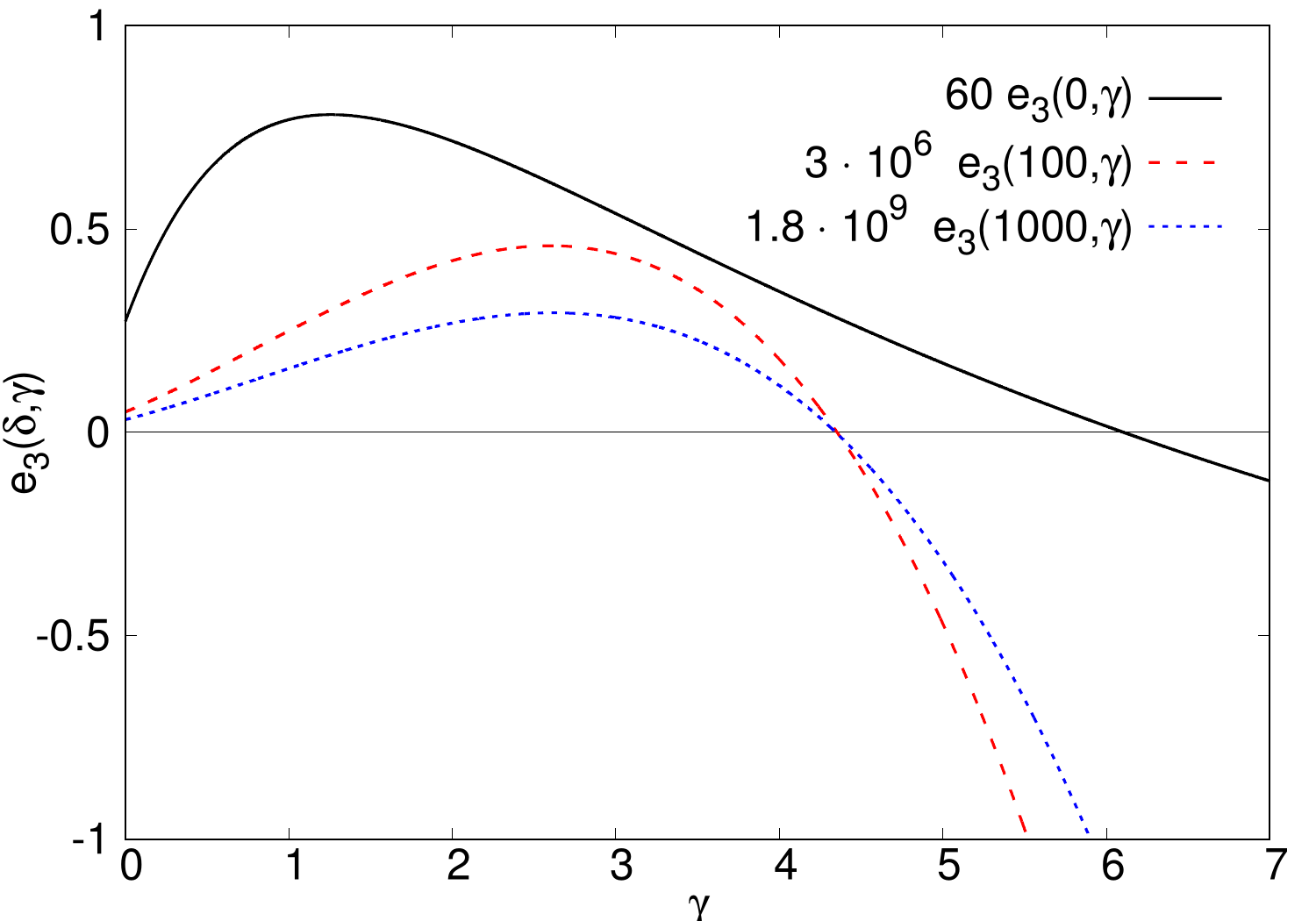}
\end{center}
\caption{The quantities $e_3(\delta,\gamma)$ as a function of $\gamma$ for the parameters $\delta=0,100,1000$ scaled appropriately. As $\delta$ is larger, the root $\iZ_3(\delta)$ becomes smaller, so we obtain a larger set of parameters $\gamma$ where positivity fails.}
\label{fig:ek}
\end{figure}
are our modified Gaussian kernels\footnote{Note that the normalizing constants $\frac{4}{\sqrt{\pi}(2+\gamma)}$ and $\frac{(1+\delta)^{3/2}(2+\gamma)}{2+2\delta+\gamma}$ are only to ensure that $\Tr(\widehat{\kappa}_{\gamma})=\Tr(\widehat{\kappa}^{\delta}_{\gamma})=1$, which was practical for us during the numerical calculations.}. Let us denote by $e_k(\delta,\gamma)$ the values of $e_k$ according to \eqref{eq:ek} for the operator $\widehat{\kappa}^{\delta}_{\gamma}$, see also \eqref{eq:det}. Furthermore, for any integer $k\geq 1$ and each $\delta>-1$ let $\iZ_k(\delta)$ denote the only $\gamma_0\geq 0$ for which $e_k(\delta,\gamma_0)=0$, and let $\iZ_k(\infty)=\displaystyle{\lim_{\delta\to \infty}} \iZ_k(\delta)$\footnote{We only consider parameters $k$ and $\delta$ for which $\iZ_k(\delta)$ is well-defined and the limit $\iZ_k(\infty)$ exists, we conjecture that this is the case in general.}, see Figure~\ref{fig:ek} for the illustration of $e_3(\delta,\gamma)$ for some parameters $\delta$. For our parameters $e_k(\delta,\gamma)$ is positive if $\gamma<\iZ_k(\delta)$ and negative if $\gamma>\iZ_k(\delta)$. Thus $\widehat{\kappa}^{\delta}_{\gamma}$ is not positive semidefinite for $\gamma>\iZ_k(\delta)$, so $\widehat{\kappa}_{\gamma}$ is not positive semidefinite either by the equivalence of the Gaussian kernels $\kappa_G^{\delta}$. This implies that
\begin{equation} \label{eq:Zk}
\text{if $\gamma>\inf\{\iZ_k(\delta): \delta\in (-1,\infty]\}$, then $\widehat{\kappa}_{\gamma}$ is not positive semidefinite}.
\end{equation}
\begin{table} 
\begin{center}
\begin{tabular}{|c | c|c|c|}
\hline 
 $\iZ_k(\delta)$ &  $k=3$  & $k=4$  & $k=5$  \\
\hline
$\delta=0$ &  6.10781  &  5.07931 &  4.25293    \\
\hline
$\delta=10$ & 4.43150  &  4.38966 &  4.05059 \\
\hline
$\delta=50$ & 4.36304  &  4.35449 &  4.04157 \\
\hline
$\delta=250$ & 4.34880 & 4.34708 & 4.03973 \\
\hline
$\delta=\infty$ & 4.34521 & 4.34521 & 4.03924 \\
\hline
\end{tabular} 
\end{center}
\caption{The table contains the values of $\iZ_k(\delta)$ for $k=3,4,5$ and $\delta=0,10,50,250,\infty$. Note that $\iZ_k(\delta)$ is decreasing in $\delta$, and we can see that switching from the original $\iZ_k(0)$ to the optimal $\iZ_k(\infty)$ is a significant improvement for every $k$.}
\label{t:tab}
\end{table}
By \eqref{eq:Zk} our goal is to find the minima of the functions $\delta\mapsto \iZ_k(\delta)$.
We calculated $e_k(\delta,\gamma)$ for $k=3,4,5$ explicitly with the program Mathematica using \eqref{eq:det}; we obtained a positive denominator, and the numerator appeared as a polynomial of degree $k$ in $\gamma$ with the coefficients depending on $\delta$. Then we numerically calculated $\iZ_k(\delta)$ for $k=3,4,5$ by solving the polynomial equations. Surprisingly, we could even calculate $Z_k(\infty)$ without any numerical error by obtaining the limit polynomial as $\delta\to \infty$: we simply divide the polynomial by its leading coefficient and take the limit as $\delta\to \infty$. For $k=3,4$ these limit polynomials are 
\begin{equation*} 
\gamma^3-\frac 52 \gamma^2-\frac{15}{2}\gamma-\frac 94 \quad \text{and} \quad  \gamma^4-\frac{10}{3} \gamma^3-\frac{85}{3}\gamma^2-21\gamma-\frac{15}{4},
\end{equation*} 
respectively. Calculating their only positive roots implies that
\begin{equation*} \iZ_3(\infty)=\iZ_4(\infty)=2+\sqrt{\frac{11}{2}}. 
\end{equation*}
Similarly, for $k=5$ we obtain the limit polynomial 
\begin{equation*} 
P(\gamma)\defeq \gamma^5+\frac{125}{24}\gamma^4-\frac{247}{12} \gamma^3-\frac{245}{4}\gamma^2-\frac{205}{8}\gamma-\frac{75}{32}.
\end{equation*}
Factorizing as
\begin{equation*} P(\gamma)=\frac{1}{96}(6\gamma^2+44\gamma+15)(16\gamma^3-34\gamma^2-120\gamma-15)
\end{equation*} 
yields that 
\begin{equation*}
\iZ_5(\infty)\approx 4.03924 \text{ is the only positive root of } 16\gamma^3-34\gamma^2-120\gamma-15.  
\end{equation*}
See Table~\ref{t:tab} for the summary of these values, and the sizeable gaps between the original $\iZ_k(0)$ and the optimal $\iZ_k(\infty)$. Note that $\iZ_5(\infty)\approx 4.03924$ is close to $\iZ_{8}(0)\approx 4.03021$ calculated in \cite{Newton}. Calculating $e_8(0,\gamma)$ is much more difficult than of $e_5(\delta,\gamma)$ as the length of the formulas grow super-exponentially in $k$, therefore our method turns out to be more efficient than the original one from \cite{Newton}.

\subsection{Entanglement} \label{ss:entanglement}
Here we enclose an introduction of entanglement, then we deduct some consequences of our earlier positivity results. 
A physical system in quantum mechanics can be described by a density operator $\widehat{\rho} \colon \mathcal{H} \rightarrow \mathcal{H}$ acting on a complex Hilbert space $\mathcal{H}$, see e.\,g.~John von Neumann's original works~\cite{Neumann,Neumann1927}. Here we will consider the Hilbert space of square integrable complex functions $\mathcal{H}=L^2\left(\mathbb{R}^{n}\right)$, where $n$ is the number of degrees of freedom in our physical system. Our indices $1,\dots,n$ correspond to variables which might, for example, represent the location or moment of some particles. We can divide our system to $m$ physical subsystems by partitioning our index set $\{1,\dots,n\}$ into $m$ pairwise disjoint sets $P_1,\dots, P_m$ of sizes $d_1,\dots, d_m$, respectively, where $\sum^{m}_{k=1} d_k=n$. For any $k\in \{1,\dots,m\}$ let $\iH_k=L^2\left(\RR^{P_k}\right)\cong L^2(\RR^{d_k})$, which is the Hilbert space belonging to the $k$th physical subsystem. This partition decomposes $\mathcal{H}$ to the tensor product 
\begin{equation*}
\mathcal{H}= \bigotimes_{k=1}^m \mathcal{H}_k.
\end{equation*}
For more information on $C^{*}$-algebras see e.\,g.~\cite{Murphy}.

\begin{definition} \label{d:sep}
Let $\widehat{\rho}\colon L^2\left(\mathbb{R}^{n}\right)\to L^2\left(\mathbb{R}^{n}\right)$ be a density operator and assume that $m$ physical subsystems are given by an $m$-element partition $\mathcal{P}=\{P_1,\dots,P_m\}$ of $\{1,\dots, n\}$, and let $\iH_1,\dots,\iH_m$ be the corresponding Hilbert spaces. We call $\widehat{\rho}$ \emph{separable with respect to $\mathcal{P}$} if it can be written as
\begin{equation}
\widehat{\rho}=\sum_{j=1}^{\infty} \pi_j \bigotimes_{k=1}^m  
  \widehat{\rho}^{\, (k)}_j, \text{ where } \pi_j \geq 0 \text{ and }  \sum_{j=1}^{\infty} \pi_j=1,
\label{eq: separablity_general_case}
\end{equation}
and $\widehat{\rho}^{ \, (k)}_j \colon \iH_k \to \iH_k$ are density operators for all $j\geq 1$. We say that $\widehat{\rho}$ is \emph{entangled with respect to $\mathcal{P}$} if it cannot be written as \eqref{eq: separablity_general_case}. 
\end{definition}

\begin{definition}
Let $\iP$ and $\iP'$ be two partitions of $\{1,\dots,n\}$. We say that $\iP$ is a \emph{refinement} of $\iP'$ if for all $P\in \iP$ there is a $P'\in \iP'$ such that $P\subset P'$. 
\end{definition}

The definition of separability easily implies the following, which allows us to prove multipartite entanglement by proving bipartite entanglement for a coarser two-element partition. 

\begin{fact} Assume that partition $\iP$ is a refinement of $\iP'$. If the density operator $\widehat{\rho}$ is separable with respect to $\iP$, then it is separable with respect to $\iP'$. Equivalently, if $\widehat{\rho}$ is entangled with respect to $\iP'$, then it is entangled with respect to $\iP$, too.
\end{fact}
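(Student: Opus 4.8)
The plan is to prove the first implication directly from Definition~\ref{d:sep} by regrouping tensor factors, and then obtain the ``Equivalently'' clause as its contrapositive. First I would fix notation: write $\iP=\{P_1,\dots,P_m\}$ and $\iP'=\{P'_1,\dots,P'_{m'}\}$, with associated Hilbert spaces $\iH_k=L^2(\RR^{P_k})$ and $\iH'_\ell=L^2(\RR^{P'_\ell})$. Since $\iP$ is a refinement of $\iP'$, for each $k\in\{1,\dots,m\}$ there is a unique index $\ell(k)\in\{1,\dots,m'\}$ with $P_k\subset P'_{\ell(k)}$; because $\iP$ and $\iP'$ are partitions, the fibers $\{k:\ell(k)=\ell\}$ are pairwise disjoint and their union is $\{1,\dots,m\}$, and $P'_\ell=\bigsqcup_{k:\ell(k)=\ell}P_k$ for every $\ell$. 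The canonical isomorphism of $L^2$-spaces over a disjoint union of index sets then gives $\iH'_\ell\cong\bigotimes_{k:\ell(k)=\ell}\iH_k$, compatibly with the global identification $\iH=\bigotimes_{\ell=1}^{m'}\iH'_\ell=\bigotimes_{k=1}^{m}\iH_k$.

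Now suppose $\widehat{\rho}$ is separable with respect to $\iP$, so by \eqref{eq: separablity_general_case} we have $\widehat{\rho}=\sum_{j=1}^{\infty}\pi_j\bigotimes_{k=1}^{m}\widehat{\rho}^{\,(k)}_j$ with $\pi_j\geq 0$, $\sum_j\pi_j=1$, and each $\widehat{\rho}^{\,(k)}_j\colon\iH_k\to\iH_k$ a density operator. For each $j$ and each $\ell\in\{1,\dots,m'\}$ I would set
\[
\widehat{\sigma}^{\,(\ell)}_j\defeq\bigotimes_{k:\,\ell(k)=\ell}\widehat{\rho}^{\,(k)}_j\colon\ \iH'_\ell\to\iH'_\ell .
\]
A finite tensor product of density operators is again a density operator: it is positive semidefinite and self-adjoint as a tensor product of positive semidefinite self-adjoint operators, and its trace equals the product of the traces, hence $1$. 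So each $\widehat{\sigma}^{\,(\ell)}_j$ is a density operator on $\iH'_\ell$. Regrouping the $m$-fold tensor product according to the fibers of $\ell(\cdot)$ and using the Hilbert space identification above yields $\bigotimes_{k=1}^{m}\widehat{\rho}^{\,(k)}_j=\bigotimes_{\ell=1}^{m'}\widehat{\sigma}^{\,(\ell)}_j$ for every $j$, whence
\[
\widehat{\rho}=\sum_{j=1}^{\infty}\pi_j\bigotimes_{\ell=1}^{m'}\widehat{\sigma}^{\,(\ell)}_j ,
\]
which is a decomposition of the form \eqref{eq: separablity_general_case} for the partition $\iP'$. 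Hence $\widehat{\rho}$ is separable with respect to $\iP'$. Finally, the second sentence of the statement is logically equivalent to the first by contraposition, so no further argument is needed.

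I do not expect a genuine obstacle here: the content is bookkeeping rather than analysis. The one point to be careful about is that the reordering of tensor factors and the identification $\bigotimes_{k}\iH_k\cong\bigotimes_{\ell}\iH'_\ell$ must be carried out consistently, which is exactly where the hypothesis that $\iP$ refines the partition $\iP'$ (so that the fibers of $\ell(\cdot)$ are disjoint and exhaustive) is used. Convergence of the series $\sum_j\pi_j(\cdot)$ and the normalization $\sum_j\pi_j=1$ are inherited verbatim from the given $\iP$-decomposition, and positivity together with the trace-one property pass through the finite tensor products forming the $\widehat{\sigma}^{\,(\ell)}_j$.
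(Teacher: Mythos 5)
Your proposal is correct and is precisely the regrouping argument the paper has in mind: the paper states this Fact without proof, remarking only that it follows easily from the definition of separability, and your bookkeeping (grouping the $\iP$-factors along the blocks of $\iP'$, noting that finite tensor products of density operators are density operators, and keeping the same weights $\pi_j$) is exactly that easy argument. No discrepancies to report.
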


From now on, we only consider the bipartite case, that is, a $2$-element partition $\iP=\{P_1,P_2\}$ of our index set $\{1,\dots, n\}$ with given sizes $d_1,d_2$. We consider the separability of density operators $\widehat{\rho}$ with respect to $\iP$. By varying the indices if necessary, we may assume without loss of generality that $P_1=\{1,\dots,d_1\}$ and $P_2=\{d_1+1,\dots,n\}$. Recall that the Wigner--Weyl transform \eqref{eq:Wigner-transform} and its inverse \eqref{eq:Inverse Wigner-transform} provide the connection between the phase space and the position representations. If $\widehat{\rho}$ is separable with respect to $\iP$, similarly to \eqref{eq: separablity_general_case} it can be written in phase space representation as
\begin{equation*}
W(x,p)=W\big(x^{(1)},p^{(1)};x^{(2)},p^{(2)}\big) =\sum_{j=1}^{\infty} \pi_j W^{(1)}_j\big(x^{(1)},p^{(1)}\big)  \cdot W^{(2)}_j\big(x^{(2)},p^{(2)}\big),
\end{equation*}
where $\pi_j \geq 0$ and $\sum_{j=1}^{\infty} \pi_j=1$, and $W^{(i)}_j$ is the phase space representation of the density operator $\widehat{\rho}^{\, (i)}_j$ coming from Definition~\ref{d:sep} for all $i\in \{1,2\}$ and $j\geq 1$, and we used the notation 
\begin{align*}
x&=(x_1,\dots, x_n)^{\top},~
x^{(1)}=\left(x_{1},\dots,x_{d_1}\right)^{\top}, \text{ and } x^{(2)}=\left(x_{d_1+1},\dots,x_{n}\right)^{\top} ; \\ 
p&=(p_1,\dots,p_n)^{\top},~
p^{(1)}=\left(p_{1},\dots,p_{d_1}\right)^{\top}, \text{ and } p^{(2)}=\left(p_{d_1+1},\dots,p_{n}\right)^{\top}.
\end{align*} 
The \emph{partial transpose} (PT) in the first partition is a transformation on the Wigner function as follows, see e.\,g.~\cite{Perescrit,SimonR}:
\begin{equation*}
W\big(x^{(1)},p^{(1)};x^{(2)},p^{(2)}\big) \rightarrow \widetilde{W}(x,p)\defeq W\big(x^{(1)},-p^{(1)};x^{(2)},p^{(2)}\big).
\end{equation*}
A similar definition for the partial transpose in the second partition is equivalent for all practical purposes, see e.\,g.~\cite{GUHNE} for more details. In fact, this property easily follows in the position representation. The following necessary condition of separability were independently found by Peres \cite{Perescrit} and the Horodecki family \cite{HORODECKI19961}. 

\begin{theorem}[Peres--Horodecki criterion] \label{t:PH} If $\widehat{\rho}$ is separable, then $\widetilde{W}(x,p)$ defines a density operator.
\end{theorem}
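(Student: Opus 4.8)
The plan is to establish the Peres--Horodecki criterion by showing that the partial transpose at the level of the Wigner function corresponds, via the Wigner--Weyl transform, to a well-understood operation at the level of kernels and of density operators, and then to verify that this operation preserves positivity and trace whenever $\widehat{\rho}$ is separable. First I would translate the partial transpose on $W$ into the position representation using \eqref{eq:Inverse Wigner-transform}: a short computation shows that flipping the sign of the momentum variables $p^{(1)}$ belonging to the first subsystem amounts, in the kernel $\rho(x,y)$, to swapping the first-subsystem arguments, i.e. the partially transposed kernel is $\widetilde{\rho}(x^{(1)},x^{(2)};y^{(1)},y^{(2)}) = \rho(y^{(1)},x^{(2)};x^{(1)},y^{(2)})$. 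This is the standard fact that PT in phase space is ordinary transposition on the first tensor factor; I would record it as a lemma, checking it directly from \eqref{eq:Wigner-transform}--\eqref{eq:Inverse Wigner-transform} with the change of variables $p^{(1)}\mapsto -p^{(1)}$.

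Next I would verify that on a product operator $\widehat{\rho}^{(1)}\otimes \widehat{\rho}^{(2)}$ with $\widehat{\rho}^{(1)}\geq 0$, this kernel operation gives exactly $(\widehat{\rho}^{(1)})^{\top}\otimes \widehat{\rho}^{(2)}$, where $(\widehat{\rho}^{(1)})^{\top}$ is the transpose of $\widehat{\rho}^{(1)}$ in the position basis. Since transposition of a kernel sends $\kappa(x,y)\mapsto \kappa(y,x)$, and a positive semidefinite kernel satisfies $\kappa(y,x)=\kappa^{*}(x,y)$, the transposed operator has kernel $\kappa^{*}(x,y)$, hence is again positive semidefinite (its Mercer sums \eqref{eq:Mercer} are the complex conjugates of the original ones) and has the same trace by \eqref{eq:trace}. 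Thus $(\widehat{\rho}^{(1)})^{\top}$ is again a density operator on $\iH_1$, and therefore $(\widehat{\rho}^{(1)})^{\top}\otimes\widehat{\rho}^{(2)}$ is a density operator on $\iH$.

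Then I would assemble the proof for separable $\widehat{\rho}$. Writing $\widehat{\rho}=\sum_j \pi_j\,\widehat{\rho}^{(1)}_j\otimes \widehat{\rho}^{(2)}_j$ as in \eqref{eq: separablity_general_case} and applying the kernel form of PT term by term (the transform is linear, and one checks it passes through the convergent sum, e.g. in Hilbert--Schmidt norm), the partially transposed kernel is $\sum_j \pi_j\,(\widehat{\rho}^{(1)}_j)^{\top}\otimes \widehat{\rho}^{(2)}_j$. By the previous paragraph each summand is a density operator, the weights $\pi_j\geq 0$ sum to $1$, so $\widetilde{W}$ is the Wigner function of a convex combination of density operators, which is again a density operator. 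This is exactly the claimed conclusion.

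The main obstacle I anticipate is purely bookkeeping rather than conceptual: pinning down the exact correspondence between the momentum-flip on $W$ and the argument-swap on $\rho$ with the paper's specific Fourier conventions in \eqref{eq:Wigner-transform}--\eqref{eq:Inverse Wigner-transform} (the factors of $2\pi$, the $\tfrac{x+y}{2}$ centering, and the partition of variables into $x^{(1)},x^{(2)}$), and making sure the infinite sum in \eqref{eq: separablity_general_case} survives the transform; the latter is handled by noting that PT is an isometry on Hilbert--Schmidt operators, so convergence in \eqref{eq: separablity_general_case} in trace norm (hence in Hilbert--Schmidt norm) is preserved. Everything else---positivity and trace invariance under transposition---is immediate from Mercer's theorem (Theorem~\ref{t:Mercer}) and the trace formula \eqref{eq:trace}.
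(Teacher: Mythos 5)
Your argument is correct: it is the standard (and essentially only) proof of the easy direction of the Peres--Horodecki criterion --- identify the momentum flip $p^{(1)}\mapsto -p^{(1)}$ on $W$ with the swap $x^{(1)}\leftrightarrow y^{(1)}$ in the kernel via \eqref{eq:Wigner-transform}--\eqref{eq:Inverse Wigner-transform}, observe that transposition of a positive trace-one kernel yields the conjugate kernel and hence again a density operator (Mercer sums with conjugated coefficients, trace unchanged by \eqref{eq:trace}), and pass through the convex combination \eqref{eq: separablity_general_case} by linearity and trace-norm convergence. Note that the paper itself does not prove Theorem~\ref{t:PH}; it is quoted as a known result of Peres and the Horodeckis, so there is no internal proof to compare against, but your write-up matches the canonical argument and its convergence caveats are handled adequately.
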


It appears that the another direction of the above theorem also holds if $\widehat{\rho}_G$ is Gaussian and one set in the partition contains a single element, that is, either $d_1=1$ or $d_2=1$. The following theorem was first proved in the case $d_1=d_2=1$ by Simon \cite{SimonR}, and the proof was generalized for the $d_1=1$ or $d_2=1$ case by Werner and Wolf \cite{WernerWolf}. Later Lami, Serafini and Adesso \cite{Lami} simplified the proof and summarized the current status of knowledge about this topic.

\begin{theorem}[Werner--Wolf] \label{t:WW} Let $\widehat{\rho}_G$ be Gaussian and let $d_1=1$ or $d_2=1$. Then $\widehat{\rho}_G$ is separable if and only if $\widetilde{W}_G(x,p)$ represents a Gaussian density operator.
\end{theorem}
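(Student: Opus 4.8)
The statement is a classical result; the plan is to follow the covariance-matrix approach of Simon and of Werner--Wolf. Since displacements in phase space affect neither separability nor the partial transpose (cf.\ the footnote after \eqref{eq:polynomial Gaussian_kernel_def}), I would first reduce to the case where $\widehat{\rho}_G$ has zero mean, so that it is completely determined by its $2n\times 2n$ real symmetric covariance matrix $\sigma$. Fixing the symplectic form $\Omega=\Omega_1\oplus\Omega_2$ adapted to the bipartition $\iP=\{P_1,P_2\}$, the fact that $W_G$ represents a density operator is equivalent to the Robertson--Schr\"odinger/Williamson uncertainty relation $\sigma+\tfrac i2\Omega\succeq 0$ (the phase-space form of \eqref{SRI}; see \cite{Serafini}). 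Writing $\sigma$ in block form adapted to $\iP$, with diagonal blocks $\alpha$ (subsystem~$1$) and $\beta$ (subsystem~$2$) and off-diagonal block $\gamma$, the partial transpose in the first subsystem acts as $\sigma\mapsto\widetilde{\sigma}=\Lambda\sigma\Lambda$, where $\Lambda=\Lambda_1\oplus I$ flips the sign of the momenta of subsystem~$1$; hence ``$\widetilde{W}_G(x,p)$ represents a Gaussian density operator'' means exactly $\widetilde{\sigma}+\tfrac i2\Omega\succeq 0$. Since $\Lambda_1\Omega_1\Lambda_1=-\Omega_1$ and $\Lambda$ acts trivially on subsystem~$2$, conjugating by $\Lambda$ rewrites this as $\sigma+\tfrac i2\bigl((-\Omega_1)\oplus\Omega_2\bigr)\succeq 0$.

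The second ingredient is the standard domination lemma: a zero-mean Gaussian state with covariance matrix $\sigma$ is separable with respect to $\iP$ if and only if there exist valid covariance matrices $\sigma_k$ on subsystem $k$ (that is, $\sigma_k+\tfrac i2\Omega_k\succeq 0$) with $\sigma\succeq\sigma_1\oplus\sigma_2$. I would prove the ``if'' direction by writing $\sigma=(\sigma_1\oplus\sigma_2)+\Delta$ with $\Delta\succeq 0$ and observing that the Gaussian with covariance $\sigma$ is then the product Gaussian with covariance $\sigma_1\oplus\sigma_2$ subjected to a random phase-space displacement distributed according to the (possibly degenerate) classical Gaussian with covariance $\Delta$; since a random displacement of a product state is a classical mixture of product states, this is of the form \eqref{eq: separablity_general_case}. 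For the ``only if'' direction I would approximate a separable state by finite convex combinations of products of arbitrary states, apply the uncertainty relation to each tensor factor to bound the covariance matrix of every product term from below by a block-diagonal one, use convexity of the valid-covariance set and of the block structure to pass to the convex combination, and then take a limit; the only delicate point is a boundedness/closedness argument ensuring that the sequence of block-diagonal lower bounds has a convergent subsequence.

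It then remains to prove, under the hypothesis $d_1=1$ (the case $d_2=1$ being symmetric), that $\widetilde{\sigma}+\tfrac i2\Omega\succeq 0$ forces the existence of such $\sigma_1,\sigma_2$. Here I would exploit that subsystem~$1$ is a single mode: by a local symplectic transformation of subsystem~$1$ (which preserves separability, the bipartition, and all the relevant inequalities) I can bring the $2\times 2$ block $\alpha$ to Williamson normal form $\alpha=aI_2$ with $a=\sqrt{\det\alpha}\geq\tfrac12$. If $a=\tfrac12$ the marginal of subsystem~$1$ is pure, which forces $\widehat{\rho}_G$ to be a product state, so we may assume $a>\tfrac12$. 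Now I would look for $\sigma_1$ among the \emph{pure} single-mode covariance matrices with $\sigma_1\prec\alpha$ and take $\sigma_2$ to be a valid matrix just below the Schur complement $\beta-\gamma^{\top}(\alpha-\sigma_1)^{-1}\gamma$; by construction $\sigma\succeq\sigma_1\oplus\sigma_2$, so — using that the valid-covariance set is upward closed — the whole problem collapses to showing that this Schur complement is itself a valid $(n-1)$-mode covariance matrix, i.e.\ that $\beta-\gamma^{\top}(\alpha-\sigma_1)^{-1}\gamma+\tfrac i2\Omega_2\succeq 0$ for some pure $\sigma_1\prec aI_2$.

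To verify this I would take Schur complements of the two available linear matrix inequalities — the genuine uncertainty relation $\sigma+\tfrac i2\Omega\succeq 0$ and the partial-transpose inequality $\sigma+\tfrac i2((-\Omega_1)\oplus\Omega_2)\succeq 0$ — with respect to the $2\times 2$ corner, and combine them. The point is that for a scalar corner $\alpha=aI_2$ these two Schur complements of $\beta+\tfrac i2\Omega_2$ differ only in the sign of a single antisymmetric term proportional to $\gamma^{\top}\Omega_1\gamma$, and the freedom in choosing the pure $\sigma_1$ (equivalently, a squeezing strength and direction, together with the residual $SO(2)$ rotation of the rows of $\gamma$) is precisely enough to absorb that term. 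I expect this optimization over pure single-mode covariance matrices — making rigorous that the combination of the two Schur complements produces a valid $\sigma_2$ exactly when the corner is one mode — to be the technical heart of the proof; everything preceding it is essentially bookkeeping, and the construction genuinely uses $d_1=1$, as the statement fails for $d_1,d_2\geq 2$.
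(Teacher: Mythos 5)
Be aware that the paper does not prove Theorem~\ref{t:WW} at all: it is stated as a quoted result, attributed to Simon \cite{SimonR} for $d_1=d_2=1$, to Werner and Wolf \cite{WernerWolf} for the general $1$ \emph{versus} $n$ case, and to Lami, Serafini and Adesso \cite{Lami} for a simplified treatment. So there is no in-paper argument to compare against; what you have written is a reconstruction of the proof from that literature, and indeed your outline follows the standard covariance-matrix route of those references: reduce to zero mean, express both the bona fide condition and the PPT condition as the linear matrix inequalities $\sigma+\tfrac i2\Omega\succeq 0$ and $\sigma+\tfrac i2\bigl((-\Omega_1)\oplus\Omega_2\bigr)\succeq 0$, characterize Gaussian separability by the existence of valid $\sigma_1,\sigma_2$ with $\sigma\succeq\sigma_1\oplus\sigma_2$ (random-displacement argument for sufficiency, convexity plus a limiting argument for necessity), and then exploit $d_1=1$ via a Williamson reduction of the $2\times 2$ corner and Schur complements. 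All of these reductions are correct, including the observation that a pure one-mode marginal forces a product state and the Schur-complement equivalence between $\sigma\succeq\sigma_1\oplus\sigma_2$ and $\sigma_2\preceq\beta-\gamma^{\top}(\alpha-\sigma_1)^{-1}\gamma$.

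However, as a proof your proposal has a genuine gap exactly where you say the technical heart lies. The entire content of the $1$ \emph{versus} $n$ theorem is the claim that the two Schur complements coming from the uncertainty and PPT inequalities can be combined, by optimizing over pure single-mode $\sigma_1\prec aI_2$ (squeezing strength and direction), so that $\beta-\gamma^{\top}(\alpha-\sigma_1)^{-1}\gamma+\tfrac i2\Omega_2\succeq 0$; you assert that the freedom in $\sigma_1$ is ``precisely enough to absorb'' the antisymmetric term proportional to $\gamma^{\top}\Omega_1\gamma$, but you do not exhibit the choice of $\sigma_1$ nor verify the resulting matrix inequality, and this step is not routine bookkeeping — it is the part of \cite{WernerWolf} and \cite{Lami} that fails for $d_1,d_2\geq 2$ and must use one-mode-specific structure. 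A secondary, smaller gap is the necessity direction of the domination lemma: passing from an infinite convex decomposition \eqref{eq: separablity_general_case} to a single valid block-diagonal lower bound requires the boundedness/closedness argument you only name (one must, e.g., control second moments of the factors and extract a convergent subsequence of the averaged block-diagonal matrices). Until the single-mode optimization is carried out explicitly, the proposal is an accurate road map of the known proof rather than a proof.
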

 
The kernel $\rho(x,y)$ in position representation can be given according to \eqref{eq:Inverse Wigner-transform}, which is separable if and only if we can write it in the following form:
\begin{equation*}
\rho(x,y): =\rho\big(x^{(1)},y^{(1)};x^{(2)},y^{(2)}\big)=\sum_{j=1}^{\infty} \pi_j \rho^{(1)}_j\big(x^{(1)},y^{(1)}\big) \rho^{(2)}_j\big(x^{(2)},y^{(2)}\big),
\end{equation*}
where $\pi_j \geq 0$ and $\sum_{j=1}^{\infty} \pi_j=1$, and $\rho^{(i)}_j$ is the position representation of the density operator $\widehat{\rho}^{\, (i)}_j$ for all $i\in \{1,2\}$ and $j\geq 1$, and we used the notation 
\begin{align*}
x&=(x_1,\dots, x_n)^{\top},~
x^{(1)}=\left(x_{1},\dots,x_{d_1}\right)^{\top}, \text{ and } x^{(2)}=\left(x_{d_1+1},\dots,x_{n}\right)^{\top} ; \\ 
y&=(y_1,\dots,y_n)^{\top},~
y^{(1)}=\left(y_{1},\dots,y_{d_1}\right)^{\top}, \text{ and } y^{(2)}=\left(y_{d_1+1},\dots,y_{n}\right)^{\top},
\end{align*} 
see e.\,g.~\cite[Section 6.3]{Murphy}. 
The condition $\Tr(\widehat{\rho})=1$ in position representation means
\begin{equation} \label{eq:norm}
\int_{\mathbb{R}^n} \rho(x,x) \, \mathrm{d} x=1.
\end{equation}
The PT operation in position representation is given as
\begin{equation*}
\rho(x,y)=\rho\big(x^{(1)},y^{(1)};x^{(2)},y^{(2)}) \rightarrow 
 \widetilde{\rho}(x,y)\defeq \rho\big(y^{(1)},x^{(1)};x^{(2)},y^{(2)}\big).
 \end{equation*}
Note that if ${\widetilde{\rho}}_2$ denotes the PT operation in the second coordinates then
\begin{equation*}
\widetilde{\rho}(x,y)=\rho\big(y^{(1)},x^{(1)};x^{(2)},y^{(2)}\big)=\rho^{*}\big(x^{(1)},y^{(1)};y^{(2)},x^{(2)}\big)=({\widetilde{\rho}}_2(x,y))^{*},
\end{equation*}
and conjugating the kernel does not change the spectrum of a self-adjoint operator, so the two possible forms of the PT transform are equivalent for our investigation. For the sake of completeness, here we enclose Theorems~\ref{t:PH} and \ref{t:WW} in position representation, too.

\begin{definition}
A density operator $\widehat{\rho}$ is called \emph{PPT} (\emph{positive partial transpose}), if its partial transpose $\widetilde{\rho}(x,y)$ is a kernel of a density operator. A density operator is called \emph{NPT} (\emph{non-positive partial transpose}) if it is not \emph{PPT}.
\end{definition}

\begin{theorem}[Peres--Horodecki criterion] If $\widehat{\rho}$ is a separable, then it is PPT.
\end{theorem}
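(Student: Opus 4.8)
The plan is to argue directly from the position-representation definition of separability. (We remark that the statement is also equivalent to Theorem~\ref{t:PH} via the Wigner--Weyl transform: substituting $p^{(1)}\to -p^{(1)}$ in the $p^{(1)}$-integral in \eqref{eq:Inverse Wigner-transform} shows that the position-space partial transpose $\rho\mapsto\widetilde{\rho}$ corresponds exactly to the phase-space partial transpose $W\mapsto\widetilde{W}$, so the two formulations carry the same content; but it is cleaner to prove the position-space version from scratch.)

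Suppose $\widehat{\rho}$ is separable, so that $\rho(x,y)=\sum_{j=1}^{\infty}\pi_j\,\rho^{(1)}_j(x^{(1)},y^{(1)})\,\rho^{(2)}_j(x^{(2)},y^{(2)})$ with $\pi_j\geq 0$, $\sum_j\pi_j=1$, and each $\widehat{\rho}^{\,(i)}_j$ a density operator. Since the partial transpose in the first partition is linear in the kernel, applying it termwise yields $\widetilde{\rho}(x,y)=\sum_{j=1}^{\infty}\pi_j\,\rho^{(1)}_j(y^{(1)},x^{(1)})\,\rho^{(2)}_j(x^{(2)},y^{(2)})$.

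The crux is the elementary fact that transposing the kernel of a density operator gives a density operator again. Each $\widehat{\rho}^{\,(1)}_j$ is self-adjoint, so its continuous kernel satisfies $\rho^{(1)}_j(y^{(1)},x^{(1)})=(\rho^{(1)}_j(x^{(1)},y^{(1)}))^{*}$; that is, the transposed kernel is the complex conjugate kernel. Applying Mercer's theorem (Theorem~\ref{t:Mercer}) to $\widehat{\rho}^{\,(1)}_j$, then relabeling $c_k\leftrightarrow c_k^{*}$ and taking complex conjugates, shows that the conjugate kernel again satisfies the Mercer condition, hence defines a positive semidefinite operator; its trace equals $\int(\rho^{(1)}_j(x,x))^{*}\,\mathrm{d}x=(\Tr\widehat{\rho}^{\,(1)}_j)^{*}=1$ by \eqref{eq:trace}. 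Therefore $\widetilde{\rho}$ is a convex combination of tensor products of density operators, and such operators are positive semidefinite with trace $1$, so $\widetilde{\rho}$ is the kernel of a density operator; that is, $\widehat{\rho}$ is PPT.

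The only point that takes a little care is the infinite sum: one must justify that the series converges in, say, the trace norm and that positive semidefiniteness and the value of the trace pass to the limit. This is routine since all summands are positive operators and the weights $\pi_j$ are summable, so I do not anticipate a genuine obstacle — the mathematical content is just the short identity for the partial transpose of a product state together with the invariance of the class of density operators under transposition.
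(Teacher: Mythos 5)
The paper does not actually prove this statement: it is quoted as the classical Peres--Horodecki criterion and attributed to \cite{Perescrit,HORODECKI19961}, so there is no internal proof to compare against. Your argument is the standard one from the literature, and it is correct in substance: apply the partial transpose termwise to the separable decomposition \eqref{eq: separablity_general_case}, observe that transposing the kernel of a self-adjoint operator is the same as conjugating it, and that conjugation maps density operators to density operators, so $\widetilde{\rho}$ is again a convex combination of tensor products of density operators. The paper itself implicitly relies on the same key fact when it remarks, just before the theorem, that conjugating the kernel does not change the spectrum of a self-adjoint operator.

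Two technical points deserve tightening. First, you invoke Mercer's theorem (Theorem~\ref{t:Mercer}) and the trace formula \eqref{eq:trace}, both of which are stated in the paper for \emph{continuous} kernels, while the factors $\widehat{\rho}^{\,(1)}_j$ in Definition~\ref{d:sep} are arbitrary density operators whose $L^2$ kernels need not be continuous. The fix is one line and avoids Mercer altogether: with $J f\defeq f^{*}$ one has $\widehat{\kappa^{*}}=J\widehat{\kappa}J$, so the conjugate kernel defines an operator anti-unitarily equivalent to $\widehat{\kappa}$; for self-adjoint $\widehat{\kappa}$ this preserves the (real) spectrum, hence positivity, the trace-class property, and the trace, which is exactly the invariance of the class of density operators under transposition that you need. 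Second, for the termwise step it helps to note that the partial transpose acts on kernels as a permutation of the variables, hence is an isometry of $L^2(\RR^{2n})$ and commutes with the limit of the series, which converges in trace norm (the $j$th term has trace norm $\pi_j$); positivity and the trace pass to such limits, so the closing "routine" step is indeed routine. With these adjustments your proof is complete and is, for all practical purposes, the proof of Peres and the Horodeckis transcribed into the position representation; the equivalence with the phase-space formulation of Theorem~\ref{t:PH} via \eqref{eq:Inverse Wigner-transform} is a correct side remark but is not needed.
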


\begin{theorem}[Werner--Wolf] \label{t:WW2} Let $\widehat{\rho}_G$ be Gaussian and let $d_1=1$ or $d_2=1$. Then $\widehat{\rho}_G$ is separable if and only if it is PPT.
\label{Th: Werner}
\end{theorem}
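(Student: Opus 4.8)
The plan is to obtain Theorem~\ref{t:WW2} directly from its phase-space counterpart Theorem~\ref{t:WW}, using that \emph{separability is an intrinsic property of the density operator} and is insensitive to whether we work in position or phase-space representation. So it suffices to prove that, for a Gaussian density operator $\widehat{\rho}_G$ with $d_1=1$ or $d_2=1$, the position-space condition ``$\widetilde{\rho}_G(x,y)$ is the kernel of a density operator'' (i.e.\ $\widehat{\rho}_G$ is PPT) is equivalent to the phase-space condition ``$\widetilde{W}_G(x,p)$ represents a Gaussian density operator'' from Theorem~\ref{t:WW}; chaining this equivalence with Theorem~\ref{t:WW} then yields the claim.

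First I would check that the position-space partial transpose and the phase-space partial transpose correspond under the Wigner--Weyl transform \eqref{eq:Wigner-transform}--\eqref{eq:Inverse Wigner-transform}. Plugging $\widetilde{\rho}(x,y)=\rho\big(y^{(1)},x^{(1)};x^{(2)},y^{(2)}\big)$ into \eqref{eq:Wigner-transform} and substituting $y^{(1)}\mapsto -y^{(1)}$ in the first block of the integration variable flips the sign of $p^{(1)}$ in the exponential while leaving the midpoint argument correct, so the Wigner transform of $\widetilde{\rho}$ equals $\widetilde{W}(x,p)=W\big(x^{(1)},-p^{(1)};x^{(2)},p^{(2)}\big)$; this is the elementary remark made above, valid for any Schwartz kernel. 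Next I would record that the partial transpose preserves the Gaussian form: swapping $x^{(1)}\leftrightarrow y^{(1)}$ is a linear change of variables, hence it carries a quadratic exponent to a quadratic exponent and (being self-adjointness-preserving for continuous kernels) produces a kernel again of the form $\theta(A',B',C')$ with $A',C'$ symmetric, the blocks $A',B',C'$ being explicit linear images of $A,B,C$. Equivalently $\widetilde{W}_G$ is a Gaussian function. Therefore ``$\widetilde{W}_G$ represents a Gaussian density operator'' means exactly that $\widehat{\widetilde{\rho}}_G$ is positive semidefinite with trace one: positivity transfers between the representations because the Wigner--Weyl transform is a bijection preserving positive semidefiniteness, and the trace-one normalization carries over since $\int_{\RR^n}\widetilde{\rho}_G(x,x)\,\mathrm{d}x=\int_{\RR^n}\rho_G(x,x)\,\mathrm{d}x=1$ by \eqref{eq:trace} and \eqref{eq:norm}.

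I expect the only delicate point to be bookkeeping in the second step, namely checking that the transposed quadratic form stays symmetric in the sense needed for $\widehat{\widetilde{\rho}}_G$ to be self-adjoint, and that in the degenerate cases where $A'$ or $C'$ fails to be positive definite (so $\widetilde{\rho}_G\notin L^2(\RR^{2n})$) neither side of the equivalence holds, so the statement is vacuously consistent; one should also verify that Mercer's theorem (Theorem~\ref{t:Mercer}) and the trace formula \eqref{eq:trace} apply in the nondegenerate case. None of this is deep — the substantive input is entirely Theorem~\ref{t:WW}, which we are permitted to assume — so the proof reduces to this short translation between representations.
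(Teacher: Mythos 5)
Your proposal is correct and takes essentially the same route as the paper: Theorem~\ref{t:WW2} is not proved there either, but is presented as the position-representation restatement of the cited phase-space Werner--Wolf theorem (Theorem~\ref{t:WW}), with the correspondence of the two partial-transpose operations under the Wigner--Weyl transform \eqref{eq:Wigner-transform}--\eqref{eq:Inverse Wigner-transform} treated as immediate — exactly the translation you spell out. Your only superfluous concern is the degenerate case: the partial transpose sends the defining triple $(A,B,C)$ to $(DAD,\,DB,\,C)$ with $D$ the block-diagonal matrix with blocks $-I_{d_1}$ and $I_{d_2}$, so $\widetilde{\rho}_G$ always remains a genuine self-adjoint $L^2$ Gaussian kernel and the equivalence never becomes vacuous.
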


\begin{definition} Let $P$ be a self-adjoint polynomial and $\widehat{\rho}_G$ be a Gaussian density operator and assume that $P\rho_G$ is a kernel of a positive operator. Then we can define the \emph{polynomial Gaussian density operator} $\widehat{\rho}_{\PG}$ such that its kernel $\rho_{\PG}$ is a constant multiple of $P\rho_{G}$, see \eqref{eq:norm} for the normalizing factor.
\end{definition}

Recall the preorder $\preceq$ from Definition~\ref{d:prec}. The next corollary easily follows from Theorem~\ref{t:order}. It means that if we have an NPT polynomial Gaussian operator, then we can find a lot of others by changing only its Gaussian factor. 

\begin{corollary} \label{c:order}
Let $\widehat{\rho}_{G_0}$, $\widehat{\rho}_{G_1}$ be Gaussian density operators with $\widetilde{\rho}_{G_0}\preceq \widetilde{\rho}_{G_1}$. Let $P$ be a self-adjoint polynomial and assume that $\widehat{\rho}_{\PG_0}$ and $\widehat{\rho}_{\PG_1}$ exist. Then
\begin{equation*}
\widehat{\rho}_{\PG_1} \text{ is NPT} \quad  \Longrightarrow  \quad \widehat{\rho}_{\PG_0} \text{ is NPT}. 
\end{equation*}
\end{corollary}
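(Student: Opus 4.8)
The plan is to reduce the assertion to Theorem~\ref{t:order} by writing the partial transpose of a polynomial Gaussian density operator in closed form. First I would record that, in position representation, the partial transpose $\rho(x,y)\mapsto\widetilde{\rho}(x,y)=\rho\big(y^{(1)},x^{(1)};x^{(2)},y^{(2)}\big)$ is a linear substitution of the first-block variables: it fixes the diagonal $x=y$, hence preserves the trace, it is an isometry of $L^2(\RR^{2n})$, and it sends self-adjoint kernels to self-adjoint kernels (a direct check). Since the partial transpose of a density operator automatically has trace $1$, such an operator is NPT exactly when its partial transpose fails to be positive semidefinite. Thus it suffices to prove that $\widehat{\widetilde{\rho}_{\PG_1}}\not\geq 0$ implies $\widehat{\widetilde{\rho}_{\PG_0}}\not\geq 0$.

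Next I would make the partial transpose of $\rho_{\PG_i}$ explicit. Because the substitution acts on the polynomial factor and the Gaussian factor simultaneously, $\widetilde{\rho}_{\PG_i}$ equals $\widetilde{P}\,\widetilde{\rho}_{G_i}$ up to the positive normalizing constant of $\rho_{\PG_i}$, where $\widetilde{P}$ is the partial transpose of the polynomial $P$ --- crucially the \emph{same} polynomial for $i=0$ and $i=1$ --- and $\widetilde{\rho}_{G_i}$ is the partial transpose of the Gaussian $\rho_{G_i}$. A linear substitution carries the quadratic exponent to another quadratic exponent, so $\widetilde{\rho}_{G_i}$ is again Gaussian; being the partial transpose of a self-adjoint $L^2$ kernel it is a self-adjoint $L^2$ Gaussian, hence, up to a positive constant, of the form $\theta(A_i',B_i',C_i')$ with $(A_i',B_i',C_i')\in\iG(n)$, and this triple is exactly the one figuring in the hypothesis $\widetilde{\rho}_{G_0}\preceq\widetilde{\rho}_{G_1}$. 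Similarly $\widetilde{P}$ is a self-adjoint polynomial (since $P$ is), and $\widetilde{P}\,\widetilde{\rho}_{G_i}\in\iS(\RR^{2n})\subset L^2(\RR^{2n})$; so $\sigma:=\widetilde{P}$ together with these two triples meets all the hypotheses of Theorem~\ref{t:order}.

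Finally I would apply Theorem~\ref{t:order}: condition (1) is the assumption $\widetilde{\rho}_{G_0}\preceq\widetilde{\rho}_{G_1}$, so condition (2) gives that $\widehat{\widetilde{P}\,\widetilde{\rho}_{G_0}}\geq 0$ would force $\widehat{\widetilde{P}\,\widetilde{\rho}_{G_1}}\geq 0$; taking the contrapositive, and discarding the positive normalizing constants (which do not affect positivity), yields $\widehat{\widetilde{\rho}_{\PG_1}}\not\geq 0\Rightarrow\widehat{\widetilde{\rho}_{\PG_0}}\not\geq 0$, which by the first paragraph is the claim. I expect no genuine obstacle: the argument is a bookkeeping computation of the partial transpose followed by a direct invocation of Theorem~\ref{t:order}. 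The only point needing a moment's care is checking that the partial transpose of a self-adjoint polynomial Gaussian is again a self-adjoint polynomial Gaussian whose Gaussian part stays in $\iG(n)$ and whose polynomial factor is independent of the index $i$ --- routine, and already implicit when the paper writes the partial transpose in position representation.
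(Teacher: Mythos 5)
Your proposal is correct and follows exactly the route the paper intends: the paper only remarks that Corollary~\ref{c:order} ``easily follows from Theorem~\ref{t:order}'', and your argument is precisely that implication spelled out --- the partial transpose sends $\rho_{\PG_i}$ to a positive constant times $\widetilde{P}\,\widetilde{\rho}_{G_i}$ with the same self-adjoint polynomial factor $\widetilde{P}$ for $i=0,1$, so Theorem~\ref{t:order}\,(1)$\Rightarrow$(2) applied with $\sigma=\widetilde{P}$ and the triples of $\widetilde{\rho}_{G_0}\preceq\widetilde{\rho}_{G_1}$ gives the contrapositive of the claim. The bookkeeping points you flag (trace preserved since the diagonal is fixed, $\widetilde{\rho}_{G_i}$ again of the form $\theta(A',B',C')$ with $A',C'$ positive definite, positivity unaffected by the normalizing constants) are all sound.
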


From now on let $P$ be a self-adjoint polynomial and let $\widehat{\rho}_G$ be a Gaussian density operator such that the corresponding polynomial Gaussian density operator $\widehat{\rho}_{\PG}$ exists. Theorem~\ref{Th:BHCS} yields the next corollary. 

\begin{corollary} \label{c:NPT}
If $\widehat{\rho}_G$ is NPT, then $\widehat{\rho}_{\PG}$ is NPT as well.
\end{corollary}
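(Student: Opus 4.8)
The plan is to derive Corollary~\ref{c:NPT} directly from Theorem~\ref{Th:BHCS} by applying it to the partially transposed kernel. First I would recall the relevant definitions: $\widehat{\rho}_{\PG}$ is NPT means that its partial transpose $\widetilde{\rho}_{\PG}$ is \emph{not} the kernel of a density operator, and since $\Tr(\widehat{\rho}_{\PG})=1$ is preserved under partial transpose (the operation only swaps $x^{(1)}\leftrightarrow y^{(1)}$, so $\int \widetilde{\rho}_{\PG}(x,x)\,\mathrm{d}x=\int \rho_{\PG}(x,x)\,\mathrm{d}x=1$), the operator $\widetilde{\rho}_{\PG}$ always has trace one. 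Hence $\widehat{\rho}_{\PG}$ is NPT if and only if the operator $\widehat{\widetilde{\rho}}_{\PG}$ is not positive semidefinite, and likewise $\widehat{\rho}_G$ is NPT if and only if $\widehat{\widetilde{\rho}}_G$ is not positive semidefinite.

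Next I would observe that the partial transpose of a polynomial Gaussian kernel is again a polynomial Gaussian kernel with the \emph{same} underlying polynomial factor (up to the normalizing constant), because the substitution $x^{(1)}\leftrightarrow y^{(1)}$ is linear: if $\rho_{\PG}=c\,P\rho_G$ with $P$ a polynomial and $\rho_G$ Gaussian, then $\widetilde{\rho}_{\PG}(x,y)=c\,P(y^{(1)},x^{(1)};x^{(2)},y^{(2)})\,\widetilde{\rho}_G(x,y)$, where $\widetilde{\rho}_G$ is the Gaussian kernel obtained by applying the same linear substitution to $\rho_G$ (a linear change of variables preserves the Gaussian form, possibly altering the matrices $A,B,C$). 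Write $\widetilde{P}$ for the polynomial $\widetilde{P}(x,y)=P(y^{(1)},x^{(1)};x^{(2)},y^{(2)})$; it is non-zero since $P$ is, and self-adjointness of $\widehat{\widetilde{\rho}}_{\PG}$ is automatic as discussed in the text. Thus $\widetilde{\rho}_{\PG}$ is a constant multiple of $\widetilde{P}\cdot\widetilde{\rho}_G$, i.e.\ it is exactly of polynomial Gaussian form $\kappa_{\widetilde{P}G}$ relative to the Gaussian $\widetilde{\rho}_G$.

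Now I would invoke Theorem~\ref{Th:BHCS} with this data. Assume $\widehat{\rho}_G$ is NPT; by the first paragraph this means $\widehat{\widetilde{\rho}}_G$ is not positive semidefinite. Applying Theorem~\ref{Th:BHCS} to the non-zero polynomial $\widetilde{P}$ and the Gaussian kernel $\widetilde{\rho}_G$ yields that $\widehat{\widetilde{P}\widetilde{\rho}_G}$ is not positive semidefinite either; multiplying by a positive normalizing constant does not change this, so $\widehat{\widetilde{\rho}}_{\PG}$ is not positive semidefinite. Since its trace is one, $\widetilde{\rho}_{\PG}$ is not the kernel of a density operator, which is precisely the statement that $\widehat{\rho}_{\PG}$ is NPT.

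I do not expect a genuine obstacle here, as the corollary is essentially a one-line consequence of Theorem~\ref{Th:BHCS} together with the bookkeeping that partial transpose sends polynomial Gaussians to polynomial Gaussians. The only point requiring minor care is verifying that the partial transpose of a \emph{positive definite} Gaussian kernel remains an $L^2$ Gaussian kernel (so that Theorem~\ref{Th:BHCS} applies): this holds because the hypothesis that $\widehat{\rho}_{\PG}$ exists forces $\rho_{\PG}\in\iS(\RR^{2n})$, hence $\widetilde{\rho}_{\PG}\in\iS(\RR^{2n})$, and its Gaussian part $\widetilde{\rho}_G$ is automatically an $L^2$ Gaussian kernel with symmetric positive definite $A,C$ matrices. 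One should also note that the partial transpose may map a self-adjoint polynomial to one that is only self-adjoint after the transpose — but this is exactly what positive semidefiniteness of $\widehat{\widetilde{\rho}}_{\PG}$ would require, and we are proving the negation, so no positivity of $\widetilde{P}$ need be assumed.
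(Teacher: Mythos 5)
Your proof is correct and follows the same route the paper intends: the corollary is stated there as an immediate consequence of Theorem~\ref{Th:BHCS}, and your argument just makes explicit the routine bookkeeping that the partial transpose preserves trace and maps a polynomial Gaussian kernel to a polynomial Gaussian kernel (with $A\mapsto DAD$, $B\mapsto DB$ for $D=\mathrm{diag}(-I_{d_1},I_{d_2})$, so $A,C$ stay symmetric positive definite). No gaps; this matches the paper's reasoning.
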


Theorems~\ref{Th:BHCS} and \ref{Th: Werner} imply the following corollary. 
 
\begin{corollary}  \label{c:NPT2}
If $d_1=1$ or $d_2=1$ and $\widehat{\rho}_{G}$ is entangled, then $\widehat{\rho}_{\PG}$ is entangled. 
\end{corollary}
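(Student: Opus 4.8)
\textbf{Proof plan for Corollary~\ref{c:NPT2}.}

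The plan is to combine the two ingredients already at our disposal: the Werner--Wolf theorem (Theorem~\ref{Th: Werner}), which in the $d_1=1$ or $d_2=1$ case turns a statement about entanglement into a statement about PPT/NPT, and Theorem~\ref{Th:BHCS} (via its consequence Corollary~\ref{c:NPT}), which propagates non-positivity from a Gaussian operator to every polynomial Gaussian operator built on it. First I would invoke Theorem~\ref{Th: Werner}: since $\widehat{\rho}_G$ is Gaussian and one part of the bipartition is a single mode, $\widehat{\rho}_G$ being entangled is equivalent to $\widehat{\rho}_G$ being NPT, i.e.\ the partial transpose $\widetilde{\rho}_G$ is not the kernel of a positive semidefinite operator. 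So from the hypothesis we get that $\widehat{\widetilde{\rho}}_G \not\geq 0$.

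Next I would apply the partial transpose to the polynomial Gaussian kernel. The PT operation in position representation, $\rho(x,y)\mapsto \rho(y^{(1)},x^{(1)};x^{(2)},y^{(2)})$, acts on $\rho_{\PG}=c\,P\rho_G$ by swapping the first block of the $x$ and $y$ variables; since this is just a linear change of variables, it sends the polynomial factor $P(x,y)$ to another polynomial $\widetilde{P}(x,y)$ and the Gaussian factor $\rho_G$ to $\widetilde{\rho}_G$, which is again a Gaussian kernel (the quadratic form in the exponent stays quadratic, with symmetric $A$- and $C$-parts preserved). Hence $\widetilde{\rho}_{\PG}=c\,\widetilde{P}\,\widetilde{\rho}_G$ is exactly a polynomial Gaussian kernel with Gaussian part $\widetilde{\rho}_G$ and polynomial part $\widetilde{P}$, and $\widetilde{P}$ is non-zero because $P$ is. Now Corollary~\ref{c:NPT} (equivalently, Theorem~\ref{Th:BHCS} applied to the pair $\widetilde{P},\widetilde{\rho}_G$): since $\widehat{\widetilde{\rho}}_G$ is not positive semidefinite, neither is $\widehat{\widetilde{P}\widetilde{\rho}_G}=\widehat{\widetilde{\rho}}_{\PG}$. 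Therefore $\widehat{\rho}_{\PG}$ is NPT, and then the Peres--Horodecki criterion (Theorem~\ref{t:PH}) forces $\widehat{\rho}_{\PG}$ to be entangled.

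The only point that needs a little care is the bookkeeping in the second step: one must check that the partial transpose of a Gaussian kernel of our standard form is again a Gaussian kernel of that form (so that Theorem~\ref{Th:BHCS} applies verbatim), and that it carries a self-adjoint polynomial to a self-adjoint polynomial. Both are routine — PT is a real linear involution on the $(x,y)$ variables that preserves the decomposition of the exponent into an ``$(x-y)$-part'' and an ``$(x+y)$-part'' up to mixing within these blocks, and it commutes with the self-adjointness relation $\kappa(y,x)=\kappa^*(x,y)$ in the appropriate sense. I do not expect any genuine obstacle here; the corollary is essentially a two-line deduction once Theorem~\ref{Th:BHCS} and Theorem~\ref{Th: Werner} are in hand, the substance of the argument being entirely in those two results.
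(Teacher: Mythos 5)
Your proposal is correct and follows exactly the paper's intended route: Werner--Wolf (Theorem~\ref{Th: Werner}) converts entanglement of the Gaussian $\widehat{\rho}_G$ into its being NPT, Theorem~\ref{Th:BHCS} (as packaged in Corollary~\ref{c:NPT}, whose proof is precisely your partial-transpose bookkeeping showing $\widetilde{\rho}_{\PG}$ is again a polynomial Gaussian with Gaussian part $\widetilde{\rho}_G$) propagates NPT to $\widehat{\rho}_{\PG}$, and the Peres--Horodecki criterion then gives entanglement. The paper states this corollary as an immediate consequence of those two theorems, so your write-up is just a more explicit version of the same two-line deduction.
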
 

The following problem asks whether Theorem~\ref{t:WW2} can be generalized to polynomial Gaussian operators, that is, in the $d_1=1$ or $d_2=1$ case the entanglement reduces to a positivity check. 

 \begin{problem} \label{p:problem}Let $d_1=1$ or $d_2=1$. If $\widehat{\rho}_{\PG}$ is entangled, is it necessarily NPT? In other words, is it true that $\widehat{\rho}_{\PG}$ is separable if and only if it is PPT?
\end{problem}

\section{When the Gaussian is not positive} \label{s:Gauss}

The main goal of this section is to prove Theorem~\ref{Th:BHCS}. First we will settle the one-dimensional case in Subsection~\ref{ss:one}. Then in Subsection~\ref{ss:W} we recall the symplectic decomposition for Gaussian kernels, in particular we state Claim~\ref{c:positivity} which will be useful in Section~\ref{s:order}, too. Finally, in Subsection~\ref{ss:gen} we trace back the general case to the one-dimensional result using a symplectic decomposition based on Williamson's theorem \cite{Williamson_origin}.

\subsection{The one-dimensional case} \label{ss:one}

In this subsection we prove the following. 

\begin{theorem} \label{theorem:non_positive_gaussian} Let $P\colon \RR^2\to \CC$ be a non-zero polynomial and let $\kappa_G\colon \RR^2\to \RR$ be a Gaussian kernel of the form 
\begin{equation*}\kappa_G(x,y)=\exp\left(-A(x-y)^2-C(x+y)^2\right)
\end{equation*} 
such that $\widehat{\kappa}_G$ is not positive semidefinite. Then $\widehat{\kappa}_{\PG}$ is not positive semidefinite. 
\end{theorem}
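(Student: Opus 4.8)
The plan is to analyze the action of $\widehat{\kappa}_{\PG}$ on a family of coherent-state-like test functions and extract a scalar-valued polynomial obstruction. First, I would diagonalize the Gaussian part. Since $\kappa_G(x,y)=\exp(-A(x-y)^2-C(x+y)^2)$ with $A,C>0$, the operator $\widehat{\kappa}_G$ is (up to a positive constant) the integral operator associated with a squeezed thermal state, and its spectrum is an explicit geometric sequence $\lambda_k = \lambda_0 q^k$ where $q\in(0,1)$ (or $q<0$ in the non-positive regime) is determined by $A,C$; concretely, after a rescaling $x\mapsto \alpha x$ one reduces to the standard Mehler kernel $\exp(-a(x^2+y^2)+2bxy)$, whose eigenfunctions are Hermite functions $\phi_k$ and whose eigenvalues are $\propto r^k$ with $r$ a Möbius function of $a,b$. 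The hypothesis that $\widehat{\kappa}_G\not\geq 0$ translates into $r<0$, i.e.\ the eigenvalues alternate in sign. The key structural fact I would isolate: $\widehat{\kappa}_G\not\geq 0$ iff there exists an eigenfunction $\phi_k$ (for some fixed parity class, in fact all odd $k$ or all... ) with $\langle \widehat{\kappa}_G \phi_k,\phi_k\rangle < 0$; more precisely there is a one-dimensional subspace (or a two-dimensional one spanned by consecutive Hermite functions) on which the Gaussian form is negative.

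The main step is then to upgrade this to the polynomial Gaussian. The idea is to choose test functions $f_t = f + t g$ where $f=\phi_k$ is a negative direction for $\widehat{\kappa}_G$ and $g$ is chosen so that $P$-induced cross terms are controlled. Better: since $P(x,y)$ is a polynomial, the operator $\widehat{\kappa}_{\PG}$ is obtained from $\widehat{\kappa}_G$ by applying a differential operator in a dual variable, or equivalently by taking suitable derivatives of $\langle \widehat{\kappa}_G f_s, f_s\rangle$ with respect to parameters. Concretely, I would look at $\langle \widehat{\kappa}_{\PG} f, f\rangle$ for $f$ in the span of $\phi_j$, $k\le j\le k+N$ where $N=\deg P$; on such a finite-dimensional space $P$ acts in the Hermite basis as a band matrix, and the quadratic form becomes $\sum_{i,j} \overline{c_i} c_j \langle \widehat{\kappa}_{\PG}\phi_i,\phi_j\rangle$. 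I would compute these matrix entries using the three-term recurrence for Hermite functions together with the known eigen-structure of $\widehat{\kappa}_G$, obtaining entries of the form (Gaussian eigenvalue) $\times$ (finite combination), so that the sign pattern of the Gaussian eigenvalues (alternating, by hypothesis) forces indefiniteness of the resulting Hermitian matrix for a suitable choice of the index $k$ pushed to infinity — because as $k\to\infty$ the alternating-sign geometric weights dominate and any fixed-size band perturbation cannot overcome them.

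Alternatively — and this may be cleaner — I would argue by a scaling/localization limit: conjugate $\widehat{\kappa}_{\PG}$ by dilations $U_s: f(x)\mapsto s^{1/2} f(sx)$ and examine the limit $s\to 0$ or $s\to\infty$. Under such scaling the Gaussian kernel degenerates and, after renormalization, the polynomial factor's top-degree homogeneous part survives while lower-order terms vanish; one should be able to reduce to $P$ homogeneous, and then to a product of linear/quadratic factors, making the sign analysis of the Gaussian directly visible. The Mercer criterion (Theorem~\ref{t:Mercer}) is the right tool to finish: it suffices to produce finitely many points $x_1,\dots,x_k\in\RR$ and coefficients $c_1,\dots,c_k\in\CC$ with $\sum_{i,j} c_i c_j^* \kappa_{\PG}(x_i,x_j) < 0$, and by continuity one can perturb the optimal configuration that witnesses $\widehat{\kappa}_G\not\geq 0$ so that the polynomial factor $P(x_i,x_j)$, which is nonzero and varies, does not destroy negativity after a slight rescaling of the $x_i$.

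The hard part will be the bookkeeping in the second approach: controlling the band-matrix entries $\langle\widehat{\kappa}_{\PG}\phi_i,\phi_j\rangle$ precisely enough to guarantee indefiniteness uniformly, and handling the possibility that $P$ vanishes on the negative directions of $\widehat{\kappa}_G$ (this is where one genuinely needs to move the index $k$, or the scaling parameter, and use that a nonzero polynomial cannot vanish on an infinite family of such directions). I expect the cleanest route is to combine Mercer's criterion with a rescaling argument: fix a finite point configuration witnessing $\widehat{\kappa}_G\not\geq0$, rescale all points by $\varepsilon$, and show the leading Gaussian contribution (which is $\varepsilon$-independent up to the scaling of $A,C$, handled by choosing the configuration adapted to the small-$\varepsilon$ Gaussian) stays negative while $P(\varepsilon x_i,\varepsilon x_j)\to P(0,0)$ or the relevant homogeneous piece, so positivity fails.
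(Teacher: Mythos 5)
Your submission is a set of strategies rather than a proof, and the route you single out as cleanest --- rescale a Mercer configuration witnessing $\widehat{\kappa}_G\not\geq 0$ by $\varepsilon$ and argue "by continuity" that the polynomial factor cannot destroy negativity --- fails at a concrete step. If you replace the points $x_i$ by $\varepsilon x_i$, then $\sum_{i,j}c_ic_j^{*}\kappa_G(\varepsilon x_i,\varepsilon x_j)\to\bigl|\sum_i c_i\bigr|^2\geq 0$ as $\varepsilon\to 0$: the non-positivity of the Gaussian does not localize at the origin, it only reappears at order $\varepsilon^2$ (with coefficient proportional to $C-A$, e.g.\ for the witness $x_1=\varepsilon x_0$, $x_2=-\varepsilon x_0$, $c=(1,-1)$ one gets $2e^{-4C\varepsilon^2x_0^2}-2e^{-4A\varepsilon^2x_0^2}\approx -8(C-A)\varepsilon^2x_0^2$). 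But the quadratic part of $P$ contributes to $\sum_{i,j}c_ic_j^{*}P(\varepsilon x_i,\varepsilon x_j)\kappa_G(\varepsilon x_i,\varepsilon x_j)$ at exactly the same order $\varepsilon^2$ (try $P(x,y)=1+\lambda xy$ with $\lambda$ large), so continuity and "slight rescaling" cannot finish; and if $P(0,0)=0$ nothing survives at all. You acknowledge the vanishing issue but do not resolve it, and the Hermite/band-matrix variant is likewise left at the level of intent: the matrix of multiplication by $P$ in the Hermite basis has entries growing polynomially in the index (the three-term recurrence coefficients grow like $\sqrt{m}$), so "a fixed-size band perturbation cannot overcome alternating geometric weights" is not automatic and is precisely the bookkeeping you defer.

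The missing idea is to exploit the inequality $C>A>0$ (which is exactly what $\widehat{\kappa}_G\not\geq 0$ gives in one dimension) at \emph{large} separation rather than near the origin. The paper's proof does this with only two Mercer points: first it shows $R(x)=P(x,x)$ is not identically zero (otherwise $\Tr(\widehat{\kappa}_{\PG})=0$ by the trace formula, impossible for a nonzero positive semidefinite operator), hence $Q(x,y)=P(x,-x+y)+P(-x+y,x)$ is nonzero and one can fix $\varepsilon$ with $Q(\cdot,\varepsilon)\not\equiv 0$. Applying Mercer with $x_1=x$, $x_2=\varepsilon-x$, $c=(1,-1)$ and letting $x\to\infty$, the diagonal (Gaussian) side carries the factor $\exp\bigl(4(A-C)x^2-4A\varepsilon x+(A+C)\varepsilon^2\bigr)\to 0$ because $A<C$, while the off-diagonal side is the polynomial $Q(x,\varepsilon)$; this forces $\lim_{x\to\infty}Q(x,\varepsilon)<0$. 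Repeating the same two-point inequality for the kernel $xy\,\kappa_{\PG}(x,y)$ (still Mercer-positive, being $\kappa_{\PG}$ times $g(x)g^{*}(y)$ with $g(x)=x$) makes the right-hand side $x(\varepsilon-x)Q(x,\varepsilon)\to+\infty$ while the left-hand side still tends to $0$, a contradiction. If you want to salvage your approach, this is the mechanism you need: send the test points to infinity in the $x-y$ direction, where $C>A$ makes the Gaussian decay but a polynomial cannot, rather than shrinking the configuration to a point.
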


\begin{proof} 
Assume to the contrary that $\widehat{\kappa}_G$ is not positive semidefinite and $\widehat{\kappa}_{\PG}$ is positive semidefinite. It is well known that $\widehat{\kappa}_G$ is positive semidefinite if and only if $A\geq C>0$, see e.\,g.~\cite{Newton} where even the eigenfunctions are calculated. Thus we obtain that $C>A>0$.

First we show that $R(x)=P(x,x)$ is a non-zero polynomial. Indeed, assume to the contrary that $R(x)=P(x,x)=0$ for all $x\in \RR$, then by \eqref{eq:trace} the sum of the eigenvalues equals to 
\begin{equation} \label{eq:tr} \Tr \left(\widehat{\kappa}_{\PG}\right)=\int_{-\infty}^{\infty} \kappa_{\PG}(x,x) \,\mathrm{d} x=0. 
\end{equation} 
As $\widehat{\kappa}_{\PG}$ is not the zero operator, it has non-zero eigenvalues. Hence $\widehat{\kappa}_{\PG}$ has a negative eigenvalue by \eqref{eq:tr}, which contradicts that it is positive semidefinite. 
	
Now define the polynomial $Q\colon \RR^2\to \CC$ as \begin{equation*} 
Q(x,y)=P(x,-x+y)+P(-x+y,x).
\end{equation*} 
As $Q(x,2x)=2P(x,x)$, it follows that $Q$ is not the zero polynomial. Therefore, we can fix $\varepsilon\in \RR$ such that $Q(x,\varepsilon)$ is a non-zero polynomial of $x$. Applying Theorem~\ref{t:Mercer} for $\kappa_{\PG}$ with $n=1$, $k=2$, $c_1=1$, and $c_2=-1$ implies that for all $x_1,x_2\in \RR$ we have
\begin{equation} \label{eq:kappa} 
\kappa_{\PG}(x_1,x_1)+\kappa_{\PG}(x_2,x_2)\geq 
\kappa_{\PG}(x_1,x_2)+\kappa_{\PG}(x_2,x_1).
\end{equation}
Substituting $x_1=x$ and $x_2=\varepsilon-x$ into \eqref{eq:kappa} and using the notation 
\begin{equation} \label{eq:E(x)} 
E(x)=\exp\left(4(A-C)x^2-4A\varepsilon x+(A+C)\varepsilon^2\right)
\end{equation} 
we obtain that for all $x\in \RR$ we have
\begin{equation} \label{eq:AC} 
E(x)\left[R(x)+R(\varepsilon-x)\exp(8C\varepsilon x-4C\varepsilon^2)\right]
\geq Q(x,\varepsilon).
\end{equation} 
The term $4(A-C)x^2$ in \eqref{eq:E(x)} and $A-C<0$ imply that the left hand side of \eqref{eq:AC} tends to $0$ as $x\to \infty$. As $Q(x,\varepsilon)$ is a non-zero polynomial of $x$, the limit $\displaystyle {\lim_{x\to \infty} Q(x,\varepsilon)\neq 0}$ exists, because $Q$ is either constant or the limit is $+\infty$ or $-\infty$. These and \eqref{eq:AC} imply that 
\begin{equation} \label{eq:Qneg} 
\lim_{x\to \infty} Q(x,\varepsilon)<0. 
\end{equation} 
Now consider the kernel 
\begin{equation*} \tau(x,y)=xy\kappa_{\PG}(x,y).
\end{equation*} 
As $\kappa_{\PG}$ satisfies the Mercer-type inequalities \eqref{eq:Mercer}, so does $\tau$. Therefore, similarly to \eqref{eq:kappa} for all $x_1,x_2\in \RR$ we obtain that
\begin{equation} \label{eq:tau} 
\tau(x_1,x_1)+\tau(x_2,x_2)\geq 
\tau(x_1,x_2)+\tau(x_2,x_1).
\end{equation}
Substituting $x_1=x$ and $x_2=\varepsilon-x$ into \eqref{eq:tau} yields that for all $x\in \RR$ we have
\begin{equation} \label{eq:AC2} 
E(x)\left[ x^2R(x)+(\varepsilon-x)^2R(\varepsilon-x) \exp(8C\varepsilon x-4C\varepsilon^2) \right]\geq x(\varepsilon-x)Q(x,\varepsilon).
\end{equation} 
The term $4(A-C)x^2$ in \eqref{eq:E(x)} and $A-C<0$ imply that the left hand side of \eqref{eq:AC2} tends to $0$ as $x\to \infty$. Since $\displaystyle {\lim_{x\to \infty} x(\varepsilon-x)= -\infty}$, inequality \eqref{eq:Qneg} yields $\displaystyle {\lim_{x\to \infty} x(\varepsilon-x)Q(x,\varepsilon)=+\infty}$, which clearly contradicts \eqref{eq:AC2}.
\end{proof} 

\begin{remark} The above proof only yields in higher dimensions that if $\widehat{\kappa}_{\PG}$ is positive semidefinite, then $A\geq C$ holds for the Gaussian parameters. This is not sufficient for the positivity of the Gaussian $\widehat{\kappa}_{G}$ if the matrix $B$ is not symmetric, see e.\,g.~Fact~\ref{f:IB}. In order to describe the positivity of a multidimensional Gaussian, we need to consider the symplectic eigenvalues as stated in Claim~\ref{c:positivity}, too. Therefore, symplectic methods come naturally into the picture, and they will be used in the next two subsections to extend Theorem~\ref{theorem:non_positive_gaussian} to higher dimensions. 
\end{remark}

\subsection{Symplectic decomposition and Williamson's theorem} \label{ss:W}
The results of this subsection are known, but we will need them in Subsection~\ref{ss:gen} and also in Section~\ref{s:order}.  
Recall that our Gaussian kernel $\kappa_{G}\colon \RR^{2n}\to \CC$ is of the form
\begin{equation*} 
\kappa_{G}(x,y)=\exp\left\{-(x-y)^{\top} A (x-y) -i(x-y)^{\top} B (x+y)-(x+y)^{\top} C (x+y)\right\},
\end{equation*}
where $A,B,C\in \RR^{n\times n}$ such that $A$ and $C$ are symmetric and positive definite; if $n=1$ then $\widehat{\kappa}_G$ is positive semidefinite if and only if $A\geq C>0$.

Calculating the Wigner--Weyl transform of $\kappa_G$ (see \cite[(2.18) and (2.19)]{SimonWW} with a slightly different terminology) yields the following formula in the phase space:
\begin{equation} \label{eq:WG}
W_G(x,p)=c_{G} \exp\left\lbrace -v^{\top}Gv \right\rbrace, 
\end{equation}
where $c_G=2^n\pi^{-\frac 32 n}(\det C)^{\frac 32} (\det A)^{-\frac{1}{2}}$, and $v=(x,p)^{\top}$, and 
\begin{equation} \label{eq:G} 
G=\begin{pmatrix}
4 C + B^{\top} A^{-1} B && \frac{1}{2} B^{\top}A^{-1} \\
\frac{1}{2} A^{-1} B  && \frac{1}{4} A^{-1}
\end{pmatrix}  
\end{equation}
is a symmetric, positive definite\footnote{For the positivity of $G$ one can simply check that $v^{\top}Gv\geq 0$ for all $v\in \RR^{2n}$; it also follows from the fact that the Wigner--Weyl transform maps $L^2(\RR^{2n})$ to $L^2(\RR^{2n})$, so $W$ is square integrable.} real $2n \times 2n$ matrix\footnote{In the literature $G^{-1}$ is called the \emph{Gaussian covariance matrix}.}. Define the $2n\times2n$ matrix
\begin{equation*} 
\Omega=\begin{pmatrix}
0 & I_n \\
-I_n  & 0
\end{pmatrix},  
\end{equation*}
where $I_n$ is the $n \times n$ identity matrix and $0$ is the $n\times n$ zero matrix. A $2n \times 2n$ real matrix $S\in \Sp(2n,\RR)$ is called \emph{symplectic} if it satisfies $S^{\top} \Omega S = \Omega$.

For the following important theorem see \cite[Theorem~215]{Gosson_Harmonic}, which states that applying a symplectic transform in  the phase space does not change the spectrum of the operator. 
\begin{theorem}[Weyl correspondence] \label{t:Weyl}
Assume that $\kappa\in L^2(\RR^{2n})$ and $W$ is its Wigner--Weyl transform, and let $S\in \Sp(2n,\RR)$ be a symplectic matrix. Then the operator corresponding to $W\circ S$ equals $\widehat{S}^{-1}  \widehat{\kappa}  \widehat{S}$ with some operator\footnote{Note that the operator $\widehat{S}$ is explicitly constructed in \cite{deGosson_book,Serafini,_pereira}.} $\widehat{S}$; note that $\widehat{\kappa}$ and $\widehat{S}^{-1}  \widehat{\kappa}  \widehat{S}$ are in the same conjugacy class.   \end{theorem}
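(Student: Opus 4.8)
The statement is the metaplectic covariance of the Weyl (Wigner--Weyl) quantization, so the plan is to realize $\widehat{\kappa}$ as a superposition of Heisenberg--Weyl displacement operators, produce the intertwiner $\widehat{S}$ for those, and transfer the identity to a general symbol by a symplectic change of variables. \emph{Step 1 (Weyl representation).} For $z_0=(x_0,p_0)\in\RR^{2n}$ let $\widehat{D}(z_0)$ be the displacement operator on $L^2(\RR^n)$, so that the family $\{\widehat{D}(z_0)\}$ obeys the Weyl relations $\widehat{D}(z_0)\widehat{D}(z_1)=e^{\frac i2 z_0^\top\Omega z_1}\widehat{D}(z_0+z_1)$ (with $\hbar=1$). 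Rewriting the inversion formula \eqref{eq:Inverse Wigner-transform} yields the \emph{Weyl representation} $\widehat{\kappa}=\int_{\RR^{2n}}(\iF_\sigma W)(z_0)\,\widehat{D}(z_0)\,\mathrm{d}z_0$, where $\iF_\sigma$ denotes the symplectic Fourier transform $(\iF_\sigma W)(z_0)=c_n\int_{\RR^{2n}}W(z)\,e^{i z_0^\top\Omega z}\,\mathrm{d}z$. This identity holds for $W\in\iS(\RR^{2n})$, where the integral converges (say in the strong operator topology on Schwartz vectors), and extends to $W\in L^2(\RR^{2n})$ by density, the two sides having equal Hilbert--Schmidt norms.

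\emph{Step 2 (the operator $\widehat{S}$).} Given $S\in\Sp(2n,\RR)$, the family $z_0\mapsto\widehat{D}(Sz_0)$ is again a strongly continuous projective unitary representation of $\RR^{2n}$ obeying the \emph{same} Weyl relations, since $S^\top\Omega S=\Omega$ gives $(Sz_0)^\top\Omega(Sz_1)=z_0^\top\Omega z_1$. By the Stone--von Neumann uniqueness theorem it is unitarily equivalent to $z_0\mapsto\widehat{D}(z_0)$: there is a unitary $\widehat{S}$ on $L^2(\RR^n)$, unique up to a unimodular scalar (by irreducibility and Schur's lemma), with $\widehat{S}\,\widehat{D}(z_0)\,\widehat{S}^{-1}=\widehat{D}(Sz_0)$ for all $z_0$. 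This $\widehat{S}$ is precisely a metaplectic operator covering $S$; alternatively one can avoid Stone--von Neumann by writing $\widehat{S}$ explicitly on a generating set of $\Sp(2n,\RR)$ as in \cite{deGosson_book,Serafini,_pereira} and checking the intertwining relation directly. Since $\widehat{S}$ is unitary, $\widehat{\kappa}$ and $\widehat{S}^{-1}\widehat{\kappa}\,\widehat{S}$ are conjugate and in particular have the same spectrum, which is the consequence the paper will use.

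\emph{Step 3 (change of variables).} Combining the previous steps and using $\widehat{S}^{-1}\widehat{D}(z_0)\widehat{S}=\widehat{D}(S^{-1}z_0)$,
\[
\widehat{S}^{-1}\widehat{\kappa}\,\widehat{S}=\int_{\RR^{2n}}(\iF_\sigma W)(z_0)\,\widehat{D}(S^{-1}z_0)\,\mathrm{d}z_0=\int_{\RR^{2n}}(\iF_\sigma W)(Sw)\,\widehat{D}(w)\,\mathrm{d}w,
\]
where the last equality substitutes $z_0=Sw$ and uses $\det S=1$. The invariance $\Omega=S^\top\Omega S$ then gives $(\iF_\sigma W)(Sw)=\bigl(\iF_\sigma(W\circ S)\bigr)(w)$ by the same one-line change of variables inside $\iF_\sigma$, so the right-hand side is exactly the Weyl representation of $W\circ S$. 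Hence $\widehat{S}^{-1}\widehat{\kappa}\,\widehat{S}$ is the operator whose Wigner--Weyl transform is $W\circ S$, as claimed.

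\emph{Main obstacle.} The only substantial input is the existence in Step 2 of the intertwiner $\widehat{S}$ with the displacement-covariance property; everything else is bookkeeping with Fourier transforms and the Jacobian $\det S=1$. Care is also needed with the normalization conventions in \eqref{eq:Wigner-transform}--\eqref{eq:Inverse Wigner-transform} and in $\iF_\sigma$ so that the constant in the Weyl representation of Step 1 comes out correctly, and with justifying the operator-valued integrals in a suitable topology before extending from $\iS(\RR^{2n})$ to $L^2(\RR^{2n})$.
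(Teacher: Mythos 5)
Your proposal is correct, but note that the paper itself gives no proof of this statement: Theorem~\ref{t:Weyl} is quoted as a known result with a citation to \cite[Theorem~215]{Gosson_Harmonic}. Your argument (Weyl representation of $\widehat{\kappa}$ through displacement operators, the metaplectic intertwiner $\widehat{S}$ via Stone--von Neumann, and the symplectic change of variables using $S^{\top}\Omega S=\Omega$ and $\det S=1$) is precisely the standard metaplectic-covariance proof found in that reference, so it matches the canonical route; the only points needing care are the ones you already flag, namely the normalization constants and the extension from $\iS(\RR^{2n})$ to $L^2(\RR^{2n})$ by continuity of both sides in the Hilbert--Schmidt norm.
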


The following theorem is due to Williamson \cite{Williamson_origin}, see also \cite[Subsection~3.2.3]{Serafini}, \cite[Theorem~93]{Gosson_Harmonic}, or \cite[Theorem~8.11]{deGosson_book}.

\begin{theorem}[Williamson's theorem] \label{t:Williamson}
Let $G\in \RR^{2n\times 2n}$ be a symmetric, positive definite matrix. There is a $2n\times 2n$ symplectic matrix\footnote{Note that finding such a matrix $S$ is not easy in practice, see \cite{_pereira} for tackling this problem.} $S\in \Sp(2n,\RR)$ such that
\begin{equation} \label{eq:diag}
S^{\top}GS=\begin{pmatrix}
  \Lambda & 0 \\ 
  0 & \Lambda
\end{pmatrix}, 
\end{equation}
where $\Lambda$ is the $n\times n$ diagonal matrix
\begin{equation*} \Lambda =  \begin{bmatrix}\mu_{1} & & \\ & \ddots & \\   & & \mu_{n}\end{bmatrix}. \end{equation*}
Furthermore, $\mu_k$ are positive and $\pm \mu_k$ are the eigenvalues\footnote{The reciprocals $1/\mu_i$ are called the \emph{symplectic eigenvalues} of the Gaussian covariance matrix.} of the matrix $M=iG\Omega$. 
\end{theorem}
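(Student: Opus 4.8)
The plan is to reduce the statement to the real orthogonal canonical form of skew-symmetric matrices. Since $G$ is symmetric and positive definite it has a symmetric positive definite square root $G^{1/2}$, and I would start from the matrix $K := G^{1/2}\Omega G^{1/2}$. Because $\Omega^{\top}=-\Omega$ and $G^{1/2}$ is symmetric, one checks at once that $K^{\top}=-K$, so $K$ is real skew-symmetric, and $K$ is invertible since both $G$ and $\Omega$ are. By the orthogonal canonical form for skew-symmetric matrices there are an orthogonal matrix $O$ and reals $\nu_1,\dots,\nu_n>0$ (positivity of the $\nu_k$ uses only that $K$ has no zero eigenvalue, and the sign of each block can be fixed by swapping its two basis vectors) such that, after a further orthogonal coordinate permutation that groups together the two indices of each $2\times2$ block, $O^{\top}KO = \left(\begin{smallmatrix} 0 & D \\ -D & 0\end{smallmatrix}\right)$ with $D=\mathrm{diag}(\nu_1,\dots,\nu_n)$.

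Next I would exhibit the symplectic diagonalizer explicitly. Put $\mathcal{D}:=\left(\begin{smallmatrix} D & 0 \\ 0 & D\end{smallmatrix}\right)$ and set $S:=G^{-1/2}O\mathcal{D}^{1/2}$. The congruence is immediate: $S^{\top}GS = \mathcal{D}^{1/2}O^{\top}G^{-1/2}\,G\,G^{-1/2}O\mathcal{D}^{1/2}=\mathcal{D}^{1/2}O^{\top}O\mathcal{D}^{1/2}=\mathcal{D}$, so one takes $\Lambda=D$ and $\mu_k=\nu_k>0$. For the symplectic relation the point is that $K^{-1}=G^{-1/2}\Omega^{-1}G^{-1/2}=-G^{-1/2}\Omega G^{-1/2}$ (using $\Omega^{-1}=-\Omega$), hence $O^{\top}(G^{-1/2}\Omega G^{-1/2})O=-(O^{\top}KO)^{-1}=\left(\begin{smallmatrix} 0 & D^{-1} \\ -D^{-1} & 0\end{smallmatrix}\right)$; multiplying on both sides by $\mathcal{D}^{1/2}$ replaces the off-diagonal blocks $\pm D^{-1}$ by $\pm I_n$, so $S^{\top}\Omega S=\mathcal{D}^{1/2}\left(\begin{smallmatrix} 0 & D^{-1} \\ -D^{-1} & 0\end{smallmatrix}\right)\mathcal{D}^{1/2}=\Omega$, i.e. $S\in\Sp(2n,\RR)$.

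For the spectral description of the $\mu_k$ I would note that $G\Omega=G^{1/2}KG^{-1/2}$ is similar to $K$, hence has the same eigenvalues; each block $\left(\begin{smallmatrix} 0 & \nu_k \\ -\nu_k & 0\end{smallmatrix}\right)$ contributes the conjugate pair $\pm i\nu_k$, so $M=iG\Omega$ has eigenvalues $i(\pm i\nu_k)=\mp\nu_k$, i.e. exactly $\pm\mu_1,\dots,\pm\mu_n$, which finishes the argument.

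I expect the only genuinely nontrivial ingredient to be the orthogonal canonical form of skew-symmetric matrices — equivalently, the spectral theorem applied to the normal matrix $K$, whose eigenvalues are purely imaginary. Everything afterwards is linear-algebra bookkeeping; the one place that needs real care is keeping track of which of $K$, $K^{-1}$, $G^{1/2}\Omega G^{1/2}$, $G^{-1/2}\Omega G^{-1/2}$ occurs where, so that the congruence $S^{\top}GS$ and the symplectic identity $S^{\top}\Omega S=\Omega$ are produced by the \emph{same} orthogonal matrix $O$ simultaneously.
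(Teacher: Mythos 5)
Your proposal is correct: $K=G^{1/2}\Omega G^{1/2}$ is indeed skew-symmetric and invertible, the orthogonal normal form plus the block-grouping permutation gives $O^{\top}KO=\bigl(\begin{smallmatrix}0 & D\\ -D & 0\end{smallmatrix}\bigr)$, and with $S=G^{-1/2}O\mathcal{D}^{1/2}$ both verifications $S^{\top}GS=\mathcal{D}$ and $S^{\top}\Omega S=\Omega$ (via $G^{-1/2}\Omega G^{-1/2}=-K^{-1}$) as well as the identification of the spectrum of $M=iG\Omega$ through the similarity $G\Omega=G^{1/2}KG^{-1/2}$ are carried out correctly. Note, however, that the paper does not prove Theorem~\ref{t:Williamson} itself; it quotes Williamson's theorem as a classical result with references (Williamson's original paper, Serafini, de Gosson), and your argument is essentially the standard proof found in those references, so there is no genuinely different in-paper route to compare it with.
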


By Theorem~\ref{t:Williamson} we can diagonalize $G$ from \eqref{eq:G} with a symplectic matrix, that is, there is a symplectic matrix $S\in \Sp(2n,\RR)$ such that \eqref{eq:diag} holds. We apply the linear transformation $(x',p')^{\top}=v'=Sv$ in the phase space; by Theorem~\ref{t:Weyl} this does not change the spectrum of $\widehat{\kappa}_{G}$. Now \eqref{eq:WG} and \eqref{eq:diag} imply that the linear transformation $v\mapsto Sv$ yields the product formula 
\begin{equation*}
W_G(x',p')=c_G \prod^n_{k=1} W_{G_k}(x'_k,p'_k),
\end{equation*}
where 
\begin{equation} \label{eq:wgk} W_{G_k}(x,p)=\exp\left\{-\mu_k(x^2+p^2)\right\}.
\end{equation} 
The inverse Wigner transformation \eqref{eq:Inverse Wigner-transform} allows us to  factorize our kernel in position representation as 
\begin{equation}  \label{eq:K'G}
\kappa'_G(x',y')=c_G\prod^n_{k=1} \kappa_{G_k}(x'_k,y'_k),
\end{equation}
where \eqref{eq:wgk} yields
\begin{equation} \label{eq:kgk} 
\kappa_{G_k}(x,y)=
\left(\frac{\pi}{\mu_k}\right)^{\frac 12} \exp\left\{-\frac{1}{4\mu_k} (x-y)^2-\frac{\mu_k}{4}(x+y)^2\right\}.
\end{equation} 
Clearly, $\widehat{\kappa}'_G\geq 0$ if and only if $\widehat{\kappa}_{G_k}\geq 0$ for all $k$, and by \eqref{eq:kgk} this is equivalent to $\mu_k\leq 1$ for all $1\leq k\leq n$. We emphasize this result in the following claim. 
\begin{claim}[Positivity of Gaussian operators] \label{c:positivity}
The Gaussian operator $\widehat{\kappa}_G$ is positive semidefinite if and only if all the eigenvalues $\mu_k$ from Theorem~\ref{t:Williamson} satisfy $\mu_k\leq 1$. 
\end{claim}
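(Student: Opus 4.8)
The plan is to establish Claim~\ref{c:positivity} by assembling the chain of reductions already set up in Subsection~\ref{ss:W}, so that the only genuine work is verifying the one-dimensional positivity criterion for the factors $\kappa_{G_k}$. First I would invoke Theorem~\ref{t:Williamson} to fix a symplectic matrix $S$ diagonalizing the matrix $G$ from \eqref{eq:G}, and then Theorem~\ref{t:Weyl} (Weyl correspondence) to conclude that the spectrum—hence the positivity—of $\widehat{\kappa}_G$ is unchanged when we pass to the transformed kernel $\kappa'_G$ of \eqref{eq:K'G}. Since $\kappa'_G$ factorizes as a product $c_G\prod_{k=1}^n \kappa_{G_k}(x'_k,y'_k)$ over disjoint groups of variables, the operator $\widehat{\kappa}'_G$ is the tensor product $c_G\bigotimes_{k=1}^n \widehat{\kappa}_{G_k}$ (with $c_G>0$), so its eigenvalues are scaled products of the one-dimensional eigenvalues; therefore $\widehat{\kappa}'_G\geq 0$ if and only if $\widehat{\kappa}_{G_k}\geq 0$ for every $k$. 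This tensor-product step should be stated carefully but is routine.

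The remaining point is the one-dimensional fact that the operator with kernel $\kappa_{G_k}(x,y)=(\pi/\mu_k)^{1/2}\exp\{-\tfrac{1}{4\mu_k}(x-y)^2-\tfrac{\mu_k}{4}(x+y)^2\}$ is positive semidefinite precisely when $\mu_k\leq 1$. Here I would appeal to the criterion already quoted at the start of Subsection~\ref{ss:one} and in the proof of Theorem~\ref{theorem:non_positive_gaussian}: a kernel $\exp(-a(x-y)^2-c(x+y)^2)$ with $a,c>0$ gives a positive semidefinite operator if and only if $a\geq c>0$ (with eigenfunctions the Hermite functions up to scaling, see \cite{Newton}). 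Applying this with $a=\tfrac{1}{4\mu_k}$ and $c=\tfrac{\mu_k}{4}$, the condition $a\geq c$ becomes $\tfrac{1}{4\mu_k}\geq \tfrac{\mu_k}{4}$, i.e.\ $\mu_k^2\leq 1$, i.e.\ $\mu_k\leq 1$ since $\mu_k>0$. The positive prefactor $(\pi/\mu_k)^{1/2}$ is irrelevant to positivity. Combining this equivalence over all $k$ with the tensor-product reduction of the previous paragraph yields exactly the statement of the claim.

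I do not expect a serious obstacle here, since every ingredient is either cited or already derived: the two main theorems of the subsection (Weyl correspondence and Williamson) do the heavy lifting, and the one-dimensional Gaussian positivity criterion is classical. The one spot that needs a little care is the passage from the product formula \eqref{eq:K'G} for kernels to a tensor-product decomposition of operators, and the observation that positivity of a (finite) tensor product of self-adjoint operators is equivalent to positivity of each factor when none of the factors is the zero operator—this follows at once from the fact that the eigenvalues of the product are the products of eigenvalues. If one prefers to avoid operator tensor products altogether, one can instead verify the Mercer-type inequalities \eqref{eq:Mercer} for $\kappa'_G$ directly from those for each $\kappa_{G_k}$, but the tensor-product viewpoint is cleaner and is the route I would take.
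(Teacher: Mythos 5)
Your proposal follows essentially the same route as the paper: the paper's justification of Claim~\ref{c:positivity} is precisely the chain you describe --- Wigner--Weyl transform, Williamson diagonalization of $G$, the Weyl correspondence to preserve the spectrum, the product formula \eqref{eq:K'G}, and the one-dimensional criterion $A\geq C>0$ applied with $a=\tfrac{1}{4\mu_k}$ and $c=\tfrac{\mu_k}{4}$, giving $\mu_k\leq 1$. One caveat: your auxiliary claim that positivity of a tensor product of non-zero self-adjoint operators is \emph{equivalent} to positivity of every factor is false as stated (two negative semidefinite non-zero factors have a positive semidefinite tensor product); what saves the step here --- which the paper also passes over with a ``clearly'' --- is that each $\widehat{\kappa}_{G_k}$ has strictly positive trace, hence a strictly positive eigenvalue, so a negative eigenvalue of any one factor multiplied by positive eigenvalues of the others yields a negative eigenvalue of $\widehat{\kappa}'_G$. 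With that small repair your argument coincides with the paper's.
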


\subsection{The general case} \label{ss:gen}

Before proving Theorem~\ref{Th:BHCS} we recall the following notion tailor-made for our need. Define the \emph{partial trace}\footnote{In physics the partial trace operation is applied for kernels of density operators.} (see \cite{Partial_trace}) of a kernel $\kappa\in \iS(\RR^{2n})$ in the coordinates $x_2,\dots,x_n$ as the kernel $\eta\in \iS(\RR^2)$ satisfying
\begin{equation*}
\eta(x_1,y_1)=\int_{\RR^{n-1}} \kappa(x_1,x_2,\dots,x_n, y_1, x_2,\dots,x_n) \, \mathrm{d} x_2\cdots \, \mathrm{d} x_n.
\end{equation*} 
\begin{fact} \label{f:eta}
$\widehat{\kappa}$ and $\widehat{\eta}$ are trace-class with $\Tr(\widehat{\eta})=\Tr(\widehat{\kappa})$, and $\widehat{\kappa}\geq 0$ implies $\widehat{\eta}\geq 0$.
\end{fact}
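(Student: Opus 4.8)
The plan is to verify the three assertions in turn, using only the trace formula \eqref{eq:trace} and Mercer's criterion (Theorem~\ref{t:Mercer}). First, the trace-class statements. Since $\kappa\in\iS(\RR^{2n})$, the operator $\widehat\kappa$ is trace-class, as already recalled in the theoretical background. For $\eta$ I would first check that $\eta\in\iS(\RR^2)$: differentiating under the integral sign and using that $\kappa$ and all its derivatives decay faster than any polynomial in the integrated variables $x_2,\dots,x_n$ shows that every Schwartz seminorm $\sup_{x_1,y_1}|x_1^{\alpha}y_1^{\beta}(D^{\gamma}\eta)(x_1,y_1)|$ is finite and bounded by finitely many Schwartz seminorms of $\kappa$. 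Hence $\widehat\eta$ is trace-class as well.

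Second, the equality of traces follows from \eqref{eq:trace} together with Fubini's theorem. Writing $x=(x_1,x_2,\dots,x_n)$ and $z=(x_2,\dots,x_n)$, we have
\[
\Tr(\widehat\kappa)=\int_{\RR^n}\kappa(x,x)\,\mathrm{d}x=\int_{\RR}\left(\int_{\RR^{n-1}}\kappa(x_1,z,x_1,z)\,\mathrm{d}z\right)\mathrm{d}x_1=\int_{\RR}\eta(x_1,x_1)\,\mathrm{d}x_1=\Tr(\widehat\eta),
\]
where the last step is the definition of the partial trace and the first and last equalities are \eqref{eq:trace} applied to $\widehat\kappa$ and $\widehat\eta$.

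Third, and this is the only point carrying any content, I would deduce $\widehat\eta\ge 0$ from $\widehat\kappa\ge 0$ via Mercer. Fix $x_1^{(1)},\dots,x_k^{(1)}\in\RR$ and $c_1,\dots,c_k\in\CC$. For every fixed $z\in\RR^{n-1}$, applying Theorem~\ref{t:Mercer} to $\kappa$ at the $k$ points $(x_i^{(1)},z)\in\RR^n$ gives
\[
\sum_{i,j=1}^k c_ic_j^*\,\kappa\big((x_i^{(1)},z),(x_j^{(1)},z)\big)\ge 0.
\]
Integrating this inequality over $z\in\RR^{n-1}$ and interchanging the finite sum with the integral yields, by the definition of $\eta$,
\[
\sum_{i,j=1}^k c_ic_j^*\,\eta\big(x_i^{(1)},x_j^{(1)}\big)=\int_{\RR^{n-1}}\sum_{i,j=1}^k c_ic_j^*\,\kappa\big((x_i^{(1)},z),(x_j^{(1)},z)\big)\,\mathrm{d}z\ge 0,
\]
which is exactly the Mercer condition for $\eta$. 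Since $\eta$ is continuous and lies in $L^2(\RR^2)$, Theorem~\ref{t:Mercer} gives $\widehat\eta\ge 0$.

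The argument is essentially routine; the only mild obstacle is bookkeeping — confirming that $\eta$ inherits the Schwartz property (so that \eqref{eq:trace} and Theorem~\ref{t:Mercer} genuinely apply to it) and justifying the harmless interchanges of finite sums with absolutely convergent integrals of Schwartz functions. As a remark, one can alternatively observe that $\widehat\eta$ is the partial trace of $\widehat\kappa$ over the second tensor factor of $L^2(\RR^n)\cong L^2(\RR)\otimes L^2(\RR^{n-1})$, and the partial trace is a positive, trace-preserving map; but the Mercer-based proof above is self-contained given the tools already developed in this section.
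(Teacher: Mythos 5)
Your proposal is correct and follows essentially the same route as the paper: the Schwartz property of $\eta$ inherited from $\kappa$ gives trace-class, formula \eqref{eq:trace} gives $\Tr(\widehat\eta)=\Tr(\widehat\kappa)$, and positivity is obtained by applying Mercer's criterion to $\kappa$ at points with a common tail $z$ and integrating over $z$. Your extra care in verifying the Schwartz seminorms of $\eta$ is just a more explicit version of the bookkeeping the paper leaves implicit.
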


\begin{proof} As $\kappa\in \iS(\RR^{2n})$ implies $\eta\in \iS(\RR^2)$, we obtain that $\widehat{\kappa}$ and $\widehat{\eta}$ are trace-class. Applying \eqref{eq:trace} for both $\widehat{\eta}$ and $\widehat{\kappa}$, and using the definition of $\eta$ imply
\begin{equation*}
\Tr(\widehat{\eta})=\int_{\RR} \eta(x_1,x_1) \, \mathrm{d}x_1=\int_{\RR^n} \kappa(x,x) \, \mathrm{d}x=\Tr(\widehat{\kappa}).
\end{equation*}
Now assume that $\widehat{\kappa}\geq 0$. Let $x_1,\dots,x_k\in \RR$ and $c_1,\dots,c_k\in \CC$ be arbitrary. Theorem~\ref{t:Mercer} implies that for every $u\in \RR^{n-1}$ we have
 \begin{equation} \label{eq:kappapos}
 \sum_{i,j=1}^k c_i c^{*}_j \kappa(x_i,u;x_j,u)\geq 0. 
 \end{equation} 
The definition of $\eta$ and integrating both sides of \eqref{eq:kappapos} with respect to $u$ imply
\begin{equation*}
\sum_{i,j=1}^k c_i c^{*}_j \eta(x_i,x_j)=\sum_{i,j=1}^k c_i c^{*}_j \int_{\RR^{n-1}} 
 \kappa(x_i,u;x_j,u) \, \mathrm{d}u\geq 0.
\end{equation*}
As $x_1,\dots,x_k\in \RR$ and $c_1,\dots,c_k\in \CC$ were arbitrary, Theorem~\ref{t:Mercer} yields $\widehat{\eta}\geq 0$.
\end{proof}

Now we are ready to prove Theorem~\ref{Th:BHCS}.

\begin{BHCS} 
Let $P\colon \RR^{2n}\to \CC$ be a non-zero polynomial and let $\kappa_G\colon \RR^{2n}\to \CC$ be a Gaussian kernel. If $\widehat{\kappa}_G$ is not positive semidefinite, then $\widehat{\kappa}_{\PG}$ is not positive semidefinite, either. 
\end{BHCS}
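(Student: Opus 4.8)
The plan is to argue by contraposition: assuming $\widehat{\kappa}_{\PG}\geq 0$, we show $\widehat{\kappa}_G\geq 0$. As the remark after Theorem~\ref{theorem:non_positive_gaussian} indicates, a direct one-dimensional slicing of $\widehat{\kappa}_{\PG}$ only controls $A$ versus $C$ and is blind to a non-symmetric $B$, so the first step is to diagonalize the Gaussian part. Let $S\in\Sp(2n,\RR)$ be the symplectic matrix from Williamson's theorem (Theorem~\ref{t:Williamson}) diagonalizing the covariance matrix $G$ of \eqref{eq:G}, and consider $\widehat{S}^{-1}\widehat{\kappa}_{\PG}\widehat{S}$; by Theorem~\ref{t:Weyl}, and since $\widehat{S}$ is unitary, this operator is self-adjoint with the same spectrum as $\widehat{\kappa}_{\PG}$, so $\widehat{\kappa}_{\PG}\geq 0$ iff $\widehat{S}^{-1}\widehat{\kappa}_{\PG}\widehat{S}\geq 0$. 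I would identify its kernel by passing to the Wigner--Weyl picture: since $\kappa_{\PG}=P\kappa_G$, carrying out the Gaussian integral in \eqref{eq:Wigner-transform} gives $W_{\PG}=\widetilde{Q}\cdot W_G$ for some polynomial $\widetilde{Q}$, and $\widetilde{Q}\not\equiv 0$ since $P\not\equiv 0$ and the assignment $P\mapsto W_{P\kappa_G}$ is injective. Composing with $S$ and using \eqref{eq:WG}--\eqref{eq:wgk} gives $W_{\PG}\circ S=(\widetilde{Q}\circ S)\cdot c_G\prod_{k=1}^n W_{G_k}$, so by the inverse transform \eqref{eq:Inverse Wigner-transform} the new kernel has the form $\kappa'_{\PG}(x,y)=P'(x,y)\,c_G\prod_{k=1}^n\kappa_{G_k}(x_k,y_k)$, with $\kappa_{G_k}$ as in \eqref{eq:kgk} and $P'\not\equiv 0$. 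By Claim~\ref{c:positivity}, $\widehat{\kappa}_G\geq 0$ is equivalent to $\mu_k\leq 1$ for every $k$, so from now on I assume for contradiction that $\widehat{\kappa}'_{\PG}\geq 0$ while $\mu_{k_0}>1$ for some index $k_0$.

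The next step is to slice $\kappa'_{\PG}$ along a well-chosen affine line. For $b,v\in\RR^n$ with $v\neq 0$ set $\bar{\kappa}(t,s):=\kappa'_{\PG}(b+tv,\,b+sv)$. Applying Theorem~\ref{t:Mercer} to the points $b+t_iv\in\RR^n$ shows that $\bar{\kappa}$ satisfies Mercer's inequalities on $\RR^2$, and since $\sum_k v_k^2/\mu_k>0$ and $\sum_k\mu_k v_k^2>0$ we have $\bar{\kappa}\in L^2(\RR^2)$, so $\widehat{\bar{\kappa}}\geq 0$. Because each $\kappa_{G_k}$ in \eqref{eq:kgk} is a real Gaussian, substituting the line into it gives $\bar{\kappa}(t,s)=\bar{P}(t,s)\exp\{-\bar{A}(t-s)^2-\bar{C}(t+s)^2+\ell(t,s)\}$, where $\bar{P}(t,s)=P'(b+tv,b+sv)$, $\bar{A}=\tfrac14\sum_k v_k^2/\mu_k$, $\bar{C}=\tfrac14\sum_k\mu_k v_k^2$, and $\ell$ is a real affine function of $t$ and $s$ plus a constant; thus $\exp\{\ell(t,s)\}=c\,g(t)g^*(s)$ with $c>0$ and $g$ real and nowhere vanishing. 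Dividing the kernel by $g(t)g^*(s)$ is a positivity-preserving congruence (it replaces $c_i$ by $c_i/g(t_i)$ in the Mercer sums of Theorem~\ref{t:Mercer}), so $\widehat{\bar{\kappa}}_1\geq 0$, where $\bar{\kappa}_1(t,s)=\bar{P}_1(t,s)\exp\{-\bar{A}(t-s)^2-\bar{C}(t+s)^2\}$ and $\bar{P}_1=c\bar{P}$. The crucial point is that $\bar{A}<\bar{C}$ precisely when $\sum_k v_k^2(\mu_k-1/\mu_k)>0$, which holds at $v=e_{k_0}$ since $\mu_{k_0}>1$, hence by continuity for all $v$ in some neighbourhood $N$ of $e_{k_0}$ with $0\notin N$.

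Finally I have to choose $v\in N$ and a basepoint $b$ so that $\bar{P}$ is not the zero polynomial. Since $\bar{P}(1,0)=P'(b+v,b)$, it suffices to find a pair $(b,v)$ with $v\in N$ and $P'(b+v,b)\neq 0$. If no such pair existed, $P'$ would vanish on the image of the map $(b,v)\mapsto(b+v,b)$ over $\RR^n\times N$, i.e.\ on the nonempty open set $\{(x,y)\in\RR^{2n}:x-y\in N\}$, forcing $P'\equiv 0$, a contradiction. Fixing such $b$ and $v$ we obtain $\bar{P}_1\not\equiv 0$ and $0<\bar{A}<\bar{C}$, so $\exp\{-\bar{A}(t-s)^2-\bar{C}(t+s)^2\}$ is a one-dimensional Gaussian kernel whose operator is not positive semidefinite; Theorem~\ref{theorem:non_positive_gaussian} then forces $\widehat{\bar{\kappa}}_1$ to be not positive semidefinite, contradicting $\widehat{\bar{\kappa}}_1\geq 0$. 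Hence $\mu_k\leq 1$ for all $k$, so $\widehat{\kappa}_G\geq 0$, which proves the contrapositive.

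The step I expect to be the main obstacle is the reduction carried out in the first paragraph: checking carefully that the symplectic diagonalization of the Gaussian covariance matrix sends $\kappa_{\PG}$ to another polynomial Gaussian kernel $\kappa'_{\PG}=P'\kappa'_G$ having the same operator positivity and a genuinely non-zero polynomial factor $P'$. This is precisely where the non-symmetric part of $B$ --- invisible to one-dimensional slices in the original coordinates --- gets absorbed into the symplectic eigenvalues $\mu_k$; everything after that is a one-dimensional slicing argument together with an elementary genericity statement about the slice.
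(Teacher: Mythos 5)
Your argument is correct, and its first half (Wigner--Weyl transform, Williamson diagonalization via Theorem~\ref{t:Williamson}, invariance of the spectrum under the symplectic conjugation of Theorem~\ref{t:Weyl}, and the identification of the transformed kernel as a non-zero polynomial times $\prod_k\kappa_{G_k}$) coincides with the paper's proof; the justification of $P'\not\equiv 0$ via injectivity of the transforms is a legitimate substitute for the paper's ``differentiation under the integral sign'' bookkeeping. Where you genuinely diverge is the reduction from $n$ dimensions to one. The paper takes the partial trace over the coordinates $x'_2,\dots,x'_n$ and invokes Fact~\ref{f:eta}: positivity is preserved, and since the trace is preserved and $\Tr(\widehat{\kappa}_{\PG})>0$ (it is a non-zero positive semidefinite trace-class operator), the resulting one-variable polynomial factor is automatically non-zero, so Theorem~\ref{theorem:non_positive_gaussian} applies at once with the Gaussian factor $\kappa_{G_1}$, $\mu_1>1$. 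You instead restrict the kernel to an affine line $t\mapsto b+tv$ with $v$ near $e_{k_0}$; Mercer's criterion (Theorem~\ref{t:Mercer}) indeed passes to such restrictions, the symmetry of the factors $\kappa_{G_k}$ forces the leftover exponent to be an affine function of $t+s$, so your $g(t)g^{*}(s)$ congruence is valid, and your open-set (Zariski-density) argument correctly produces $(b,v)$ with $\bar P\not\equiv 0$ while keeping $0<\bar A<\bar C$. What each route buys: the partial trace gets non-triviality of the reduced polynomial ``for free'' from trace preservation but requires knowing that integrating out coordinates of a polynomial Gaussian again yields a polynomial times $\kappa_{G_1}$; your slicing avoids all integration (it is purely pointwise via Mercer) at the cost of the perturbation-of-direction and genericity step and the extra congruence removing the linear terms. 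Both reductions feed into the same one-dimensional Theorem~\ref{theorem:non_positive_gaussian}, so your proof is a valid, mildly more elementary alternative to the paper's second half.
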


\begin{proof}
Assume to the contrary that $\widehat{\kappa}_G$ is not positive semidefinite but $\widehat{\kappa}_{\PG}$ is positive semidefinite. Recall that $\kappa_G$ in coordinate representation is 
\begin{equation*}
\kappa_G(x,y)= 
 \exp\left\{-(x-y)^{\top} A (x-y) -i(x-y)^{\top} B (x+y)-(x+y)^{\top} C (x+y)\right\},
\end{equation*}
where $A,B,C\in \RR^{n\times n}$, and $A,C$ are symmetric and positive definite. We will use the notation and results of Subsection~\ref{ss:W}. We have seen in \eqref{eq:WG} that the Wigner--Weyl transform of $\kappa_G$ is 
\begin{equation*}
W_G(x,p)=c_{G} \exp \left\{ -v^{\top}Gv \right\}, 
\end{equation*}
where $v=(x,p)^{\top}$, furthermore the constant $c_G>0$, and the symmetric, positive definite matrix $G\in \RR^{2n\times 2n}$ are also given there. 
By differentiation under the integral sign we obtain that the Wigner--Weyl transform of $\kappa_{\PG}$ satisfies
\begin{equation} \label{eq:WPG}
W_{\PG}(x,p)=Q(x,p)\exp \left\{ -v^{\top}Gv \right\}, 
\end{equation}
where $Q$ is a polynomial in $2n$ variables. By Theorem~\ref{t:Williamson} we can diagonalize $G$ by a symplectic matrix $S\in \Sp(2n,\RR)$, that is, 
\begin{equation} \label{eq:SGS}
S^{\top}GS=\begin{pmatrix}
  \Lambda & 0 \\ 
  0 & \Lambda
\end{pmatrix},
\end{equation} 
where $\Lambda\in \RR^{n\times n}$ is a diagonal matrix with positive diagonal entries $\mu_k$. 

We will apply the linear transformation $(x',p')^{\top}=v'=Sv$ in the phase space. By Theorem~\ref{t:Weyl} the operator corresponding to $W_{\PG}\circ S$ has the same spectrum as $\widehat{\kappa}_{\PG}$, so it remains positive semidefinite. Now \eqref{eq:WPG} and \eqref{eq:SGS} imply that the linear transformation $v\mapsto Sv$ yields the product formula 
\begin{equation} \label{eq:Wprod}
W_{\PG}(x',p')=Q(x,p)\prod^n_{k=1} W_{G_k}(x'_k,p'_k),
\end{equation}
where 
\begin{equation*} W_{G_k}(x,p)=\exp\left\{-\mu_k(x^2+p^2)\right\}.
\end{equation*} 
The inverse Wigner--Weyl transform \eqref{eq:Inverse Wigner-transform} of the Gaussian part $\prod_{k=1}^n W_{G_k}$ of \eqref{eq:Wprod} was already calculated in \eqref{eq:WG}. By differentiating under the integral sign we easily obtain that inverse Wigner--Weyl transform of \eqref{eq:Wprod} only changes by a polynomial factor, that is, there is a polynomial $R$ in $2n$ variables such that our kernel in position representation is 
\begin{equation*}
\kappa'_{\PG}(x',y')=R(x',y')
\prod^n_{k=1} \kappa_{G_k}(x'_k,y'_k),
\end{equation*}
where $\kappa_{G_k}$ are the one-dimensional Gaussian kernels calculated in \eqref{eq:kgk}. As $\widehat{\kappa}_G$ is not positive semidefinite, and $\kappa'_{G}$ is the product of the same factors $\kappa_{G_k}$ by \eqref{eq:K'G}, we obtain that there exists $k$ such that $\widehat{\kappa}_{G_k}$ is not positive semidefinite. We may assume without loss of generality that this $k$ equals $1$. Now we can define the two-variable polynomial 
\begin{equation*}
S(x'_1,y'_1)=\int_{\RR^{n-1}}  R(x'_1,\dots,x'_n,y'_1,x'_2,\dots,x_n') \prod^n_{k=2} \kappa_{G_k}(x'_k,x'_k) \, \mathrm{d} x'_2\cdots \, \mathrm{d} x'_n.
\end{equation*} 
Consider the partial trace of $\kappa'_{\PG}$ in the coordinates $x'_2,\dots,x'_n$, which yields the one-dimensional kernel $\eta$ given by 
\begin{equation*}
\eta(x_1',y_1')=S(x_1',y_1') \kappa_{G_1}(x_1',y_1'). 
\end{equation*}
By Fact~\ref{f:eta} the partial trace operator preserves positivity and the trace as well, so $\widehat{\eta}\geq 0$ and $S$ is not the zero polynomial. However, $\eta$ is the product of $S$ and a one-dimensional Gaussian kernel $\kappa_{G_1}$ such that $\widehat{\kappa}_{G_1}$ is not positive semidefinite, so the positivity of $\widehat{\eta}$ contradicts the one-dimensional result Theorem~\ref{theorem:non_positive_gaussian}. 
 \end{proof}

\section{When the polynomial is of odd degree} \label{s:odd}

The main goal of this section is to prove Theorem~\ref{t:odd}, but first we need some preparation. We will need the Fourier transformation $\iF\colon L^2(\RR^{2n})\to L^2(\RR^{2n})$, here we use the convention that for all $f\in L^2(\RR^{2n})$ and $u,v\in \RR^{n}$ we have
\begin{equation*}  
\iF(f)(u,v)=\iint_{\RR^{2n}} f(x,y)e^{-i\left(u^{\top}x+v^{\top}y\right)} \, \mathrm{d} x \, \mathrm{d} y.
\end{equation*}

We define the integral transformation $\iL$ for polynomial Gaussian kernels $\kappa_{\PG}$ such that for all $u,v\in \CC^n$ we have
\begin{equation} \label{eq:PG} 
\iL(\kappa_{\PG})(u,v)=\iint_{\RR^{2n}}  \kappa_{\PG}(x,y) \exp\left(-iu^{\top}(x-y)-v^{\top}(x+y)\right) \, \mathrm{d} x 
\, \mathrm{d} y. 
\end{equation}

In the following lemma we calculate the $\iL$-transform of a Gaussian kernel. 

\begin{lemma} \label{l:uv} There is a positive constant $c=c(A,B,C)$ such that for all $u,v\in \CC^n$ our integral $\iL(\kappa_G)(u,v)$ equals to
\begin{equation*} 
c\cdot \exp \left[-\left(\frac{1}{2}BC^{-1}v-u\right)^{\top} \left(4A+ BC^{-1}B^{\top}\right)^{-1} \left(\frac{1}{2}BC^{-1}v-u\right)+\frac 14 v^{\top}C^{-1}v \right].
\end{equation*}
\end{lemma}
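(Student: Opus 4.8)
The plan is to reduce the $2n$-dimensional Gaussian integral \eqref{eq:PG} to a completed square and evaluate it. First I would write out the exponent of $\kappa_G(x,y)$ and substitute the new variables $s = x-y$ and $t = x+y$, which have Jacobian $2^{-n}$; in these variables the quadratic part of the exponent becomes $-s^\top A s - i s^\top B t - t^\top C t$, and the linear part coming from $\exp(-iu^\top(x-y) - v^\top(x+y))$ becomes $-i u^\top s - v^\top t$. Thus $\iL(\kappa_G)(u,v)$ equals $2^{-n}$ times
\begin{equation*}
\iint_{\RR^{2n}} \exp\left[-s^\top A s - i s^\top B t - t^\top C t - i u^\top s - v^\top t\right]\, \mathrm{d}s\,\mathrm{d}t.
\end{equation*}
Since $C$ is symmetric positive definite, I would integrate in $t$ first: the $t$-exponent is $-t^\top C t - (iB^\top s + v)^\top t$, and completing the square gives, up to the constant $\pi^{n/2}(\det C)^{-1/2}$, a factor $\exp\left[\tfrac14 (iB^\top s + v)^\top C^{-1}(iB^\top s + v)\right]$. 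Expanding this reintroduces a quadratic term $-\tfrac14 s^\top B C^{-1} B^\top s$ (note the sign), a cross term $\tfrac{i}{2} s^\top B C^{-1} v$, and the $s$-free term $\tfrac14 v^\top C^{-1} v$ that appears in the claimed formula.

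Next I would collect all $s$-dependent terms. The quadratic part in $s$ is $-s^\top(A + \tfrac14 B C^{-1} B^\top) s$, and writing $4A + B C^{-1} B^\top =: 4\widetilde A$ this is $-s^\top \widetilde A s$; one must check $\widetilde A$ is positive definite, which holds because $A \succ 0$ and $B C^{-1} B^\top \succeq 0$ (as $C^{-1} \succ 0$). The linear part in $s$ is $-i u^\top s + \tfrac{i}{2} s^\top B C^{-1} v = i\left(\tfrac12 B C^{-1} v - u\right)^\top s$ after using symmetry of $C^{-1}$ to transpose the scalar $s^\top B C^{-1} v$. Setting $w := \tfrac12 B C^{-1} v - u$, the $s$-integral is $\int \exp[-s^\top \widetilde A s + i w^\top s]\,\mathrm{d}s$, which by the standard Gaussian formula equals $\pi^{n/2}(\det \widetilde A)^{-1/2}\exp[-\tfrac14 w^\top \widetilde A^{-1} w]$. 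Since $\widetilde A^{-1} = 4(4A + B C^{-1} B^\top)^{-1}$, the $-\tfrac14 w^\top \widetilde A^{-1} w$ becomes $-w^\top (4A + BC^{-1}B^\top)^{-1} w$, matching exactly the first term in the statement.

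Assembling the pieces, $\iL(\kappa_G)(u,v)$ equals the product of the $u,v$-dependent exponential in the statement and the constant $c = 2^{-n}\pi^{n/2}(\det C)^{-1/2}\pi^{n/2}(\det \widetilde A)^{-1/2} = \pi^n 2^{-n}(\det C)^{-1/2}(\det(4A + BC^{-1}B^\top))^{-1/2} \cdot 2^n$, which is manifestly a positive constant depending only on $A,B,C$; I would simplify it but the exact value is immaterial to the lemma. The one genuine subtlety is legitimacy of the manipulations for \emph{complex} $u,v$: the integrals above are Gaussian in the real variables $s,t$ with a purely imaginary linear term (after the $t$-integration the linear term in $s$ is $iw^\top s$ with $w \in \CC^n$), so the integral still converges absolutely — the real part of the quadratic form controls decay — and the shift-of-contour / analytic-continuation argument that justifies the completed-square formula for complex linear coefficients applies. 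This is the only point requiring a word of care; everything else is routine Gaussian integration. Alternatively, one can prove the formula first for $u,v \in \RR^n$ and then note both sides are entire in $(u,v)$ and agree on $\RR^{2n}$, hence agree on $\CC^{2n}$ by uniqueness of analytic continuation.
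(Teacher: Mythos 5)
Your proposal is correct and takes essentially the same route as the paper: pass to the difference/sum variables, integrate out the sum variable first via the complete-the-square Gaussian formula (valid for complex linear terms, as in \eqref{eq: gaussian_integral}), then integrate the difference variable against $A+\tfrac14 BC^{-1}B^{\top}$. Your constant also agrees with the paper's $c=\pi^{n}\left(\det\left(4AC+BC^{-1}B^{\top}C\right)\right)^{-1/2}$, since $\det\left(4AC+BC^{-1}B^{\top}C\right)=\det\left(4A+BC^{-1}B^{\top}\right)\det C$.
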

\begin{proof} 
For a symmetric, positive definite real matrix $M$ and $b\in \CC^n$ the following formula is well known, see e.\,g.~\cite[Section~I.2.~Appendix 2]{Zee2010} for the proof: 
\begin{equation} \label{eq: gaussian_integral}
\int_{\RR^n} \exp \left[-x^{\top}Mx+b^{\top}x \right]\, \mathrm{d} x = \frac{\pi^{\frac{n}{2}}}{\sqrt{\det M}}\exp{\left(\frac{1}{4}b^{\top} M^{-1}b \right)}.
\end{equation}
First we make a change of variables from $x,y$ to $r=x-y$ and $R=\frac{x+y}{2}$, note that the determinant of its Jacobian matrix has absolute value $1$. Then using \eqref{eq: gaussian_integral} with $M_1=4C$ and $b_1=-2(v+iB^{\top}r)$, and again with $M_2=A+\frac 14 BC^{-1}B^{\top}$ and $b_2=i\left(\frac 12 BC^{-1}v-u\right)$ we obtain that
\begin{align*}
&\iL(\kappa_G)(u,v)=\iint_{\RR^{2n}} \exp \left[-r^{\top} A r -2ir^{\top} BR-4R^{\top} CR-iu^{\top}r-2v^{\top}R\right] \, \mathrm{d} R 
\, \mathrm{d} r \\
&=\frac{\pi^{\frac n2}\exp\left(\frac 14 v^{\top}C^{-1}v\right)}{\sqrt{\det(4C)}} 
\int_{\RR^n} \exp\left[-r^{\top}\left(A+\frac 14 BC^{-1}B^{\top}\right)r+i\left(\frac 12 v^{\top}C^{-1}v\right)r\right] \, \mathrm{d} r \\
&=c\exp \left[-\left(\frac{1}{2}BC^{-1}v-u\right)^{\top} \left(4A+ BC^{-1}B^{\top}\right)^{-1} \left(\frac{1}{2}BC^{-1}v-u\right)+\frac 14 v^{\top}C^{-1}v \right],
\end{align*}
where $c=\pi^{n}\left(\det\left(4AC+BC^{-1}B^{\top}C\right)\right)^{-\frac 12}$. This finishes the proof. 
\end{proof}

Before proving Theorem~\ref{t:odd} we also need a couple of facts. 

\begin{fact} \label{f:Q}
Let $Q\colon \RR^{2n}\to \CC$ be a polynomial with $2n$ real variables.
\begin{enumerate}[(i)] 
\item \label{i:real} If $Q$ takes only real values, then all of its coefficients are real;
\item \label{i:neg} if $Q$ has real coefficients and $\deg Q$ is odd, then $Q$ takes negative values.
\end{enumerate}
\end{fact}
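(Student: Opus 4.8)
For part~\eqref{i:real}, the plan is to argue that the map sending a polynomial to its tuple of coefficients is injective, and that the evaluation functionals at suitable lattice points recover the coefficients by a linear system with real coefficients. Concretely, I would enumerate the monomials $z^{\alpha}$ ($z=(x,y)\in\RR^{2n}$, $\alpha$ a multiindex with $|\alpha|\le \deg Q$) and pick a finite set $F\subset\ZZ^{2n}$ of evaluation points, say a product grid $\{0,1,\dots,\deg Q\}^{2n}$, large enough that the generalized Vandermonde matrix $\big(p^{\alpha}\big)_{p\in F,\,\alpha}$ has a square invertible submatrix (a standard multivariate polynomial-interpolation fact). Then the coefficients of $Q$ solve a linear system whose matrix has rational entries and whose right-hand side entries $Q(p)$, $p\in F$, are real by hypothesis; hence the coefficients are real. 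Alternatively, and perhaps more cleanly, I would induct on $\deg Q$: the constant term is $Q(0)\in\RR$; subtracting it and dividing by each variable in turn reduces to lower-degree polynomials that still take real values on a Zariski-dense set, so one concludes by the induction hypothesis. Either route is routine; the only care needed is to phrase the interpolation step correctly over $\RR^{2n}$ rather than $\RR$.

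For part~\eqref{i:neg}, the plan is to restrict $Q$ to a generic line and reduce to the one-variable case. Write $d=\deg Q$, odd, and let $Q_d$ denote the (nonzero) homogeneous part of degree $d$; this is a real polynomial in $2n$ variables, homogeneous of odd degree, so $Q_d(-w)=-Q_d(w)$, and since $Q_d\not\equiv 0$ we may fix $w\in\RR^{2n}$ with $Q_d(w)\neq 0$; replacing $w$ by $-w$ if necessary, assume $Q_d(w)<0$. Now consider the single-variable real polynomial $t\mapsto Q(tw)=Q_d(w)\,t^{d}+(\text{lower order in }t)$. Its leading coefficient $Q_d(w)$ is negative and $d$ is odd, so $Q(tw)\to -\infty$ as $t\to+\infty$; in particular $Q(tw)<0$ for all sufficiently large $t$, which exhibits a point where $Q$ is negative.

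I do not expect a genuine obstacle here: both parts are standard facts about real multivariate polynomials, and the degree-$d$ homogeneous component together with a generic-direction restriction handles \eqref{i:neg} immediately. The only point requiring a sentence of justification is that the top-degree homogeneous part $Q_d$ is not identically zero (true by definition of $\deg Q$) and that a nonzero homogeneous polynomial is not the zero function on $\RR^{2n}$ (true since $\RR$ is infinite, e.g.\ by \eqref{i:real}-type interpolation or by the fact that a nonzero polynomial over an infinite field has a non-root). I would state these in passing rather than belabor them.
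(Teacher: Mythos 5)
Your proposal is correct. Part~\eqref{i:neg} is essentially the paper's argument: the paper also isolates the top homogeneous component $Q_1$ of degree $d$, picks $(x_0,y_0)$ with $Q_1(x_0,y_0)\neq 0$, and looks at the one-variable polynomial $c\mapsto Q(cx_0,cy_0)$, whose degree is odd with nonzero leading coefficient, hence it takes negative values; your extra step of flipping $w$ to $-w$ to make the leading coefficient negative is harmless but unnecessary. For part~\eqref{i:real} you take a genuinely different and heavier route: the paper simply writes $Q=Q_1+iQ_2$ with $Q_1,Q_2$ real-coefficient polynomials, observes that real-valuedness forces $Q_2$ to vanish identically on $\RR^{2n}$, hence $Q_2$ is the zero polynomial and $Q=Q_1$. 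Your interpolation argument (real data at rational grid points plus an invertible rational Vandermonde submatrix forces real coefficients) is valid but does more work than needed; your alternative inductive sketch is the weakest point of the write-up, since after subtracting $Q(0)$ a multivariate polynomial is not divisible by a single variable, so ``dividing by each variable in turn'' would need to be reformulated (e.g.\ via the decomposition into real and imaginary parts, which is exactly the paper's shortcut). Both routes ultimately rest on the same elementary fact you flag explicitly and the paper uses implicitly: a polynomial vanishing identically on $\RR^{2n}$ has all coefficients zero.
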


\begin{proof}
First we prove \eqref{i:real}. We can write $Q=Q_1+iQ_2$, where the polynomials $Q_1,Q_2\colon \RR^{2n}\to \RR$ have real coefficients. As $Q$ takes only real values, we obtain that $Q_2(x,y)=0$ for all $x,y\in \RR^n$, which implies that $Q_2\equiv 0$. Then $Q=Q_1$, and the claim follows. 

Now we prove \eqref{i:neg}. Let $d=\deg Q$, which is odd by our assumption. Let us decompose $Q$ as $Q=Q_1+Q_2$ such that $Q_1,Q_2\colon \RR^{2n}\to \RR$ are polynomials with real coefficients, and $Q_1$ contains all the monomials of $Q$ with degree $d$. Then clearly $\deg Q_2<d$ and $Q_1$ is homogeneous of degree $d$, that is, $Q_1(cx,cy)=c^dQ_1(x,y)$ for all $c\in \RR$ and $x,y\in \RR^n$. Fix $x_0,y_0\in \RR^n$ such that $Q_1(x_0,y_0)\neq 0$, and let $r_0=Q_1(x_0,y_0)$. Let us define the one-variable polynomial $R\colon \RR\to \RR$ as 
\begin{equation*} R(c)=Q(cx_0,cy_0)=r_0c^d+Q_2(cx_0,cy_0).
\end{equation*} 
As $\deg Q_2<d$, we have $\deg R=d$ and its leading coefficient is $r_0\neq 0$. As $d$ is odd, $R$ takes negative values, which implies that $Q$ takes negative values as well. 
\end{proof}

\begin{fact} \label{f:L1} 
Assume that $\kappa\in L^2(\RR^{2n})$ is a self-adjoint kernel such that $\widehat{\kappa}$ is positive semidefinite. Let $g\colon \RR^n \to \CC$ be a continuous function and suppose that the function $\tau(x,y)=g(x)g^{*}(y)\kappa(x,y)$ satisfies $\tau\in L^1(\RR^{2n})$.  Then $\iint_{\RR^{2n}} \tau(x,y) \, \mathrm{d} x 	\, \mathrm{d} y \geq 0$.  
\end{fact}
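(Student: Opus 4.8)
The plan is to reduce the integral to a limit of the Mercer-type sums guaranteed by Theorem~\ref{t:Mercer} (equivalently~\eqref{eq:Mercer}), using the continuity of $\kappa$ and the integrability of $\tau$. Since $\widehat{\kappa}\geq 0$ and $\kappa$ is continuous, Theorem~\ref{t:Mercer} gives that for every choice of finitely many points $x_1,\dots,x_k\in\RR^n$ and coefficients $c_1,\dots,c_k\in\CC$ we have $\sum_{i,j} c_i c_j^{*}\kappa(x_i,x_j)\geq 0$. First I would take a mesh of points $x^{(m)}_1,\dots,x^{(m)}_{k_m}$ discretizing a large box $[-N,N]^n$, and set $c^{(m)}_i=g(x^{(m)}_i)\,\Delta^{n}_m$ where $\Delta^n_m$ is the volume of the mesh cell; then
\begin{equation*}
0\leq \sum_{i,j=1}^{k_m} c^{(m)}_i (c^{(m)}_j)^{*}\kappa\bigl(x^{(m)}_i,x^{(m)}_j\bigr)=\sum_{i,j=1}^{k_m} g\bigl(x^{(m)}_i\bigr)g^{*}\bigl(x^{(m)}_j\bigr)\kappa\bigl(x^{(m)}_i,x^{(m)}_j\bigr)\,\Delta^{2n}_m,
\end{equation*}
which is a Riemann sum for $\iint_{[-N,N]^{2n}}\tau(x,y)\,\mathrm{d}x\,\mathrm{d}y$. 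Letting the mesh $\to 0$ and then $N\to\infty$ would give the claim.

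The key steps in order: (1) note $g$ continuous and $\kappa$ continuous force $\tau$ continuous, so the Riemann sums over a fixed box $[-N,N]^{2n}$ converge to $\iint_{[-N,N]^{2n}}\tau$ as the mesh is refined; each such sum is $\geq 0$ by the above, hence the box integral is $\geq 0$ for every $N$. (2) Since $\tau\in L^1(\RR^{2n})$, dominated convergence (or just the definition of the improper/Lebesgue integral over $\RR^{2n}$) gives $\iint_{[-N,N]^{2n}}\tau\to\iint_{\RR^{2n}}\tau$ as $N\to\infty$, so the full integral is a limit of nonnegative numbers and therefore nonnegative. An alternative, slightly cleaner route avoids Riemann sums: approximate $g$ on a box by a finite linear combination that still lets one invoke~\eqref{eq:Mercer} directly, but the mesh argument is the most transparent.

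The main obstacle is purely a matter of care with the limiting procedure rather than a genuine difficulty: one must make sure the Riemann sums actually converge to the integral over the box, which needs $\tau$ to be (Riemann) integrable on each compact box — this is fine because $\tau$ is continuous there — and one must separately justify passing from the box integrals to the integral over all of $\RR^{2n}$, which is exactly where the hypothesis $\tau\in L^1(\RR^{2n})$ is used. No finiteness or convergence issue arises in the Mercer sums themselves since each involves only finitely many terms. I would also remark that we do not even need $\kappa\in L^2$ here beyond what is already assumed; continuity of $\kappa$ plus $\widehat\kappa\geq0$ is what feeds Theorem~\ref{t:Mercer}, and continuity of $g$ plus $\tau\in L^1$ handles the analysis.
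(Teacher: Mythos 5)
Your strategy has a genuine gap relative to the statement as given: it silently strengthens the hypotheses by assuming that $\kappa$ is continuous. Fact~\ref{f:L1} only assumes $\kappa\in L^2(\RR^{2n})$ (self-adjoint, with $\widehat{\kappa}\geq 0$); continuity is required of $g$ alone. Your whole argument runs through Theorem~\ref{t:Mercer}, which is stated for \emph{continuous} kernels, and through Riemann sums built from pointwise values $\kappa\bigl(x^{(m)}_i,x^{(m)}_j\bigr)$ — for a general $L^2$ kernel (an equivalence class modulo null sets) these values are not even well defined, so neither the Mercer inequalities nor the Riemann-sum convergence can be formulated. Your closing remark that ``continuity of $\kappa$ plus $\widehat\kappa\geq 0$ is what feeds Theorem~\ref{t:Mercer}'' makes the extra assumption explicit: what you prove is the special case of a continuous kernel (which does cover the paper's one application, where $\kappa=\kappa_{\PG}$ is Schwartz), not the Fact as stated.

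The paper's proof avoids pointwise evaluation entirely and is shorter: truncate $g$ rather than discretize it. Set $g_n=g\cdot\mathbf{1}_{\{|x|\leq n\}}$; continuity of $g$ gives $g_n\in L^2(\RR^n)$, so positivity of $\widehat{\kappa}$ applied to (the conjugate of) $g_n$ yields
\begin{equation*}
0\leq \iint_{\RR^{2n}} g_n(x)g_n^{*}(y)\kappa(x,y)\,\mathrm{d}x\,\mathrm{d}y ,
\end{equation*}
and since $|g_n(x)g_n^{*}(y)\kappa(x,y)|\leq|\tau(x,y)|$ with $\tau\in L^1(\RR^{2n})$, dominated convergence sends these quantities to $\iint\tau$, giving the claim. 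Note that this final limiting step is exactly your step (2); the missing idea is to use $L^2$ truncations of $g$ in place of Mercer sums so that no pointwise control on $\kappa$ is ever needed. If you wish to keep your argument, you must either add the hypothesis that $\kappa$ is continuous (weakening the Fact) or supply a separate reduction from $L^2$ kernels to continuous ones, which is not easier than the direct truncation proof.
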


\begin{proof}
As $\tau\in L^1(\RR^{2n})$, the integral $\iint_{\RR^{2n}} \tau(x,y) \, \mathrm{d} x 
\, \mathrm{d} y$ exists. For positive integers $n$ define $g_n\colon \RR^{n} \to \CC$ such that $g_n(x)=g(x)$ if $|x|\leq n$ and $g_n(x)=0$ otherwise. As $g$ is continuous, we have $g_n\in L^2(\RR^n)$. Therefore, the positivity of $\widehat{\kappa}$ implies that for all $n$ we have 
\begin{equation*}
0\leq \iint_{\RR^{2n}} g_n(x)g_n^{*}(y) \kappa(x,y)   \, \mathrm{d} x 	\, \mathrm{d} y \to \iint_{\RR^{2n}} \tau(x,y) \, \mathrm{d} x 	\, \mathrm{d} y
\end{equation*} 
as $n\to \infty$, hence $\iint_{\RR^{2n}} \tau(x,y) \, \mathrm{d} x 	\, \mathrm{d} y \geq 0$ holds.
\end{proof}
Now we are able to prove Theorem~\ref{t:odd}. 

\begin{odd} Assume that $P\colon \RR^{2n}\to \CC$ is a polynomial of odd degree, and $\kappa_G\colon \RR^{2n}\to \CC$ is any Gaussian kernel. Then $\widehat{\kappa}_{\PG}$ is not positive semidefinite. 
\end{odd}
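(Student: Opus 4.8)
The plan is to pass to the integral transform $\iL$ and exploit that, for a polynomial Gaussian kernel, $\iL(\kappa_{\PG})$ is a polynomial multiple of the Gaussian $\iL(\kappa_G)$ computed in Lemma~\ref{l:uv}. I would assume towards a contradiction that $\widehat{\kappa}_{\PG}$ is positive semidefinite; then it is self-adjoint, so the continuous kernel satisfies $\kappa_{\PG}(y,x)=\kappa_{\PG}^{*}(x,y)$. Fix $u,v\in\RR^n$ and put $g(x)=\exp(-(iu+v)^{\top}x)$, which is continuous. A short computation gives $g(x)g^{*}(y)=\exp(-iu^{\top}(x-y)-v^{\top}(x+y))$, so $\tau(x,y):=g(x)g^{*}(y)\kappa_{\PG}(x,y)$ is exactly the integrand of $\iL(\kappa_{\PG})(u,v)$ in \eqref{eq:PG}. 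Since $A$ and $C$ are positive definite, $|\kappa_G(x,y)|=\exp\{-(x-y)^{\top}A(x-y)-(x+y)^{\top}C(x+y)\}$ decays like a genuine Gaussian on $\RR^{2n}$, dominating the polynomial $|P|$ and the factor $|g(x)g^{*}(y)|=\exp(-v^{\top}(x+y))$, hence $\tau\in L^1(\RR^{2n})$. Fact~\ref{f:L1} then yields $\iL(\kappa_{\PG})(u,v)=\iint_{\RR^{2n}}\tau\geq 0$ for every $u,v\in\RR^n$.

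The next step is to identify the structure of $\iL(\kappa_{\PG})$. Substituting $r=x-y$, $R=\tfrac{x+y}{2}$ as in the proof of Lemma~\ref{l:uv} (the Jacobian has absolute value $1$) turns $P(x,y)$ into a polynomial $\widetilde P(r,R)$ of the same degree $d=\deg P$, and the exponential factor $\exp(-iu^{\top}r-2v^{\top}R)$ allows $\widetilde P$ to be pulled out as a constant-coefficient differential operator via $r_j\mapsto i\partial_{u_j}$ and $R_j\mapsto-\tfrac12\partial_{v_j}$. Differentiating under the integral sign, which is legitimate by the Gaussian decay, gives $\iL(\kappa_{\PG})=D\,\iL(\kappa_G)$ for a differential operator $D$ of order exactly $d$ whose principal symbol is nonzero, since it is the top homogeneous part of $\widetilde P$ composed with an invertible linear change of the derivative variables. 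By Lemma~\ref{l:uv}, $\iL(\kappa_G)(u,v)=c\exp(q(u,v))$ with $c>0$ and $q$ a homogeneous quadratic form; the substitution $s=u-\tfrac12 BC^{-1}v$, $t=v$ puts $q$ into the shape $-s^{\top}(4A+BC^{-1}B^{\top})^{-1}s+\tfrac14 t^{\top}C^{-1}t$, which is real-valued on $\RR^{2n}$ and, since $4A+BC^{-1}B^{\top}$ and $C$ are positive definite, \emph{nondegenerate}. Thus $\iL(\kappa_{\PG})(u,v)=Q(u,v)\exp(q(u,v))$ for a polynomial $Q$, and the point I would stress is that $\deg Q=d$: writing $Q=c\,e^{-q}D(e^{q})$ and using $e^{-q}\partial^{\beta}e^{q}=\prod_j(\partial_j q)^{\beta_j}+(\text{terms of lower degree})$, where $\partial_1,\dots,\partial_{2n}$ are the partials in $(u,v)$, the degree-$d$ part of $Q$ is the principal symbol of $D$ evaluated at the gradient of $q$, and this is a nonzero polynomial in $(u,v)$ because the linear map $(u,v)\mapsto\nabla q(u,v)$ is invertible (nondegeneracy of $q$) and the principal symbol of $D$ is nonzero.

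Finally, since $\exp(q(u,v))>0$ for all real $u,v$, the inequality $\iL(\kappa_{\PG})(u,v)\geq 0$ forces $Q(u,v)\geq 0$ on $\RR^{2n}$. Then Fact~\ref{f:Q}\,\eqref{i:real} shows $Q$ has real coefficients, and since $\deg Q=d$ is odd, Fact~\ref{f:Q}\,\eqref{i:neg} shows $Q$ attains negative values, a contradiction. Hence $\widehat{\kappa}_{\PG}$ is not positive semidefinite. I expect the main obstacle to be the degree bookkeeping in the second paragraph: one must verify that applying $D$ to the Gaussian $\iL(\kappa_G)$ does not lower the polynomial degree, and this is precisely the place where the nondegeneracy of the quadratic form $q$ (a consequence of the positive definiteness of $A$ and $C$) is used. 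The remaining ingredients (the $L^1$ bound, differentiation under the integral sign, and the positivity of $\exp(q)$ on $\RR^{2n}$) are routine.
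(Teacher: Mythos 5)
Your proof is correct, and its first and last stages coincide with the paper's: you pass to the transform $\iL$ of \eqref{eq:PG}, use Fact~\ref{f:L1} with $g_{u,v}(x)=\exp(-iu^{\top}x-v^{\top}x)$ to get $\iL(\kappa_{\PG})(u,v)\geq 0$ on $\RR^{2n}$, write $\iL(\kappa_{\PG})=Q\,\iL(\kappa_G)$ by differentiating under the integral sign, and then invoke Fact~\ref{f:Q} to contradict nonnegativity of an odd-degree real polynomial. Where you genuinely diverge is at the pivotal point that the paper itself flags as the delicate one, namely ruling out $\deg Q<\deg P$ through cancellations. The paper resolves this by relating $\iL$ to the Fourier transform $\iF$, applying $\iF$ twice, and using the Fourier inversion formula to recover $P(-x,-y)$ as a polynomial of degree at most $\deg Q$, forcing $\deg Q=\deg P$. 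You instead compute the top-degree part of $Q$ directly: writing $\iL(\kappa_{\PG})=D\bigl(c\,e^{q}\bigr)$ for a constant-coefficient operator $D$ of order exactly $d$ with nonzero principal symbol $\sigma_D$, the identity $e^{-q}\partial^{\beta}e^{q}=\prod_j(\partial_j q)^{\beta_j}+\text{(lower degree)}$ shows that the degree-$d$ part of $Q$ is $\sigma_D(\nabla q)$, which is a nonzero polynomial because $q$ is a nondegenerate real quadratic form (so $w\mapsto\nabla q(w)=2\mathcal{Q}w$ is invertible) — this is where the positive definiteness of $4A+BC^{-1}B^{\top}$ and $C$ enters. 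Both arguments are sound; yours is more self-contained (it stays entirely on the $\iL$ side and identifies the leading term of $Q$ explicitly, rather than only its degree), while the paper's avoids any symbol calculus at the price of a second round of differentiation under the integral sign and an appeal to Fourier inversion. One small point worth making explicit if you write this up: $\sigma_D$ has complex coefficients, so you should note that a nonzero complex-coefficient polynomial composed with an invertible real linear map cannot vanish identically on $\RR^{2n}$, which is what guarantees the degree-$d$ part of $Q$ survives before you apply Fact~\ref{f:Q}\,\eqref{i:real} to conclude its coefficients are in fact real.
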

\begin{proof}
Assume to the contrary that $\widehat{\kappa}_{\PG}$ is positive semidefinite. Recall the integral transformation $\iL$ from \eqref{eq:PG}, and note that \begin{equation*} 
\exp\left(-iu^{\top}(x-y)-v^{\top}(x+y)\right)=g_{u,v}(x)g_{u,v}^{*}(y),
\end{equation*} 
 where $g_{u,v}(x)=\exp(-iu^{\top}x-v^{\top}x)$. Thus Fact~\ref{f:L1} implies that 
 \begin{equation} \label{eq:uv}
 \iL(\kappa_{\PG})(u,v)\geq 0 \quad \text{for all} \quad u,v\in \RR^n.
\end{equation}  
 We want to calculate $\iL(\kappa_{\PG})$ by differentiating under the integral sign of $\iL(\kappa_G)$. We define the differential operators $D_i$ for all $1\leq i\leq 2n$ such that
 \begin{equation*}
D_i= \begin{cases}
\frac 12\left(i \frac{\partial}{\partial u_i}-\frac{\partial}{\partial v_i}\right) & \textrm{if } 1\leq i\leq n,   \\
-\frac {1}{2}\left(i \frac{\partial}{\partial u_i}+\frac{\partial}{\partial v_i}\right) & \textrm{if } n+1\leq i\leq 2n.  
 \end{cases}
 \end{equation*}
For any polynomial $R=R(x,y)$ and $1\leq i\leq n$ by Lemma~\ref{l:uv} we obtain that
\begin{equation} \label{eq:Di} D_i \iL(R\kappa_G)=\iL(x_iR\kappa_G) \quad 
\text{and} \quad D_{n+i}\iL(R\kappa_G)=\iL(y_iR\kappa_G).
\end{equation} 
For the polynomial $P$ choose the set $\iI\subset \NN^{2n}$ such that 
 \begin{equation*}
 P(x,y)=\sum_{(k_1,\dots,k_{2n})\in \iI} a(k_1,\dots, k_{2n})x_1^{k_1}\cdots x_n^{k_n} y_1^{k_{n+1}}\cdots y_n^{k_{2n}}. 
 \end{equation*}
Let us define the differential operator $P(D)$ as 
\begin{equation*}
P(D)=\sum_{(k_1,\dots,k_{2n})\in \iI} a(k_1,\dots, k_{2n})D_1^{k_1}\cdots D_{2n}^{k_{2n}}.
  \end{equation*}
Then \eqref{eq:Di} implies that 
\begin{equation} \label{eq:PD} 
\iL(\kappa_{\PG})=P(D) \iL(\kappa_{G}).
\end{equation}
Clearly $(D_1^{k_1}\cdots D_{2n}^{k_{2n}})\iL(\kappa_{G})=Q_{k_1\dots k_{2n}}\iL(\kappa_{G})$, where $Q_{k_1\dots k_{2n}}$ is a polynomial of degree exactly $k_1+\dots+k_{2n}$. Therefore, by \eqref{eq:PD} we obtain that there is a polynomial $Q$ such that $\deg Q\leq \deg P$ and 
\begin{equation} \label{eq:Q} 
\iL(\kappa_{\PG})=Q\iL(\kappa_{G}).
\end{equation}  
As $\iL(\kappa_{\PG})$ takes only real values, so does $Q$, hence the coefficients of $Q$ are reals by Fact~\ref{f:Q}\,\eqref{i:real}. Now it is enough to show that $\deg Q=\deg P$. Indeed, assume that $\deg Q=\deg P$. Then by Fact~\ref{f:Q}\,\eqref{i:neg} the odd degree polynomial $Q$ takes negative values, so  $\iL(\kappa_{\PG})$ takes negative values by \eqref{eq:Q} as well, but this  contradicts \eqref{eq:uv}. Note that it is unclear yet why $\deg Q<\deg P$ cannot happen due to cancellations in the linear combination of the polynomials $Q_{k_1\dots k_{2n}}$, $(k_1,\dots, k_{2n})\in \iI$.

Finally, we prove $\deg Q=\deg P$. Assume to the contrary that $\deg Q<\deg P$. Observe that \eqref{eq:PG} provides an easy connection between the transforms $\iF$ and $\iL$. Applying this twice with \eqref{eq:Q} implies that 
\begin{align} \label{eq:F}
\begin{split}
\iF(\kappa_{\PG})(u,v)&=\iL(\kappa_{\PG})\left(\frac 12(u+v),\frac i2(v-u)\right) \\
&=Q\left(\frac 12(u+v),\frac i2(v-u)\right) \iL(\kappa_{G})\left(\frac 12(u+v),\frac i2(v-u)\right) \\
&=Q'(u,v)\iF(\kappa_{G})(u,v),
\end{split} 
\end{align} 
where $Q'(u,v)=Q((u+v)/2,i(v-u)/2)$, so $Q'$ is a polynomial  
with $\deg Q'\leq \deg Q$. Now we will take the Fourier transform of both sides of the \eqref{eq:F}. 
On one hand, by the Fourier inversion formula \cite[Theorem~2.4~in~Chapter~6]{steinfourier} we have an absolute constant $c_n>0$ such that
\begin{equation} \label{eq:F1} \iF(\iF(\kappa_{\PG}))(x,y)=c_n \kappa_{\PG}(-x,-y)=c_n P(-x,-y) \kappa_G(-x,-y).
\end{equation} 
On the other hand, differentiating under the integral sign similarly as earlier and using the above Fourier inversion formula, we obtain that there is a polynomial $Q''$ with $\deg Q''\leq \deg Q'$ satisfying
\begin{equation} \label{eq:F2}
\iF(Q'\iF(\kappa_{G}))(x,y)=Q''(x,y)\iF(\iF(\kappa_{G}))(x,y)=c_nQ''(x,y)\kappa_{G}(-x,-y).
\end{equation} 
By \eqref{eq:F} the Fourier transforms in \eqref{eq:F1} and \eqref{eq:F2} are equal, which yields that $Q''(x,y)=P(-x,-y)$ for all $x,y\in \RR^n$. However, $\deg Q''\leq \deg Q<\deg P$ by our indirect hypothesis, which is clearly a contradiction. This completes the proof. 
\end{proof}

\section{A useful preorder on Gaussian kernels} \label{s:order}

Before proving Theorem~\ref{t:order} we need a couple of facts. 

\begin{fact}\label{f:mu} Let $\mu,\nu \in L^2(\RR^{2n})$ be self-adjoint kernels such that $\widehat{\mu}$ and $\widehat{\nu}$ are positive semidefinite. Then $\mu \nu \in L^2(\RR^{2n})$ and $\widehat{\mu \nu}$ is positive semidefinite, too. 
\end{fact}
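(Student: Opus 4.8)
The plan is to reduce the positive semidefiniteness of $\widehat{\mu\nu}$ to that of $\widehat\mu$ by absorbing $\nu$ into the test vector through its spectral decomposition --- this is nothing but the Schur (Hadamard) product theorem recast for integral operators --- while deriving the membership $\mu\nu\in L^2(\RR^{2n})$ separately from a Cauchy--Schwarz estimate for positive semidefinite kernels. The \emph{main obstacle} is precisely this $L^2$-membership together with the attendant passage to the limit; once it is settled, the positivity is soft.

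Since $\widehat\nu$ is positive semidefinite and Hilbert--Schmidt, the spectral theorem furnishes an orthonormal system $\{\psi_j\}_{j\ge 0}$ in $L^2(\RR^n)$ and eigenvalues $\rho_j\ge 0$ with $\sum_j\rho_j^2<\infty$ and $\nu(x,y)=\sum_j\rho_j\psi_j(x)\psi_j^{*}(y)$, the series converging in $L^2(\RR^{2n})$. Then for $f\in L^2(\RR^n)$ one has, formally,
\begin{equation*}
\langle\widehat{\mu\nu}f,f\rangle=\iint_{\RR^{2n}}\mu(x,y)\,\nu(x,y)\,f(y)\,f^{*}(x)\,\mathrm{d}x\,\mathrm{d}y=\sum_j\rho_j\,\langle\widehat\mu\,h_j,h_j\rangle,\qquad h_j:=\psi_j^{*}f,
\end{equation*}
and every term is $\ge 0$ because $\rho_j\ge 0$ and $\widehat\mu\ge 0$. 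To justify exchanging $\sum_j$ with the integral I would truncate: set $\nu^{(M)}:=\sum_{j\le M}\rho_j\psi_j\otimes\psi_j^{*}$, so that $\widehat{\mu\nu^{(M)}}=\sum_{j\le M}\rho_j\widehat{\mu_j}$ with $\mu_j(x,y):=\psi_j(x)\psi_j^{*}(y)\mu(x,y)$, and each $\widehat{\mu_j}$ is positive semidefinite by the substitution $g(x)g^{*}(y)$ with $g=\psi_j^{*}$ --- exactly the device used in the proof of Fact~\ref{f:L1} --- whence $\widehat{\mu\nu^{(M)}}\ge 0$. If $\mu\nu^{(M)}\to\mu\nu$ in $L^2(\RR^{2n})$, then $\widehat{\mu\nu^{(M)}}\to\widehat{\mu\nu}$ in Hilbert--Schmidt norm, and an operator-norm limit of positive semidefinite operators is positive semidefinite, giving $\widehat{\mu\nu}\ge 0$.

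It remains to secure $\mu\nu\in L^2(\RR^{2n})$ and this $L^2$-convergence, which is where I expect essentially all the work to lie. From $\widehat\nu\ge 0$ and Cauchy--Schwarz in $\ell^2$ one gets $|\nu(x,y)|^2\le\nu(x,x)\,\nu(y,y)$ for a.e.\ $x,y$, with $\nu(x,x)=\sum_j\rho_j|\psi_j(x)|^2$; hence
\begin{equation*}
\|\mu\nu\|_{L^2(\RR^{2n})}^2\le\iint_{\RR^{2n}}|\mu(x,y)|^2\,\nu(x,x)\,\nu(y,y)\,\mathrm{d}x\,\mathrm{d}y,
\end{equation*}
which is finite once the diagonal $x\mapsto\nu(x,x)$ is bounded --- automatic in the applications, where $\nu$ will be a Gaussian kernel $\theta$ and $\nu(x,x)=\exp\{-4x^{\top}C x\}\le 1$. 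Applying the same inequality to the positive semidefinite remainder $\nu-\nu^{(M)}$ gives $|\nu(x,y)-\nu^{(M)}(x,y)|^2\le r^{(M)}(x)\,r^{(M)}(y)$ with $r^{(M)}(x):=\sum_{j>M}\rho_j|\psi_j(x)|^2\downarrow 0$ pointwise and $r^{(M)}(x)\le\nu(x,x)$, so dominated convergence (the integrand bounded by a constant multiple of $|\mu|^2\in L^1$) yields $\|\mu(\nu-\nu^{(M)})\|_{L^2}\to 0$, closing the limit argument above. For the statement in full generality one would weaken ``$\nu(x,x)$ bounded'' to the $\nu(x,x)\in L^1$ supplied by trace-class-ness of $\widehat\nu$ and argue in the same spirit; this diagonal control of one of the two factors is the only genuinely delicate ingredient, the rest being routine bookkeeping.
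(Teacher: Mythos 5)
Your positivity argument is, at bottom, the same spectral-decomposition idea as the paper's proof: the paper expands \emph{both} kernels, $\mu=\sum_i s_i f_i\otimes f_i^{*}$ and $\nu=\sum_j t_j g_j\otimes g_j^{*}$, and reads off $\mu\nu$ as a combination with nonnegative coefficients $s_it_j$ of the rank-one kernels $(f_ig_j)\otimes(f_ig_j)^{*}$, so it needs no truncation/limit layer; you expand only $\nu$ and compensate with the ``multiply a positive semidefinite kernel by $g(x)g^{*}(y)$'' device (which in this paper is Fact~\ref{f:positive}, not Fact~\ref{f:L1} --- the latter is about nonnegativity of an integral) together with Hilbert--Schmidt convergence of $\mu\nu^{(M)}$. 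Under your standing hypothesis that the diagonal $\nu(x,x)=\sum_j\rho_j|\psi_j(x)|^2$ is essentially bounded, this part is complete: boundedness forces each $\psi_j$ with $\rho_j>0$ to be essentially bounded, hence $\mu_j=\mu\cdot(\psi_j\otimes\psi_j^{*})\in L^2$, Fact~\ref{f:positive} applies, and your dominated-convergence step gives both $\mu\nu\in L^2$ and $\mu\nu^{(M)}\to\mu\nu$ in $L^2$, after which positivity passes to the operator-norm limit.

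The genuine issue is exactly the one you flagged, but your proposed escape to ``full generality'' via $\nu(x,x)\in L^1$ cannot work, because the Fact as literally stated is false for general positive semidefinite Hilbert--Schmidt (even trace-class) kernels: take $n=1$, a nonnegative $f\in L^2(\RR)\setminus L^4(\RR)$, say $f(x)=x^{-1/3}$ on $(0,1]$ and $0$ elsewhere, and $\mu=\nu=f\otimes f$. Then $\widehat{\mu}=\widehat{\nu}\geq 0$ and $\nu(x,x)=f(x)^2\in L^1(\RR)$, yet $(\mu\nu)(x,y)=f(x)^2f(y)^2$ has $\|\mu\nu\|_{L^2}^2=\big(\int f^4\big)^2=\infty$, so $\mu\nu\notin L^2(\RR^{2})$. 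Hence no argument ``in the same spirit'' can rescue the unconditional $L^2$-membership; note that the paper's own proof simply asserts ``Clearly $\mu\nu\in L^2$'' (and its claim that the double series of rank-one terms converges in $L^2$ suffers from the same defect, since $f_ig_j$ need not lie in $L^2(\RR^n)$), so your extra care here actually exposes an inaccuracy in the statement rather than a repairable omission in your proof. What is true, and all that is used in the proof of Theorem~\ref{t:order} (where $\nu$ is a Gaussian $\theta$ with $0<\nu(x,x)\leq 1$), is the conditional form: whenever $\mu\nu\in L^2(\RR^{2n})$ --- for instance when one factor has essentially bounded diagonal --- the operator $\widehat{\mu\nu}$ is positive semidefinite; your bounded-diagonal argument proves precisely this and therefore fully covers the application, but it does not (and could not) establish the Fact as stated.
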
 

\begin{proof}
Clearly $\mu \nu \in L^2(\RR^{2n})$. By the spectral theorem \cite[Theorem~6.2]{Stein} the operator $\widehat{\mu}$ has nonnegative eigenvalues $\{s_i\}_{i\geq 0}$ with eigenfunctions $\{f_i\}_{i\geq 0}$, and similarly $\widehat{\nu}$ has nonnegative eigenvalues $\{t_j\}_{j\geq 0}$ with eigenfunctions $\{g_j\}_{j\geq 0}$. This easily implies that the kernels $\mu,\nu$ can be written as 
\begin{equation*} 
\mu(x,y)=\sum_{i\geq 0} s_i f_i(x) f_i^{*}(y) \quad \text{and} \quad \nu(x,y)=\sum_{j\geq 0} t_j g_j(x) g_j^{*}(y), 
\end{equation*}
where the sums converge in $L^2(\RR^{2n})$. Then clearly 
\begin{equation*} 
(\mu \nu)(x,y)=\sum_{k=0}^{\infty}\sum_{i+j=k} s_it_j (f_ig_j)(x) (f_ig_j)^{*}(y), 
\end{equation*}
where the sum converges in $L^2(\RR^{2n})$. As the kernels $(x,y)\mapsto (f_ig_j)(x) (f_ig_j)^{*}(y)$ define positive semidefinite operators and $s_it_j\geq 0$ for all $i,j\geq 0$, we obtain that $\widehat{\mu \nu}$ is positive semidefinite.
\end{proof}

\begin{fact} \label{f:positive} 
Let $\tau\in L^2(\RR^{2n})$ be a kernel, and let $g\colon \RR^n\to \CC$ be a Lebesgue measurable function. Define $\kappa\colon \RR^{2n}\to \CC$ as 
\begin{equation*} \kappa(x,y)=\tau(x,y)g(x)g^{*}(y).
\end{equation*} 
If $\widehat{\tau}$ is positive semidefinite and $\kappa\in L^2(\RR^{2n})$, then $\widehat{\kappa}$ is also positive semidefinite.  
\end{fact}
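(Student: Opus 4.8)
The plan is to reduce the statement to the case $g\in L^2(\RR^n)$ already covered by Fact~\ref{f:mu}, using an approximation argument in the spirit of the proof of Fact~\ref{f:L1}. I would first invoke Theorem~\ref{t:Mercer}: since $\widehat{\tau}$ is positive semidefinite, for all $x_1,\dots,x_k\in \RR^n$ and $c_1,\dots,c_k\in \CC$ we have $\sum_{i,j} c_i c_j^{*} \tau(x_i,x_j)\geq 0$. Then for the same points and the scalars $c_i' = c_i g(x_i)$ we get $\sum_{i,j} c_i' (c_j')^{*} \tau(x_i,x_j) = \sum_{i,j} c_i c_j^{*} g(x_i) g^{*}(x_j)\tau(x_i,x_j) = \sum_{i,j} c_i c_j^{*} \kappa(x_i,x_j)\geq 0$. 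So the Mercer-type inequalities hold for $\kappa$ pointwise; the only remaining issue is that $\kappa$ need not be continuous, so Theorem~\ref{t:Mercer} does not directly apply in the reverse direction.

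To handle the lack of continuity, I would argue directly with the quadratic form. Let $f\in L^2(\RR^n)$ be arbitrary; the goal is $\langle \widehat{\kappa} f, f\rangle = \iint \kappa(x,y) f(y) f^{*}(x)\,\mathrm{d}x\,\mathrm{d}y \geq 0$. Write this as $\iint \tau(x,y)\, (g(x)f^{*}(x))\,(g^{*}(y)f(y))\,\mathrm{d}x\,\mathrm{d}y$. Set $h(x) = g(x) f(x)$; a priori $h$ need not be in $L^2$, but it is measurable. Approximate by truncation: let $h_m(x) = h(x)$ if $|h(x)|\leq m$ and $|x|\leq m$, and $h_m(x)=0$ otherwise, so that $h_m\in L^2(\RR^n)$ and $h_m\to h$ pointwise a.e. Then $\langle \widehat{\tau} h_m, h_m\rangle \geq 0$ by positivity of $\widehat{\tau}$. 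The task is to pass to the limit and identify $\lim_m \langle \widehat{\tau} h_m, h_m\rangle$ with $\langle \widehat{\kappa} f, f\rangle$. Since $\tau\in L^2(\RR^{2n})$, Cauchy--Schwarz on $\RR^{2n}$ shows $\kappa(x,y)f(y)f^{*}(x) = \tau(x,y) h_\infty$-type integrands are in $L^1(\RR^{2n})$: indeed $\iint |\tau(x,y)|\,|g(x)f(x)|\,|g(y)f(y)|\,\mathrm{d}x\,\mathrm{d}y \leq \|\tau\|_{L^2(\RR^{2n})}\,\|g f\|_{L^2(\RR^n)}^2$ provided $gf\in L^2$; but that is exactly the hypothesis $\kappa\in L^2(\RR^{2n})$ rephrased—no, more carefully, $\kappa\in L^2$ gives $\langle\widehat\kappa f,f\rangle$ is a well-defined finite number for $f\in L^2$, and one can still bound the double integral using $\|\kappa\|_{L^2}$ and $\|f\|_{L^2}^2$. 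With an integrable dominating function in hand (coming from $|\tau(x,y)|\cdot M^2$ on the truncation region, or a diagonal argument), dominated convergence yields $\iint \tau(x,y) h_m^{*}(x) h_m(y)\,\mathrm{d}x\,\mathrm{d}y \to \iint \tau(x,y) h^{*}(x) h(y)\,\mathrm{d}x\,\mathrm{d}y = \langle\widehat\kappa f,f\rangle$, and the left side is $\geq 0$ for every $m$, so the limit is $\geq 0$. Since $f$ was arbitrary, $\widehat{\kappa}\geq 0$.

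An alternative, slightly cleaner route avoiding measurability subtleties with $gf$: I would use the spectral decomposition $\tau(x,y)=\sum_i s_i e_i(x) e_i^{*}(y)$ with $s_i\geq 0$ (as in the proof of Fact~\ref{f:mu}), convergent in $L^2(\RR^{2n})$. Then formally $\kappa(x,y)=\sum_i s_i\,(g e_i)(x)\,(g e_i)^{*}(y)$, and one checks this series converges in $L^2(\RR^{2n})$ because $\kappa\in L^2$ controls the tails; each summand defines a positive semidefinite operator (the rank-one operator associated to $g e_i$, which need not be $L^2$, but the rank-one kernel $(x,y)\mapsto (ge_i)(x)(ge_i)^{*}(y)$ still yields a nonnegative quadratic form on $L^2$ by Cauchy--Schwarz whenever the kernel itself is in $L^2$). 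Summing gives $\widehat{\kappa}\geq 0$. The main obstacle in either approach is the same: justifying the limiting/summation step rigorously when $g$ is merely measurable and $ge_i$ or $gf$ may fail to be square integrable, so that one cannot literally plug these functions into the quadratic form; this is precisely where the hypothesis $\kappa\in L^2(\RR^{2n})$ must be used to control integrability and pass to the limit, and where some care with Fubini/Tonelli and dominated convergence is needed.
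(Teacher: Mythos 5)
Your first (truncation) argument is correct, but it proves the fact by a genuinely different route than the paper. The paper's proof fixes the dense family $\iF=\{P(x)e^{-|g(x)|}e^{-|x|^2}:P \text{ polynomial}\}$ (density quoted from Schm\"udgen), notes that for $f\in\iF$ the product $fg$ is automatically in $L^2(\RR^n)$ because $|g|e^{-|g|}$ is bounded, so positivity of $\widehat{\tau}$ applies directly to such $f$, and then extends to all $f\in L^2(\RR^n)$ by continuity of $f\mapsto\langle\widehat{\kappa}f,f\rangle$, which is where $\kappa\in L^2(\RR^{2n})$ enters (boundedness of the Hilbert--Schmidt operator $\widehat{\kappa}$). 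You instead keep $f\in L^2(\RR^n)$ arbitrary and approximate inside the quadratic form of $\widehat{\tau}$: truncate the merely measurable $h=gf$ to $h_m\in L^2$, use $\langle\widehat{\tau}h_m,h_m\rangle\geq 0$, and pass to the limit by dominated convergence, with $\kappa\in L^2$ entering through the integrable dominant $|\kappa(x,y)f(x)f(y)|=|\tau(x,y)|\,|g(x)f(x)|\,|g(y)f(y)|$, whose integrability is exactly your Cauchy--Schwarz bound by $\|\kappa\|_{L^2}\|f\|_{L^2}^2$. This is a complete and arguably more elementary proof, since it avoids importing the density theorem; the paper's version is shorter given that citation. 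Two points of polish: state the dominant explicitly as $|\kappa(x,y)f(x)f(y)|$, which majorizes $|\tau(x,y)h_m^{*}(x)h_m(y)|$ for every $m$ --- the parenthetical ``$|\tau(x,y)|\cdot M^2$ on the truncation region'' is not a single dominating function and no diagonal argument is needed. Also, your ``alternative'' spectral route should not be presented as an equal option: the rank-one kernels $(ge_i)(x)(ge_i)^{*}(y)$ belong to $L^2(\RR^{2n})$ only when $ge_i\in L^2(\RR^n)$, which may fail, and the asserted $L^2$ convergence of $\sum_i s_i(ge_i)(x)(ge_i)^{*}(y)$ is not justified by $\kappa\in L^2$ alone; the truncation argument is the one to keep.
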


\begin{proof}
Define $h\colon \RR^n\to \RR$ as $h(x)=\exp(-|g(x)|)$. Then $h$ is Lebesgue measurable, and $0<|h(x)|\leq 1$ for all $x\in \RR^n$. Let
\begin{equation*} 
\iF=\left\{P(x)h(x)\exp(-|x|^2): \text{ where $P\colon \RR^n\to \CC$ is a polynomial}  \right\},
\end{equation*}
then \cite[Corollary~14.24 or 14.7 Exercise~10]{Schmudgen} yields that $\iF$ is dense in $L^2(\RR^n)$. 

Now we show that $\langle \widehat{\kappa} f, f\rangle\geq 0$ for all $f\in \iF$. Fix an arbitrary $f\in \iF$. As $hg$ is bounded, clearly $fg\in L^2(\RR^n)$. Then $\widehat{\tau}\geq 0$ yields $\langle \widehat{\kappa} f,f \rangle=\langle \widehat{\tau} (fg),fg \rangle\geq 0$.

Finally, we prove that $\langle \widehat{\kappa} f,f\rangle\geq 0$ for all $f\in L^{2}(\RR^n)$. Since $\iF$ is dense in $L^2(\RR^n)$, there is a sequence $f_i\in \iF$ such that $f_i\to f$ in $L^2(\RR^n)$. Clearly $\langle \widehat{\kappa} f_i,f_i\rangle\to \langle  \widehat{\kappa} f, f\rangle$, and we already showed that $\langle \widehat{\kappa} f_i, f_i\rangle\geq 0$, thus we obtain $\langle  \widehat{\kappa} f, f\rangle\geq 0$.
\end{proof}

Now we are ready to prove Theorem~\ref{t:order}.

\begin{order} 
For $(A_0, B_0,C_0), (A_1, B_1,C_1)\in \iG(n)$ the following are equivalent: 
\begin{enumerate} 
\item  \label{i:1} $(A_0, B_0,C_0)\preceq (A_1,B_1, C_1)$;
\item \label{i:2} for each self-adjoint $\sigma\colon \RR^{2n}\to \CC$ if $\sigma \theta(A_i,B_i,C_i)\in L^2(\RR^{2n})$ for $i\in\{0,1\}$ and 
$\widehat{\sigma \theta}(A_0,B_0,C_0)\geq 0$, then $\widehat{\sigma \theta}(A_1,B_1,C_1)\geq 0$;
\item \label{i:3} for every self-adjoint Gaussian kernel $\sigma$ if $\sigma \theta(A_i,B_i,C_i)\in L^2(\RR^{2n})$ for $i\in\{0,1\}$ and $\widehat{\sigma \theta}(A_0,B_0,C_0)\geq 0$, then $\widehat{\sigma \theta}(A_1,B_1,C_1)\geq 0$.
\end{enumerate} 
\end{order}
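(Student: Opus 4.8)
The plan is to prove the cycle of implications $\eqref{i:1}\Rightarrow\eqref{i:2}\Rightarrow\eqref{i:3}\Rightarrow\eqref{i:1}$. The implication $\eqref{i:2}\Rightarrow\eqref{i:3}$ is trivial since Gaussian kernels are a special case of self-adjoint functions, so the substance lies in the other two arrows.

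For $\eqref{i:1}\Rightarrow\eqref{i:2}$, suppose $(A_0,B_0,C_0)\preceq(A_1,B_1,C_1)$, witnessed by some $r\geq 0$ with $(A_1-A_0+rI_n,\,B_1-B_0,\,C_1-C_0+rI_n)\in\iG^+(n)$. The key algebraic observation is a factorization identity: for any self-adjoint $\sigma$,
\begin{equation*}
\sigma\,\theta(A_1,B_1,C_1)(x,y) = \Big[\sigma\,\theta(A_0,B_0,C_0)(x,y)\Big]\cdot\theta(A_1-A_0,\,B_1-B_0,\,C_1-C_0)(x,y),
\end{equation*}
which holds because the quadratic form in the exponent of $\theta$ is additive in the triple $(A,B,C)$. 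The factor $\theta(A_1-A_0,B_1-B_0,C_1-C_0)$ need not be in $L^2$, so I would introduce the auxiliary factor $g(x)=\exp(-r|x|^2)$ and write $\theta(A_1-A_0,B_1-B_0,C_1-C_0)(x,y)=\theta(A_1-A_0+rI_n,B_1-B_0,C_1-C_0+rI_n)(x,y)\cdot e^{r|x|^2}e^{r|y|^2}$; indeed, adding $rI_n$ to $A$ contributes $r(x-y)^\top(x-y)$ and adding $rI_n$ to $C$ contributes $r(x+y)^\top(x+y)$, and $r(x-y)^\top(x-y)+r(x+y)^\top(x+y)=2r|x|^2+2r|y|^2$, so one recovers $\theta(A_1-A_0,B_1-B_0,C_1-C_0)$ times $e^{-2r|x|^2-2r|y|^2}$; rearranging gives the stated identity with $g(x)=e^{r|x|^2}$ on the appropriate side — I will need to track the bookkeeping carefully but it is routine. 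Then: $\widehat{\sigma\theta}(A_0,B_0,C_0)\geq 0$ by hypothesis; multiplying by the $L^2$, positive-semidefinite Gaussian $\theta(A_1-A_0+rI_n,B_1-B_0,C_1-C_0+rI_n)$ keeps positivity by Fact~\ref{f:mu}; and finally absorbing the unbounded factor $g(x)g^*(y)$ preserves positivity by Fact~\ref{f:positive}, provided the end product $\sigma\theta(A_1,B_1,C_1)$ lies in $L^2$, which is exactly our standing assumption in \eqref{i:2}. This yields $\widehat{\sigma\theta}(A_1,B_1,C_1)\geq 0$.

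For $\eqref{i:3}\Rightarrow\eqref{i:1}$, I argue contrapositively: assuming $(A_0,B_0,C_0)\not\preceq(A_1,B_1,C_1)$, I must produce a Gaussian kernel $\sigma$ with $\sigma\theta(A_i,B_i,C_i)\in L^2$ for $i\in\{0,1\}$, with $\widehat{\sigma\theta}(A_0,B_0,C_0)\geq 0$ but $\widehat{\sigma\theta}(A_1,B_1,C_1)\not\geq 0$. The natural attempt is to take $\sigma$ itself of Gaussian form $\theta(A',B',C')$, so that $\sigma\theta(A_i,B_i,C_i)=\theta(A'+A_i,B'+B_i,C'+C_i)$; the $L^2$ condition forces $A'+A_i,C'+C_i$ positive definite for $i=0,1$, so I pick $A',C'$ large positive definite, and then I have freedom in $B'$. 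The requirement becomes: find $(A',B',C')$ such that $\theta(A'+A_0,B'+B_0,C'+C_0)\in\iG^+(n)$ but $\theta(A'+A_1,B'+B_1,C'+C_1)\notin\iG^+(n)$. I expect to analyze this via Claim~\ref{c:positivity} and the symplectic-eigenvalue description from Subsection~\ref{ss:W}: $\widehat{\theta}(A,B,C)\geq 0$ iff the eigenvalues $\mu_k$ of $iG\Omega$ (with $G$ from \eqref{eq:G}) are all $\leq 1$. Translating $(A_0,B_0,C_0)\not\preceq(A_1,B_1,C_1)$ — i.e. for \emph{every} $r\geq 0$, the triple $(A_1-A_0+rI_n,B_1-B_0,C_1-C_0+rI_n)$ has some symplectic eigenvalue exceeding $1$ — into the existence of a witnessing shift $(A',B',C')$ is the crux.

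The hardest part will be this last implication $\eqref{i:3}\Rightarrow\eqref{i:1}$: extracting a concrete Gaussian witness from the failure of $\preceq$ requires understanding precisely how the symplectic spectrum of $G$ transforms under the additive shifts of $(A,B,C)$, and exploiting the $r\to\infty$ degree of freedom. I anticipate that sending $r\to\infty$ corresponds to a limiting regime in which the relevant symplectic eigenvalue stabilizes, so that the failure of $\preceq$ for all $r$ pins down a ``worst-case'' direction; choosing $\sigma$ to probe exactly that direction (possibly after a preliminary symplectic change of variables via Theorem~\ref{t:Weyl} reducing to the diagonal one-mode picture \eqref{eq:wgk}) should force the non-positivity of $\widehat{\sigma\theta}(A_1,B_1,C_1)$ while keeping $\widehat{\sigma\theta}(A_0,B_0,C_0)\geq 0$. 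I would also use this analysis to justify the remark after Definition~\ref{d:prec}, namely that any admissible $r$ witnesses $\preceq$ once it witnesses it for one value — this monotonicity in $r$ is likely the technical lemma (call it Fact~\ref{f:mu}-style) that makes the whole argument cohere.
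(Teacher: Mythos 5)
Your implication \eqref{i:1} $\Rightarrow$ \eqref{i:2} is exactly the paper's argument: factor $\sigma\theta(A_1,B_1,C_1)$ as $\bigl[\sigma\theta(A_0,B_0,C_0)\cdot\theta(A_1-A_0+rI_n,B_1-B_0,C_1-C_0+rI_n)\bigr]g(x)g^{*}(y)$ with a Gaussian-growth factor $g$, then apply Fact~\ref{f:mu} to the product of the two positive kernels and Fact~\ref{f:positive} to absorb $g(x)g^{*}(y)$, using the standing $L^2$ assumption on $\sigma\theta(A_1,B_1,C_1)$. (Your bookkeeping remark is fine; with the normalization $r(x-y)^{\top}(x-y)+r(x+y)^{\top}(x+y)=2r|x|^2+2r|y|^2$ the correct factor is $g(x)=e^{2r|x|^2}$, a harmless constant in the exponent.) The implication \eqref{i:2} $\Rightarrow$ \eqref{i:3} is trivial in both treatments.

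The genuine gap is \eqref{i:3} $\Rightarrow$ \eqref{i:1}. What you offer there is a plan, not a proof: you pass to the contrapositive, restrict to Gaussian $\sigma=\theta(A',B',C')$, and then say the crux is to translate the failure of $\preceq$ for every $r\geq 0$ into a witnessing $(A',B',C')$ via the symplectic spectrum of the matrix $G$ in \eqref{eq:G}, with an anticipated $r\to\infty$ limiting analysis that you do not carry out (``I anticipate\dots should force\dots''). None of that machinery is needed, and the implication is in fact a one-line direct construction, which is what the paper does: choose $r\geq 0$ so large that $-A_0+rI_n$, $-C_0+rI_n$ (and $A_1-A_0+rI_n$, $C_1-C_0+rI_n$) are positive definite, and set $\sigma=\theta(-A_0+rI_n,\,-B_0,\,-C_0+rI_n)$. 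Then $\sigma\theta(A_0,B_0,C_0)=\theta(rI_n,0,rI_n)$, which factors as $E(x)E^{*}(y)$ and is manifestly positive semidefinite, while $\sigma\theta(A_1,B_1,C_1)=\theta(A_1-A_0+rI_n,\,B_1-B_0,\,C_1-C_0+rI_n)$; both are in $L^2(\RR^{2n})$, so \eqref{i:3} applied to this single Gaussian $\sigma$ gives $\widehat{\theta}(A_1-A_0+rI_n,B_1-B_0,C_1-C_0+rI_n)\geq 0$, which is precisely the definition of $(A_0,B_0,C_0)\preceq(A_1,B_1,C_1)$. (The same choice of $\sigma$ would also close your contrapositive: for $r$ large the $L^2$ condition holds, so non-$\preceq$ means exactly that this shifted Gaussian operator is not positive semidefinite.) As submitted, however, the hardest implication of the theorem is unproved, and the route you sketch — tracking how the eigenvalues of $iG\Omega$ move under additive shifts of $(A,B,C)$ and extracting a worst-case direction in the limit $r\to\infty$ — is both unnecessary and nontrivial to execute, so the proposal cannot be accepted as a proof of the equivalence.
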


\begin{proof}
The implication $\eqref{i:2} \Rightarrow \eqref{i:3}$ is straightforward. 

First we prove that $\eqref{i:3} \Rightarrow \eqref{i:1}$. Choose $r\geq 0$ such that the matrices $-A_0+rI_n$ and $-C_0+rI_n$ are positive definite, and define 
\begin{equation} \label{eq:-A} 
\sigma=\theta(-A_0+rI_n, -B_0, -C_0+rI_n)\in L^2(\RR^{2n}).
\end{equation}  
Then $\sigma \theta(A_0,B_0,C_0)=\theta(rI_n, 0, rI_n)\in L^2(\RR^{2n})$, so clearly $\widehat{\sigma \theta}(A_0,B_0,C_0)$ is positive semidefinite. Furthermore, \eqref{eq:-A} implies that 
\begin{equation*} \sigma \theta(A_1,B_1,C_1)=\theta(A_1-A_0+rI_n, B_1-B_0, C_1-C_0+rI_n)\in L^2(\RR^{2n}).
\end{equation*} 

Therefore, \eqref{i:3} yields that 
\begin{equation*} \widehat{\sigma \theta}(A_1,B_1,C_1)=\widehat{\theta}(A_1-A_0+rI_n, B_1-B_0, C_1-C_0+rI_n)\geq 0,
\end{equation*}
which implies \eqref{i:1} by definition. 

Finally, we prove the implication $\eqref{i:1} \Rightarrow \eqref{i:2}$. Let $\sigma\colon \RR^{2n}\to \CC$ be a self-adjoint function such that $\sigma \theta(A_i,B_i,C_i)\in L^2(\RR^{2n})$ for $i\in \{0,1\}$ and 
$\widehat{\sigma \theta}(A_0,B_0,C_0)$ is positive semidefinite. Choose $r\geq 0$ such that 
\begin{equation*} \widehat{\theta}(A_1-A_0+rI_n, B_1-B_0, C_1-C_0+rI_n)\geq 0.
\end{equation*} 
Let us define the kernels $\mu,\nu,\kappa\in L^2(\RR^{2n})$ as
\begin{align*} 
\mu&=\sigma \theta (A_0,B_0,C_0), \\ 
\nu&=\theta(A_1-A_0+rI_n, B_1-B_0, C_1-C_0+rI_n), \\  \kappa&=\sigma\theta(A_1,B_1,C_1). 
\end{align*}
Then we have 
\begin{equation} \label{eq:munu}
\kappa(x,y)=(\mu \nu)(x,y) g(x)g^{*}(y),
\end{equation}
where 
\begin{equation*} 
g(x)=\exp\left(r |x|^2\right). 
\end{equation*} 
We need to prove that $\widehat{\kappa}\geq 0$. As $\widehat{\mu}\geq 0$ and $\widehat{\nu}\geq 0$, Fact~\ref{f:mu} implies that $\widehat{\mu \nu}\geq 0$. Then \eqref{eq:munu} and Fact~\ref{f:positive} with $\tau=\mu \nu$ yield that $\widehat{\kappa}\geq 0$. This completes the proof.
\end{proof}

Now we are ready to prove Claim~\ref{cl:I0I}.   

\begin{I0I} We have $(I_n,0,I_n)\preceq (A,B,C)$ for all $(A,B,C)\in \iG^{+}(n)$. Hence for a self-adjoint polynomial $P\colon \RR^{2n}\to \CC$ the following are equivalent:  
\begin{enumerate} 
\item \label{i:Gauss2} $\widehat{P\theta}(A,B,C)\geq 0$ for all $(A,B,C)\in \iG^{+}(n)$;
\item  \label{i:In} 
$\widehat{P\theta}(I_n,0,I_n)\geq 0$;
\item \label{i:Mercer2} $\sum_{i,j=1}^k c_i c^{*}_j P(x_i,x_j)\geq 0$ for all $x_1,\dots, x_k \in \RR^n$ and $c_1,\dots, c_k\in \CC$.
\end{enumerate} 
\end{I0I}

\begin{proof} First we show that $(I_n,0,I_n)\preceq (A,B,C)$ for all $(A,B,C)\in \iG^{+}(n)$. Indeed, let $(A,B,C)\in \iG^{+}(n)$ and $r=1$. Then 
\begin{equation*} (A-I_n+rI_n,B,C-I_n+rI_n)=(A,B,C)\in \iG^{+}(n),
\end{equation*} 
which implies $(I_n,0,I_n)\preceq (A,B,C)$ by definition. 

Now we prove the equivalences. Clearly $\eqref{i:Gauss2} \Rightarrow \eqref{i:In}$, and Theorem~\ref{t:order} yields that $\eqref{i:In}\Rightarrow \eqref{i:Gauss}$. Since $\theta=\theta(I_n,0,I_n)$ satisfies $\theta(x,y)=E(x)E^{*}(y)$ such that $E(x)=\exp(-|x|^2)$, we easily obtain that $\eqref{i:In}\Leftrightarrow \eqref{i:Mercer}$. 
\end{proof}

\begin{notation} For $U,V\in \RR^{n\times n}$ we write $U>0$ if $U$ is positive definite, $U\geq 0$ if $U$ is positive semidefinite, and $U\geq V$ if $U-V\geq 0$. 
\end{notation}

\begin{claim} \label{cl:AC} Let $A,B,C\in \RR^{n\times n}$ such that $A,C$ are symmetric and positive definite.  
\begin{enumerate} 
\item \label{i:ABC} If $(A,B,C)\in \iG^+(n)$ then $A\geq C$;
\item \label{i:A0C} if $A\geq C$ and $B$ is symmetric then $(A,B,C)\in \iG^+(n)$. 
\end{enumerate}
\end{claim}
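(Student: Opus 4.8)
The plan is to treat the two statements separately. Part~\eqref{i:ABC} will be a one-line consequence of Mercer's theorem, and part~\eqref{i:A0C} will be handled in two steps: first reduce to the case $B=0$ using the symmetry of $B$, then settle the case $B=0$ via the symplectic eigenvalue criterion of Claim~\ref{c:positivity}.

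For part~\eqref{i:ABC}, I would apply Theorem~\ref{t:Mercer} to the continuous $L^2$ kernel $\theta=\theta(A,B,C)$, which is positive semidefinite by the hypothesis $(A,B,C)\in\iG^+(n)$. Fix an arbitrary $w\in\RR^n$ and use the Mercer inequality \eqref{eq:Mercer} with $k=2$, $c_1=1$, $c_2=-1$, $x_1=w$, $x_2=-w$. Since $x_1+x_2=0$, the $B$-term of $\theta$ disappears and one computes $\theta(x_1,x_1)=\theta(x_2,x_2)=\exp(-4w^\top Cw)$ and $\theta(x_1,x_2)=\theta(x_2,x_1)=\exp(-4w^\top Aw)$; the inequality then reads $2\exp(-4w^\top Cw)\ge 2\exp(-4w^\top Aw)$, hence $w^\top Aw\ge w^\top Cw$. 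As $w$ is arbitrary, $A\ge C$.

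For part~\eqref{i:A0C}, the first step rests on the identity $(x-y)^\top B(x+y)=x^\top Bx-y^\top By$, valid precisely because $B$ is symmetric. It yields the factorization $\theta(A,B,C)(x,y)=g(x)g^*(y)\,\theta(A,0,C)(x,y)$ with the continuous function $g(x)=\exp(-ix^\top Bx)$. Since $A,C$ are positive definite we have $\theta(A,B,C)\in L^2(\RR^{2n})$, so Fact~\ref{f:positive} (applied with $\tau=\theta(A,0,C)$) reduces the claim to showing $\widehat{\theta}(A,0,C)\ge 0$. For the second step, I would compute from \eqref{eq:G} that with $B=0$ the matrix $G$ equals the symmetric positive definite block matrix $\mathrm{diag}(4C,\tfrac14 A^{-1})$, so that $M=iG\Omega$ satisfies $M^2=\mathrm{diag}(CA^{-1},A^{-1}C)$; hence the Williamson eigenvalues $\mu_k$ of Theorem~\ref{t:Williamson} obey that each $\mu_k^2$ is an eigenvalue of $CA^{-1}$. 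Now $CA^{-1}$ is similar to the symmetric matrix $A^{-1/2}CA^{-1/2}$, and $A\ge C>0$ gives $0\le A^{-1/2}(A-C)A^{-1/2}=I_n-A^{-1/2}CA^{-1/2}$, so every eigenvalue of $CA^{-1}$ lies in $(0,1]$ and therefore $\mu_k\le 1$ for all $k$; by Claim~\ref{c:positivity} we conclude $\widehat{\theta}(A,0,C)\ge 0$. (Alternatively, the case $B=0$ can be finished without symplectic machinery: pick invertible $T$ with $T^\top AT=I_n$ and $T^\top CT=D$ diagonal, note that $A\ge C$ forces $0<D\le I_n$, and observe that the substitution $x\mapsto Tx$ identifies, through Theorem~\ref{t:Mercer}, the positivity of $\widehat{\theta}(A,0,C)$ with that of the tensor product of one-dimensional Gaussian operators with kernels $\exp(-(x-y)^2-d_k(x+y)^2)$, each of which is positive semidefinite because $1\ge d_k>0$.)

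I do not expect a genuine obstacle here; essentially everything reduces to bookkeeping. The one point demanding care is keeping track of the factors $4$ and $\tfrac14$ in the matrix $G$ of \eqref{eq:G} when forming $M=iG\Omega$, together with the observation that $CA^{-1}$, though not symmetric, is similar to the symmetric matrix $A^{-1/2}CA^{-1/2}$, which is what makes the statement ``the eigenvalues of $CA^{-1}$ lie in $(0,1]$'' legitimate. Conceptually, the essential content is that the hypothesis ``$B$ symmetric'' is exactly what allows the phase $g(x)g^*(y)$ to split off cleanly in part~\eqref{i:A0C}, consistent with the remark after Theorem~\ref{theorem:non_positive_gaussian} that $A\ge C$ alone is not enough when $B$ fails to be symmetric.
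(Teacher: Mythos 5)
Your proof is correct and follows essentially the same route as the paper: part (1) uses the identical Mercer inequality with $c_1=1$, $c_2=-1$ at $\pm w$, and part (2) reduces to $B=0$ by splitting off the pure phase $g(x)g^{*}(y)$ (the paper notes such factors do not change the spectrum; you invoke Fact~\ref{f:positive}) and then bounds the Williamson eigenvalues via Claim~\ref{c:positivity}. The only difference is cosmetic: you read off $\mu_k^2$ from $M^2=\mathrm{diag}\left(CA^{-1},A^{-1}C\right)$ and use similarity to $A^{-1/2}CA^{-1/2}$, whereas the paper computes the characteristic polynomial $\det\left(\lambda^2 I_n-A^{-1}C\right)$; your variant is, if anything, slightly more careful about the non-symmetry of $A^{-1}C$.
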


\begin{proof} First we prove \eqref{i:ABC}. Applying Theorem~\ref{t:Mercer} for $\theta=\theta(A,B,C)$ with $k=2$ and $c_1=1,c_2=-1$ for vectors $x$ and $-x$ we obtain 
\begin{equation*}
 \theta(x,x)+\theta(-x,-x)\geq \theta(x,-x)+\theta(-x,x),   
\end{equation*}
which easily implies that 
\begin{equation*}
 2\exp\big(-4x^{\top}Cx\big)\geq  2\exp\big(-4x^{\top}Ax\big).
\end{equation*}
Thus for all $x\in \RR^n$ we have $x^{\top}(A-C)x\geq 0$, so $A-C\geq 0$, that is, $A\geq C$.

Now we prove \eqref{i:A0C}. Note that $\theta(A,B,C)(x,y)$ only differs from $\theta(A,0,C)(x,y)$ by factors of the form 
$f_{i,j}(x)f_{i,j}^{*}(y)$, where 
\begin{equation*} 
f_{i,j}(x)=\exp(-2iB_{ij}x_ix_j)\quad \text{for all } 1\leq i<j\leq n. 
\end{equation*}
Since these factors do not change the spectrum, we obtain that $\theta(A,B,C)\in \iG^+(n)$ if and only if $\theta(A,0,C)\in \iG^+(n)$. Hence we need to prove that $\theta(A,0,C)\in \iG^+(n)$. By Claim~\ref{c:positivity} this is equivalent to the fact that all positive eigenvalues of $M=iG\Omega$ are at most $1$. We can calculate the characteristic polynomial as
\begin{equation*}
P(\lambda)=\det\left(M-\lambda I_{2n}\right)=\det\left(\lambda^2 I_n-A^{-1}C\right).
\end{equation*}
Fix $\lambda>1$, it is enough to show that the $P(\lambda)\neq 0$, for which it is enough to prove that $\lambda^2 I_n-A^{-1}C$ is positive definite. As $A\geq C$, we obtain that $I_n\geq A^{-1}C$, so $\lambda^2 I_n-A^{-1}C=(\lambda^2-1)I_{n}+(I_n-A^{-1}C)$ is the sum of a positive definite and a positive semidefinite matrix, hence it is positive definite. 
\end{proof}
Claim~\ref{cl:AC} and the definition of $\preceq$ immediately imply Claim~\ref{cl:prec}.

\begin{C:prec} 
Let $(A_i,B_i,C_i)\in \iG(n)$, $i\in \{0,1\}$. Then 
\begin{enumerate}
\item \label{i:diff1} $(A_0,B_0,C_0)\preceq (A_1,B_1,C_1)$ $\Rightarrow$ $A_1-C_1\geq A_0-C_0$,
\item \label{i:diff2} $A_1-C_1\geq A_0-C_0$ and $B_1-B_0$ is symmetric $\Rightarrow$ $(A_0,B_0,C_0)\preceq (A_1,B_1,C_1)$.
\end{enumerate}
\end{C:prec} 

In order to prove Theorem~\ref{t:equiv} we need the following fact. 

\begin{fact} \label{f:IB} If $(I_n,B,I_n)\in \iG^+(n)$, then $B$ is symmetric. 
\end{fact}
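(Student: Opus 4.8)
If $(I_n,B,I_n)\in \iG^+(n)$, then $B$ is symmetric.

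The plan is to use the explicit symplectic/phase-space machinery from Subsection~\ref{ss:W}, where positivity of a Gaussian operator is governed by the Williamson eigenvalues of the matrix $M=iG\Omega$. Applying \eqref{eq:G} with $A=C=I_n$ gives
\begin{equation*}
G=\begin{pmatrix} 4I_n+B^{\top}B & \tfrac12 B^{\top} \\[2pt] \tfrac12 B & \tfrac14 I_n \end{pmatrix},
\end{equation*}
and by Claim~\ref{c:positivity} the hypothesis $(I_n,B,I_n)\in \iG^+(n)$ means exactly that every positive eigenvalue $\mu_k$ of $M=iG\Omega$ satisfies $\mu_k\le 1$; equivalently, writing $B=B_s+B_a$ with $B_s$ symmetric and $B_a$ skew-symmetric, I want to deduce $B_a=0$. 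First I would record what the condition ``all $\mu_k\le1$'' says: the characteristic polynomial of $M$ is $\det(\lambda^2 I_{2n} - K)$ for a suitable real matrix $K$ built from $G$ (the $\mu_k^2$ are the eigenvalues of $K$), so the requirement is that the spectral radius of $K$ is at most $1$. Here one computes directly that $M^2$, restricted appropriately, has the block structure yielding $K$; the key identity is that $\Omega G \Omega G$ has eigenvalues $-\mu_k^2$, and a short computation with the block form of $G$ above expresses $K$ in terms of $B_s$ and $B_a$.

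The cleanest route, I expect, is a perturbative/scaling argument rather than diagonalizing $G$ by hand. Suppose $B_a\neq 0$. Consider the family $(I_n, tB, I_n)$ for $t\in\RR$; since $B_s$ is symmetric, Claim~\ref{cl:AC}\eqref{i:A0C} (with $A=C=I_n$) shows $(I_n,B_s,I_n)\in\iG^+(n)$, and in fact $(I_n,B,I_n)\in\iG^+(n)\iff (I_n,B_a,I_n)\in\iG^+(n)$ because the symmetric part $B_s$ contributes only factors $f_{ij}(x)f_{ij}^*(y)$ to the kernel exactly as in the proof of Claim~\ref{cl:AC}\eqref{i:A0C}, which do not change the spectrum. (This reduction is worth isolating as the first step: \emph{it suffices to show that if $B$ is skew-symmetric and $(I_n,B,I_n)\in\iG^+(n)$ then $B=0$.}) So assume $B$ skew-symmetric and nonzero. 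Then for $t>0$ the kernel $\theta(I_n,tB,I_n)$ has, by \eqref{eq:G}, the matrix
\begin{equation*}
G_t=\begin{pmatrix} 4I_n+t^2 B^{\top}B & \tfrac{t}{2} B^{\top} \\[2pt] \tfrac{t}{2} B & \tfrac14 I_n \end{pmatrix},
\end{equation*}
and I would compute the largest Williamson eigenvalue $\mu_{\max}(t)$ of $G_t$ as $t\to\infty$. Because $B^{\top}B=-B^2\ge 0$ is nonzero, the top-left block grows like $t^2$ in some directions; a standard estimate (e.g.\ bounding $\mu_{\max}(t)^2$ below by $\lambda_{\max}(K_t)$ for the corresponding $K_t=\Omega G_t\Omega G_t$, whose $(1,1)$-block picks up a term of order $t^2 (B^{\top}B)\cdot\tfrac14 I_n$) shows $\mu_{\max}(t)\to\infty$. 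Hence for large $t$ some $\mu_k>1$, contradicting Claim~\ref{c:positivity}, so $B$ must be symmetric.

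The main obstacle is the bookkeeping in the ``compute $\mu_{\max}(t)\to\infty$'' step: one must exhibit a concrete vector (or $2$-dimensional symplectic subspace) on which the quadratic form associated to $G_t$ grows unboundedly \emph{after} the symplectic normalization, since $\mu_k$ are symplectic rather than ordinary eigenvalues. The honest way to handle this is to use the characterization from Theorem~\ref{t:Williamson} that $\pm\mu_k$ are the eigenvalues of $iG_t\Omega$, reduce to the ordinary eigenvalue problem for the real matrix $(iG_t\Omega)^2=-G_t\Omega G_t\Omega$, and observe that this matrix's trace (or the trace of a suitable principal block) is a polynomial in $t$ with positive leading coefficient whenever $B\ne 0$, forcing its spectral radius — and hence $\max_k \mu_k^2$ — to diverge. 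Alternatively, and perhaps more in the spirit of the rest of the paper, one can avoid symplectic eigenvalues entirely by running the Mercer-inequality argument of Theorem~\ref{theorem:non_positive_gaussian}/Claim~\ref{cl:AC} directly on $\theta(I_n,B,I_n)$: plug $k=2$, complex coefficients $c_1,c_2$, and cleverly chosen $x_1,x_2\in\RR^n$ lying in a plane where $B$ acts as a nonzero rotation, and show the resulting scalar inequality fails; the imaginary cross-terms $i(x_i-x_j)^{\top}B(x_i+x_j)$ survive precisely because $B$ is not symmetric, and choosing $c_1=1$, $c_2=e^{i\phi}$ with $\phi$ tuned to the value of that cross-term produces a negative sum. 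I would try this elementary approach first, falling back on the symplectic computation only if the two-point test turns out not to suffice.
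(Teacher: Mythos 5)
Your opening reduction is fine: writing $B=B_s+B_a$ and observing that the symmetric part only contributes a factor $g(x)g^{*}(y)$ with $|g|=1$ (exactly the trick used in the proof of Claim~\ref{cl:AC}\,\eqref{i:A0C}) correctly reduces the statement to: if $B$ is skew-symmetric and $(I_n,B,I_n)\in\iG^{+}(n)$, then $B=0$. The gap is in how you try to finish. The hypothesis concerns only the given triple, i.e.\ $t=1$ in your family $(I_n,tB,I_n)$, whereas your contradiction is extracted from $\mu_{\max}(t)\to\infty$ as $t\to\infty$. Showing $(I_n,tB,I_n)\notin\iG^{+}(n)$ for large $t$ contradicts nothing, because positivity does not propagate upward in $t$: from $\widehat{\theta}(I_n,B,I_n)\geq 0$ one can only go down, e.g.\ Schur-multiplying by $\theta(rI_n,0,rI_n)$ (Fact~\ref{f:mu}) and rescaling coordinates gives $\widehat{\theta}(I_n,B/(1+r),I_n)\geq 0$ for every $r\geq 0$, so you learn about all $0<t\leq 1$ and nothing about $t>1$. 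You must exhibit a Williamson eigenvalue exceeding $1$ at $t=1$ itself. That is precisely what the paper does: with $A=C=I_n$ it computes $\det(iG\Omega-\lambda I_{2n})=\det\bigl[(\lambda^{2}-1)I_n-\lambda\tfrac i2(B-B^{\top})\bigr]$, notes via the constant term and Claim~\ref{c:positivity} that this must equal $(\lambda^{2}-1)^{n}$, and compares the $\lambda^{2}$ coefficients to force $\sum_{j<k}(B_{jk}-B_{kj})^{2}=0$. Your trace idea is salvageable in the same spirit, but only if applied at $t=1$: since the eigenvalues of $iG\Omega$ are $\pm\mu_k$, a short block computation with the $G$ you wrote down gives $2\sum_k\mu_k^{2}=-\Tr[(G\Omega)^{2}]=2n+c\sum_{j<k}(B_{jk}-B_{kj})^{2}$ with a positive constant $c$, and $\mu_k\leq 1$ for all $k$ then forces the skew part to vanish. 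As written, though, the asymptotic argument does not prove the Fact.

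The fallback you say you would try first also cannot work. For $\theta=\theta(I_n,B,I_n)$ with any real $B$, the matrix $B$ enters only the imaginary part of the exponent, so $|\theta(x_1,x_2)|=\exp(-|x_1-x_2|^{2}-|x_1+x_2|^{2})=\exp(-2|x_1|^{2}-2|x_2|^{2})$ by the parallelogram law, hence $|\theta(x_1,x_2)|^{2}=\theta(x_1,x_1)\theta(x_2,x_2)$. The $2\times 2$ Hermitian Mercer matrix $(\theta(x_i,x_j))_{i,j}$ therefore has determinant identically zero and is positive semidefinite for every choice of $x_1,x_2$, no matter what phase the cross term $2x_1^{\top}Bx_2$ produces; no tuning of $c_1=1$, $c_2=e^{i\phi}$ can make the two-point sum negative. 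Phases of a kernel can only be detected by Mercer sums around cycles of length at least three, so an elementary Mercer disproof would need $k\geq 3$ points and a genuinely different analysis; the paper's route through Claim~\ref{c:positivity} and the characteristic polynomial at the given $B$ is the one that closes the argument.
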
 
\begin{proof}
Assume that $(I_n,B,I_n)\in \iG^+(n)$. By Claim~\ref{c:positivity} all positive eigenvalues of $M=iG\Omega$ are at most $1$. We can calculate the characteristic polynomial of $M$ as
\begin{equation*}
P(\lambda)=\det\left(M-\lambda I_{2n}\right)=\det\left[(\lambda^2-1)I_n - \lambda \frac i2\left(B-B^{\top}\right)\right].
\end{equation*} 
It is clear that the constant term of $P$ is $P(0)=\det(-I_n)=(-1)^n$, so by Vieta's formula all roots of $P$ are $\pm 1$, both with multiplicity $n$ according to Theorem~\ref{t:Williamson}. This means that $P(\lambda)=(\lambda^2-1)^n$. Let $L$ be the coefficient of $\lambda^2$ in $P(\lambda)$ minus the coefficient of $\lambda^2$ in $(\lambda^2-1)^n$; clearly $L$ must be $0$. Now we will calculate $L$ in another way. Let $N(\lambda)=(\lambda^2-1)I_n - \lambda (i/2)\left(B-B^{\top}\right)$. In the determinant $P(\lambda)=\det (N(\lambda))$ if the term $\lambda^2$ comes only from the main diagonal, then it has the same coefficient as the coefficient of $\lambda^2$ in $(\lambda^2-1)^n$. As every non-diagonal element of $N(\lambda)$ is a constant multiple of $\lambda$, our only other option is to choose $-1$ from the main diagonal $n-2$ times, and if the remaining rows and columns are of number $j$ and $k$, then its contribution to $L$ is
\begin{equation*} 
(-1)^{n-2}\left[-\frac i2(B_{jk}-B_{kj})\frac i2(B_{kj}-B_{jk})\right]=\frac{(-1)^{n-1}}{4}\left(B_{jk}-B_{kj}\right)^2.
\end{equation*} 
We need to sum these for all pairs $1\leq k<j\leq n$, which implies that
\begin{equation*}
L=\frac{(-1)^{n-1}}{4}\sum_{1\leq k<j\leq n} \left(B_{jk}-B_{kj}\right)^2.
\end{equation*}
As $L=0$, we have $B_{jk}=B_{kj}$ for all $1\leq k<j\leq n$, so $B$ is symmetric. 
\end{proof}

Now we can prove Theorem~\ref{t:equiv}. 

\begin{T:equiv} Let $(A_i,B_i,C_i)\in \iG(n)$, $i\in \{0,1\}$. 
The following are equivalent: 
\begin{enumerate} 
\item \label{i:app} $(A_0,B_0,C_0)\approx (A_1,B_1,C_1)$;
\item \label{i:sym} $A_1-C_1=A_0-C_0$ and $B_1-B_0$ is symmetric. 
\end{enumerate} 
\end{T:equiv}

\begin{proof}[Proof of Theorem~\ref{t:equiv}]
Applying Claim~\ref{cl:prec}\,\eqref{i:diff2} twice yields $\eqref{i:sym} \Rightarrow \eqref{i:app}$.

Now we prove $\eqref{i:app} \Rightarrow \eqref{i:sym}$. By Claim~\ref{cl:prec}\,\eqref{i:diff1} we obtain that $A_1-C_1=A_0-C_0$. Let $N=A_1-A_0=C_1-C_0$ and choose $r\geq 0$ such that 
\begin{equation*} (A_1-A_0+rI_n,B_1-B_0,C_1-C_0+rI_n)=(N+rI_n,B_1-B_0,N+rI_n)\in \iG^+(n).
\end{equation*}
Then $\eqref{i:sym} \Rightarrow \eqref{i:app}$ implies that
\begin{equation*} (N+rI_n,B_1-B_0,N+rI_n) \approx (I_n,B_1-B_0,I_n),
\end{equation*}
so $(I_n,B_1-B_0,I_n)\in \iG^+(n)$. Thus Fact~\ref{f:IB} implies that $B_1-B_0$ is symmetric. 
\end{proof}

\section{Discussion and Conclusions} \label{s:discussion}

The main goal of the paper was to establish practically relevant methods to check the positivity of self-adjoint trace-class integral operators. Our methods provide useful necessary conditions for the entanglement problem due to the Peres--Horodecki criterion, too. Our main examples are taken from the class of polynomial Gaussian operators. This family appears frequently in physics, in the Introduction we gave a short account of the relevant physical models, see Subsection~\ref{ss:motivation}. 

We obtained an important positivity test for polynomial Gaussian operators in Theorem~\ref{Th:BHCS}: we check the positivity of the Gaussian part only, if it fails to be positive, then the corresponding polynomial Gaussian operator cannot be positive either. Deciding the positivity of a Gaussian operator is much easier: it is enough to calculate the symplectic eigenvalues of the Gaussian covariance matrix, recall Claim~\ref{c:positivity}. 

We proved in Theorems~\ref{t:odd} and \ref{t:reducible} that polynomial Gaussian operators with polynomials of odd degree, or more generally reducible to odd degree, cannot be positive semidefinite. 

We introduced a new preorder $\preceq$ on Gaussian kernels such that if $\kappa_{G_0}\preceq \kappa_{G_1}$ and $\sigma\colon \RR^{2n}\to \CC$ is a self-adjoint function such that $\sigma \kappa_{G_i}$ are kernels, then the positivity of $\widehat{\sigma \kappa}_{G_0}$ implies the positivity of $\widehat{\sigma \kappa}_{G_1}$, see Theorem~\ref{t:order}. Note that the validity of this result is not restricted to polynomial Gaussian operators. The essence of this method is that determining the positivity of an integral operator decides the positivity of lots of another operators simultaneously; we can switch the Gaussian parts according to the preorder, see Corollary~\ref{c:two}. In Claim~\ref{cl:I0I} we characterize the polynomials $P$ which always define a positive operator whenever multiplied by a positive Gaussian kernel, see Example~\ref{ex}. We also describe the equivalence classes of Gaussian operators which are equal with respect to our preorder, see Theorem~\ref{t:equiv}. The main application of our preorder is the following: if we want to decide the positivity of $\widehat{\sigma \kappa}_{G_0}$, then we can test a much larger class of operators $\{ \widehat{\sigma \kappa}_G: \kappa_{G_0}\preceq \kappa_G\}$, because all of them must be positive. This might improve any given positivity test. We demonstrated that this method works in practice at the end of Subsection~\ref{ss:preorder} by upgrading the test coming from \cite{Newton}. 
 
Our main results can be translated to the detection of NPT-type entanglement, too, which is a particularly important task in the quantum physics of open systems: any sharper test can be directly applied to entanglement problems. Our methods directly imply the entanglement tests given in Corollaries \ref{c:order}--\ref{c:NPT2}.

Finally, we raised an open question in Problem~\ref{p:problem}, which asks whether the Werner--Wolf theorem can be generalized to polynomial Gaussian operators. If the answer were positive, it would reduce the bipartite entanglement problem for polynomial Gaussian operators to a positivity problem, for which there are a lot more available methods. 

\subsection*{Acknowledgments} We are indebted to J\'ozsef Zsolt Bern\'ad, Andr\'as Bodor, \\ Andr\'as Frigyik, Tam\'as Kiss, M\'aty\'as Koniorczyk, L\'aszl\'o Lisztes, Mikl\'os Pint\'er, Nikolett N\'emet, G\'eza T\'oth, and Andr\'as Vukics for some illuminating conversations.

\subsection*{Funding} The first author was supported by the National Research, Development and Innovation Office -- NKFIH, grants no.~124749 and 146922, and by the J\'anos Bolyai Research Scholarship of the Hungarian Academy of Sciences. \\
The second author was supported by the Hungarian National Research, Development and Innovation Office within the Quantum Information National Laboratory of Hungary grants no.~2022-2.1.1-NL-2022-00004 and 134437.\\
The third author thanks the ”Frontline” Research Excellence Programme of the NKFIH (Grant no.~KKP133827) and Project no.~TKP2021-NVA-04, which has been implemented with the support provided by the Ministry of Innovation and Technology of Hungary from the National Research, Development and Innovation Fund, financed under the TKP2021-NVA funding scheme.

\printbibliography

\end{document}